\PassOptionsToPackage{nosumlimits,nonamelimits}{amsmath}
\PassOptionsToPackage{colorlinks,linkcolor={blue},citecolor={blue},urlcolor={red},breaklinks=true,final}{hyperref}

\documentclass[acmsmall,nonacm]{acmart}

\usepackage{ifthen}
\newboolean{arxiv}
\setboolean{arxiv}{true} %
\newcommand{\ifarx}[2]{\ifthenelse{\boolean{arxiv}}{#1}{#2}}

  %
  %
  %
  %

%
\bibliographystyle{ACM-Reference-Format}

\hyphenation{trans-for-mers}

\usepackage{stel-common}

\providecommand{\catname}{\mathbf} 
\providecommand{\clsname}{\mathcal}
\providecommand{\oname}[1]{{\operatorname{\mathsf{#1}}}}

\def\defcatname#1{\expandafter\def\csname B#1\endcsname{\catname{#1}}}
\def\defcatnames#1{\ifx#1\defcatnames\else\defcatname#1\expandafter\defcatnames\fi}
\defcatnames ABCDEFGHIJKLMNOPQRSTUVWXYZ\defcatnames

\def\defclsname#1{\expandafter\def\csname C#1\endcsname{\clsname{#1}}}
\def\defclsnames#1{\ifx#1\defclsnames\else\defclsname#1\expandafter\defclsnames\fi}
\defclsnames ABCDEFGHIJKLMNOPQRSTUVWXYZ\defclsnames

\def\defbbname#1{\expandafter\def\csname BB#1\endcsname{{\bm{\mathsf{#1}}}}}
\def\defbbnames#1{\ifx#1\defbbnames\else\defbbname#1\expandafter\defbbnames\fi}
\defbbnames ABCDEFGHIJKLMNOPQRSTUVWXYZ\defbbnames

\def\Set{\catname{Set}}

\DeclareOldFontCommand{\bf}{\normalfont\bfseries}{\mathbf}
\providecommand{\Id}{\operatorname{Id}}

\providecommand{\id}{\mathsf{id}}
\providecommand{\op}{\mathsf{op}}
\providecommand{\comp}{\mathbin{\circ}}

\providecommand{\xto}[1]{\,\xrightarrow{#1}\,}

\providecommand{\To}{\mathrel{\Rightarrow}}			           %

\providecommand{\dar}{\kern-1.2pt\operatorname{\downarrow}}	
\providecommand{\uar}{\kern-1.2pt\operatorname{\uparrow}}

\providecommand{\brks}[1]{\langle #1\rangle}

\usepackage{stmaryrd}

\providecommand{\lsem}{\llbracket}
\providecommand{\rsem}{\rrbracket}
\providecommand{\sem}[1]{\lsem #1 \rsem}

\providecommand{\pacman}[1]{}					                     %

\newcommand{\undefine}[1]{\let #1\relax}					                       %

\providecommand{\mone}{{\text{\kern.5pt\rmfamily-}\mathsf{\kern-.5pt1}}}

\providecommand{\smin}{\smallsetminus}

\makeatletter
\def\mfix#1{\oname{#1}\@ifnextchar\bgroup\@mfix{}}	       %
\def\@mfix#1{#1\@ifnextchar\bgroup\mfix{}}			           %
\makeatother

\providecommand{\case}[3]{\mfix{case}{\mathbin{}#1}{of}{#2}{\kern-1pt;}{\mathbin{}#3}}

\input{additional-macros.tex}

\let\oldcheckmark\checkmark
\renewcommand{\checkmark}{\raisebox{-4pt}{\scalebox{1.2}[.65]{$\oldcheckmark$}}}

\usepackage[utf8]{inputenc}

\usepackage{hypcap}
\setcounter{tocdepth}{2}

\usepackage{proof}
\usepackage{microtype}
\usepackage[strict]{changepage}

\usepackage{enumitem}
\setlist[enumerate,1]{label=(\arabic*),font=\normalfont,align=left,leftmargin=0pt,labelindent=0pt,listparindent=\parindent,labelwidth=0pt,itemindent=!,topsep=2pt,parsep=0pt,itemsep=2pt,start=1}
\setlist[enumerate,2]{label=(\alph*),font=\normalfont,labelindent=*,leftmargin=*,start=1}
\setlist[itemize]{labelindent=*,leftmargin=*}
\setlist[description]{labelindent=*,leftmargin=*,itemindent=-1 em}

\usepackage{ifdraft}

\ifdraft{
  \usepackage[draft]{showlabels}

  \usepackage[footnote,marginclue,nomargin,draft]{fixme}

  \usepackage[displaymath,mathlines,switch]{lineno}
	\linenumbers

	\usepackage{etoolbox} %

	\newcommand*\linenomathpatch[1]{%
		\cspreto{#1}{\linenomath}%
		\cspreto{#1*}{\linenomath}%
		\csappto{end#1}{\endlinenomath}%
		\csappto{end#1*}{\endlinenomath}%
	}

	\linenomathpatch{equation}
	\linenomathpatch{gather}
	\linenomathpatch{multline}
	\linenomathpatch{align}
	\linenomathpatch{alignat}
	\linenomathpatch{flalign}
}{
 \usepackage[layout=footnote,final]{fixme}
}

\FXRegisterAuthor{hu}{ahu}{HU} %
\FXRegisterAuthor{sm}{asm}{SM} %
\FXRegisterAuthor{st}{ast}{ST} %
\FXRegisterAuthor{ls}{als}{LS} %
\FXRegisterAuthor{sg}{asg}{SG} %
\FXRegisterAuthor{as}{aas}{AS} %

\renewcommand{\comp}{\cdot}

\newcommand{\seqcomp}{\mathop{;}}
\newcommand{\ret}{\mathsf{ret}}

\let\cedilla\c
\renewcommand{\c}{\colon}
\newcommand{\term}{1} %

\makeatletter
        \newcommand{\pushright}[1]{\ifmeasuring@#1\else\omit\hfill$\displaystyle#1$\fi\ignorespaces}
        \newcommand{\pushleft}[1]{\ifmeasuring@#1\else\omit$\displaystyle#1$\hfill\fi\ignorespaces}
\makeatother

\usepackage{seqsplit}
\usepackage{xstring}
\usepackage{xcolor}

\tikzstyle{shiftarr}=[
rounded corners,%
to path={--([#1]\tikztostart.center)
  -- ([#1]\tikztotarget.center) \tikztonodes
  -- (\tikztotarget)},
]

\tikzset{
  commutative diagrams/.cd,
  arrow style=tikz,
  diagrams={>=stealth},
  row sep=large,
  column sep = huge
}

\tikzset{cong/.style={draw=none,edge node={node [sloped, allow upside down, auto=false]{$\cong$}}},
  iso/.style={draw=none,every to/.append style={edge node={node [sloped, allow upside down, auto=false]{$\cong$}}}}}

\usetikzlibrary{decorations.pathmorphing}

\tikzcdset{scale cd/.style={every label/.append style={scale=#1},
    cells={nodes={scale=#1}}}}

\usepackage{stackengine}
\stackMath
\newcommand\tsup[2][2]{%
  \def\useanchorwidth{T}%
  \ifnum#1>1%
    \stackon[-1.05ex]{\tsup[\numexpr#1-1\relax]{#2}}{\scalebox{2}[1]{$\mathchar"307E$}\kern-.5pt}%
  \else%
    \stackon[-.9ex]{#2}{\scalebox{2}[1]{$\mathchar"307E$}\kern-.5pt}%
  \fi%
}

\usepackage{xspace}
\newcommand{\isos}{stateful SOS\xspace} %

\newcommand{\goes}[2]{\ensuremath{#1 \rightarrow #2}}
 \newcommand{\rets}[2]{\ensuremath{#1 \downarrow #2}}

\newcommand{\bN}{\mathbb{N}}
\newcommand{\bZ}{\mathbb{Z}}
\newcommand{\expr}{\mathsf{Ex}}
\newcommand{\Progs}{\mathsf{P}}

\DeclareFontFamily{U}{mathc}{}
\DeclareFontShape{U}{mathc}{m}{it}{<->s*[1.03] mathc10}{}
\DeclareMathAlphabet{\morph}{U}{mathc}{m}{it}

\newcommand{\Tr}{\mathsf{R}}
\newcommand{\Co}{\mathsf{W}}

\renewcommand{\O}{\mathcal{O}}
\newcommand{\Trs}{{\textsf{r}}}
\newcommand{\Cos}{{\textsf{w}}}

\newcommand{\while}{\ensuremath{\mathsf{Imp}}\xspace}
\newcommand{\whiletwo}{\ensuremath{\mathsf{Imp}^{\mathsf{2}}}\xspace}
\newcommand{\reflang}{\ensuremath{\mathsf{Ref}^{\mathsf{2}}}\xspace}

\newcommand{\store}{{\mathcal{S}}}
\newcommand{\impcat}{\Set^2}  %
\newcommand{\twoc}{2} %

\newcommand{\Rel}{\mathbf{Rel}}

\newcommand{\Sig}{\Theta}
\newcommand{\Sigs}{\Theta^\star}
\newcommand{\mS}{\mu\Theta}

\newcommand{\run}[2]{[#2]_{#1}}

\newcommand{\ctx}{\mathrm{ctx}}
\newcommand{\res}{\mathsf{res}}
\renewcommand{\trc}{\mathsf{trc}}
\newcommand{\scn}{\mathsf{cst}}

\newcommand{\trm}{\mathsf{ter}}

\newcommand{\act}{\mathsf{act}}
\newcommand{\pas}{\mathsf{pas}}

\newcommand{\Var}{\mathcal{V}}

\newcommand{\ass}{\mathrel{\coloneqq}}

\def\monoto{\rightarrowtail}

\usepackage{etoolbox} %

\theoremstyle{plain}

\theoremstyle{definition}
\newtheorem{defn}[theorem]{Definition} %
\newtheorem{rem}[theorem]{Remark} %

\theoremstyle{definition}
\newtheorem{notation}[theorem]{Notation}

\usepackage{scalerel,stackengine,graphicx}

\newcommand{\mybar}[3]{%
  \mathrlap{\hspace{#2}\overline{\scalebox{#1}[1]{\phantom{\ensuremath{#3}}}}}\ensuremath{#3}
}

\newcommand{\barF}{\mybar{0.6}{1.65pt}{F}}
\newcommand{\barf}{\ol{\f}}
\newcommand{\barB}{\mybar{0.6}{1.65pt}{B}}
\newcommand{\barD}{\mybar{0.6}{1.65pt}{D}}

\newcommand{\barPow}{\mybar{0.7}{1.5pt}{\Pow}}
\newcommand{\barSigmas}{\mybar{0.9}{0pt}{\Sigmas}}

\newcommand{\fset}{{\mathbb{F}}}

\setpremisesend{2pt}
\begin{document}\allowdisplaybreaks
\title{Bialgebraic Reasoning on Stateful Languages}

\author{Sergey Goncharov}
\orcid{0000-0001-6924-8766}             %
\affiliation{
  \department{School of Computer Science}              %
  \institution{University of Birmingham} %
  \city{Birmingham}
  \country{UK}                    %
}
\email{s.goncharov@bham.ac.uk}          %

\author{Stefan Milius}
\orcid{0000-0002-2021-1644}             %
\affiliation{
  \institution{Friedrich-Alexander-Universität Erlangen-Nürnberg}            %
  \country{Germany}                    %
}
\email{stefan.milius@fau.de}          %

\author{Lutz Schröder}
\orcid{0000-0002-3146-5906}             %
\affiliation{
  \institution{Friedrich-Alexander-Universität Erlangen-Nürnberg}            %
  \country{Germany}                    %
}
\email{lutz.schroeder@fau.de}          %

\author{Stelios Tsampas}
\orcid{0000-0001-8981-2328}             %
\affiliation{
  \institution{University of Southern Denmark (SDU)}            %
  \city{Odense}
  \country{Denmark}                    %
}
\email{stelios@imada.sdu.dk}          %

\author{Henning Urbat}
\orcid{0000-0002-3265-7168}             %
\affiliation{
  \institution{Friedrich-Alexander-Universität Erlangen-Nürnberg}            %
  \city{Erlangen}
  \country{Germany}                    %
}
\email{henning.urbat@fau.de}          %

\begin{abstract}
  Reasoning about program equivalence in imperative languages is
  notoriously challenging, as the presence of states (in the form of variable stores)
  fundamentally increases the observational power of program
  terms. The key desideratum for any notion of equivalence is
  \emph{compositionality}, guaranteeing that subprograms can be
  safely replaced by equivalent subprograms regardless of the
  context. To facilitate compositionality proofs and avoid boilerplate
  work, one would hope to employ the abstract bialgebraic methods
  provided by Turi and Plotkin's powerful theory of \emph{mathematical
  operational semantics} (a.k.a.~\emph{abstract GSOS}) or its recent extension by
  Goncharov et al.\ to higher-order languages. However, multiple
  attempts to apply abstract GSOS to stateful languages have
  thus failed. We propose a novel approach to the operational
  semantics of stateful languages based on the formal distinction
  between \emph{readers} (terms that expect an initial input
  store before being executed), and \emph{writers} (running terms that have already
  been provided with a store). In contrast to earlier work, this
  style of semantics is fully compatible with abstract GSOS, and we
  can thus leverage the existing theory to obtain coinductive reasoning
  techniques. We demonstrate that our approach generates
  non-trivial compositionality results for stateful languages with
  first-order and higher-order store and that it flexibly applies to
  program equivalences at different levels of granularity, such as
  trace, cost, and natural equivalence.
\end{abstract}

\begin{CCSXML}
  <ccs2012>
  <concept>
  <concept_id>10003752.10010124.10010131.10010137</concept_id>
  <concept_desc>Theory of computation~Categorical semantics</concept_desc>
  <concept_significance>500</concept_significance>
  </concept>
  <concept>
  <concept_id>10003752.10010124.10010131.10010134</concept_id>
  <concept_desc>Theory of computation~Operational semantics</concept_desc>
  <concept_significance>500</concept_significance>
  </concept>
  </ccs2012>
\end{CCSXML}

\ccsdesc[500]{Theory of computation~Categorical semantics}
\ccsdesc[500]{Theory of computation~Operational semantics}

\keywords{Abstract GSOS, Imperative languages, Stateful languages}

\maketitle %

\section{Introduction}\label{sec:intro}

Reasoning about program equivalence in stateful languages is a
long-standing topic of interest in formal
semantics with a sizeable body of work dedicated to it, e.g.~\cite{TennentGhica00,Ahmed04,BirkedalReusEtAl11,TuronThamsborgEtAl13,SterlingGratzerEtAl22,JungKrebbersEtAl16,AhmanFournetEtAl17,BentonHur09,AhmedBentonEtAl08,MasonTalcott91,KoutavasWand06,AhmedBentonEtAl10,OHearnReynoldsEtAl01,Reynolds02,Vafeiadis11,Harper94,Levy01,PittsStark98,YoshidaEA08}.
The level of attention the problem has received is due on the one hand
to the fact that almost every popular programming language has some
features involving mutable state, and on the other hand to the
technical challenges posed by the design of formal reasoning methods
for stateful languages, in particular in presence of higher-order
store (e.g.~\cite{AbelAllaisEtAl19,TimanyStefanescoEtAl18}).
Efficient reasoning about program equivalence is largely a question of
\emph{compositionality}, which allows exchanging subprograms of a
given program for equivalent ones without affecting the overall meaning
of the whole program. The well-known framework of \emph{mathematical
  operational semantics} by~\citet{TuriPlotkin97}, also known as
\emph{abstract GSOS}, derives a generic setup for the operational semantics of programming and process languages from a form of
categorical distributive law of syntax over behaviour (dubbed
a \emph{GSOS law}). It produces in particular a syntax-based
\emph{operational} model and a more abstract \emph{denotational}
model, and guarantees compositionality of the syntax with respect to the
denotational semantics, thereby obviating the need for laborious
ad-hoc compositionality proofs. Consequently, one would hope for abstract GSOS to provide a general and principled approach to establishing compositionality results for stateful languages.

However, there are known problems with applying abstract GSOS to a
stateful setting: A naive modelling of stateful languages in the framework
yields a denotational semantics in a final coalgebra for a functor
involving the state monad. The character of such a semantics is that
of a \emph{resumption semantics}, which assumes arbitrary interference
by the environment between program steps and as such is too
fine-grained for most purposes (cf.\ \Cref{sec:challenge}). There have been two major attempts to
address this issue.
\citet{Abou-SalehPattinson11,Abou-SalehPattinson13} consider abstract
GSOS laws over Kleisli categories for effect monads (such as the state
monad). While the format of GSOS law is the same as in the original
framework, the denotational semantics and the associated
compositionality result need to be completely reworked for the
purpose, and need complex technical assumptions. The coarsest form of
semantics covered in their framework is \emph{cost
  semantics}, which counts steps until termination. Contrastingly, the
framework of \emph{stateful SOS} introduced by
\citet{GoncharovMiliusEtAl22} does also cover \emph{trace semantics}, which considers the sequence of states computed by a program, and most notably \emph{termination semantics} (also known as~\emph{natural semantics}), the standard
end-to-end net execution semantics that only records the eventual
result of a program. However, stateful SOS deviates from abstract
GSOS even in the underlying abstraction of the rule format, replacing
GSOS laws with \emph{stateful SOS laws}, thus again requiring a
dedicated redevelopment of the framework to obtain compositionality results. Moreover, the compositionality proofs in \emph{op.~cit.} are not inherently categorical and
are based on syntactic reasoning at the level of rule formats.

In the present work, we propose a refined perspective on stateful languages %
that fully reconciles stateful operational semantics with abstract GSOS. The key novelty of our approach lies in an explicit division of program terms into \emph{readers} (programs that expect an initial store to start their execution) and \emph{writers} (programs that have already been given a store). For example, a standard
imperative while-language with variable store can be extended to a reader-writer language where every sequential composition $p\seqcomp q$ is regarded as a reader term and comes accompanied with a writer term $[p]_s\seqcomp q$ for every store $s$, with $[p]_s$ representing the program $p$ executed on $s$.

The reader-writer separation, along with the explicit representation of states in the writer syntax, enables the instantiation of abstract GSOS theory: We demonstrate that reader-writer languages can be modelled in abstract GSOS over two-sorted sets,
obtaining various flavours of stateful semantics (resumption, trace, cost, and termination semantics) as
instances. In fact, our approach fits smoothly into a
recent generalization of abstract GSOS due to~\citet{UrbatTsampasEtAl23b} that allows coverage of weak
equivalences such as weak bisimilarity
using a laxification of the core concept of
bialgebra by~\citet{BonchiPetrisanEtAl15}. It
subsumes, inter alia, the previous compositionality results for stateful SOS~\cite{GoncharovMiliusEtAl22} in the sense that
every language specification in stateful SOS can be transformed into
a two-sorted GSOS law in a semantics-preserving way.

 We use a standard
while-language as our running example.
In a further, more advanced case study, we adapt the idea of reader-writer semantics to the
recently introduced framework of \emph{higher-order mathematical
  operational
  semantics}~\cite{GoncharovMiliusEtAl23,UrbatTsampasEtAl23b}, which
extends abstract GSOS to higher-order languages. We then apply this abstract
framework to a first-order language with higher-order store, similar
to languages considered by~\citet{ReusStreicher05} and by
\citet{Pierce02}, obtaining compositionality of higher-order termination semantics.

\section{Bialgebraic Modelling of Stateful Languages}\label{sec:imperative-naive}

In this preliminary section we review the bialgebraic methods that the abstract GSOS framework~\cite{TuriPlotkin97} provides for reasoning about compositionality properties of programming languages. Moreover, we illustrate that stateful languages do not fit well into the framework when modelled naively, which motivates the refined approach presented subsequently in \Cref{sec:imperative}. As our running example we consider a simple imperative language, called $\while$~\cite{Winskel93}, with integer variables, assignments, sequential composition, and while loops.

\subsection{The language \while}\label{sec:while}
We fix a countably infinite set  $\mathcal{A}$ of \emph{(program)
    variables}, and a set $\expr$ of arithmetic expressions that are formed using
  standard operations such as $+,-,*$, constants $n\in \bZ$, and
  variables. A \emph{variable store} (or \emph{state}) is a map
  $s\colon \mathcal{A}\to\bZ$ assigning integer values to
  variables; we write $s[x\ass  n]$ for the store that maps $x$ to $n$ and otherwise equals $s$. We denote by $\store$ the set of
  states, and by ${\oname{eev}\c \expr\times \store \to \bZ}$ the \emph{expression evaluation}
  map; for example, $\oname{eev}(x*(y+3)-y,s) = s(x)*(s(y)+3)-s(y)$.

The set $\Progs$ of program terms of \while is given by the grammar
\[    \Progs\owns p, q \Coloneqq\;\mathsf{skip}\mid x\ass e \mid \mathsf{while}\;e\;p %
      \mid p \seqcomp q
\]
where $x\in \mathcal{A}$ and $e\in \expr$. The (small-step) operational semantics of \while are given by the inductive rules in \Cref{fig:rules-while}. The rules specify transitions of the form
\[ p,s\to p',s'  \qqand p,s\downarrow s'\qquad (p,p'\in\Progs,\, s,s'\in \S)\]
stating that when the program $p$ is executed on store $s$, it transforms $s$ into $s'$ and then behaves like~$p'$ (first case) or terminates (second case).
\begin{figure}
  \begin{gather*}
    \inference{p,s\to p',s'}{p \seqcomp q,s
      \to p' \seqcomp q,s'}
    \qquad
    \inference{p,s\downarrow s'}{p \seqcomp q, s
      \to q,s'}
    \qquad
    \inference{}{\mathsf{skip},s \downarrow s}
    \\[1ex]
    \inference{\oname{eev}(e,s) =
      n}{x \ass  e,s \downarrow s[x \ass  n]}%
    \qquad
    \inference{\oname{eev}(e,s) = 0}{\mathsf{while}~e~p,s \downarrow s}
    \qquad
    \inference{\oname{eev}(e,s) \not = 0}
    {\mathsf{while}~e~p,s \to
      p\seqcomp \mathsf{while}~e~p,s
    }
  \end{gather*}
  \caption{Small-step operational semantics of \while.}
  \label{fig:rules-while}
\end{figure}
These transitions are deterministic, hence induce a map
\begin{equation}\label{eq:while-opmodel} \gamma\colon \Progs \to (\Progs\times \store+\store)^\store
\qquad\text{given by}\qquad \gamma(p)(s)=
\begin{cases}
(p',s') & \text{if } p,s\to p',s',\\
s' & \text{if } p,s\downarrow s'.
\end{cases}
 \end{equation}
There are several natural notions of program behaviour and program equivalence
that can be associated to the transition system~\eqref{eq:while-opmodel}. Specifically, we consider the following four notions, presented in decreasing order w.r.t.~the power of an observer (environment) that executes programs:

\begin{enumerate}
\item \emph{Resumption semantics} corresponds to a low-level observer that has unrestricted access to the current program state and can both {read} and {modify} it at any stage of the computation. Under this semantics, two programs are equivalent if they are indistinguishable regardless of how the observer interferes. This idea is formally captured by \emph{resumption bisimilarity}:
\begin{defn}\label{def:res-bis}
A \emph{resumption bisimulation} is a relation $R\seq \Progs\times \Progs$ such that for $R(p,q)$ and $s\in \store$,
\begin{enumerate}\item if $p,s\to p',s'$ then $\exists q'.\, q,s\to q',s' \wedge R(p',q')$;
\item if $p,s\downarrow s'$ then $q,s\downarrow s'$;
\item if $q,s\to q',s'$ then $\exists p'.\, p,s\to p',s' \wedge R(p',q')$;
\item if $q,s\downarrow s'$ then $p,s\downarrow s'$.
\end{enumerate}
\emph{Resumption bisimilarity $\sim_\res$} is the greatest resumption bisimulation (given by the union of all resumption bisimulations). It forms an equivalence relation, with the corresponding quotient map \[\res\colon \Progs\to \Progs/{\sim_\res}.\]
\end{defn}
Resumption bisimilarity yields a very fine-grained notion of program equivalence.
For instance,
\begin{equation}\label{eq:pq} p=(x\ass 1 \seqcomp x\ass 2) \qqand q= (x\ass 1\seqcomp x\ass x+1)\end{equation}
 are not bisimilar: if the observer sets the program state to any state $s$ with $s(x)\neq 1$ after the first step, then after the second step the states become different, hence distinguishable.

 \item \emph{Trace semantics} corresponds to an observer that can read but not modify intermediate states. Formally, let $\store^{+}$ and $\store^\omega$ denote the sets of non-empty finite sequences and infinite sequences of elements of $\store$, respectively, and put $\store^\infty = \store^{+}\cup \store^\omega$. The \emph{trace map}
$\trc\colon \Progs\to (\store^\infty)^\S$
associates to $p\in \Progs$ and $s\in \store$ the sequence of states produced by running $p$ on the initial state $s$:
\[ \trc(p)(s)=
\begin{cases}
(s_1, \ldots, s_n,s')\in \store^{+} & \text{if}\;\;\exists p_i\; (1\leq i\leq n).\; p,s\to p_1,s_1 \to \cdots \to p_{n},s_{n}\downarrow s', \\
(s_1,s_2,s_3,\cdots)\in \store^\omega  & \text{if}\;\; \exists p_i\; (i\geq 1).\;\;\;\;\;\;\; p,s\to p_1,s_1 \to p_2,s_2\to p_3,s_3\to \cdots.
\end{cases}
\]
\item \emph{Cost semantics} corresponds to an observer which can observe the fact that a computation step has happened (hence count the total number of steps) but can only access the program state after termination. This is captured by the following map $\scn\colon \Progs\to (\bN\times \S + 1)^\S$ , where $1=\{*\}$:
\[
\scn(p)(s) = \begin{cases}
(n,s') & \text{if } \trc(p)(s) = (s_1,\ldots, s_n,s')\in \S^{+},\\
\ast & \text{if } \trc(p)(s) \in \S^\omega.
\end{cases}
\]
\item \emph{Termination semantics}, a.k.a.~\emph{natural semantics}, corresponds to an observer that can only view the terminating state (if any) of a program, while having no access to the actual computation. This is the standard form of semantics for imperative languages~\cite{Winskel93}, and is captured by
$\trm\colon \Progs\to (\S + 1)^\S$:
\[
\trm(p)(s) = \begin{cases}
s' & \text{if } \trc(p)(s) = (s_1,\ldots, s_n,s')\in \S^{+},\\
\ast & \text{if } \trc(p)(s) \in \S^\omega.
\end{cases}
\]
\end{enumerate}
Two programs $p$ and $q$ are \emph{trace equivalent} if $\trc(p)=\trc(q)$, \emph{cost equivalent} if $\scn(p)=\scn(q)$ and \emph{termination equivalent} if $\trm(p)=\trm(q)$. Note that
\[ \text{resumption bisimilarity} \To \text{trace equivalence} \To \text{cost equivalence} \To \text{termination equivalence}. \]
For example, $p$ and $q$ of \eqref{eq:pq} are trace equivalent, hence also cost and termination equivalent.

\begin{remark}
The definitions of $\res$, $\trc$, $\scn$ and $\trm$ and the ensuing notions of semantic equivalence generalize to arbitrary transition systems of type
$\gamma\colon X\to (X\times \S + \S)^\S$ where, as above, we put
\[p,s\to p',s' \quad\text{if}\quad \gamma(p)(s)=(p',s')\qqand p,s\downarrow s' \quad\text{if}\quad \gamma(p)(s)=s'.\]
\end{remark}
The four forms of semantics are \emph{compositional}, that is, they are respected by all \while-constructors:

\begin{theorem}[Compositionality of \while]\label{thm:comp-while}
For all $p,p',q,q'\in \Progs$, $e\in \mathsf{Ex}$ and $\sem{-}\in \{ \res,\trc,\scn,\trm \}$,
\[
\sem{p} = \sem{p'} \wedge \sem{q} = \sem{q'} \implies \sem{p \seqcomp q}=\sem{p' \seqcomp q'}\qand \sem{p} = \sem{p'} \implies \sem{\mathsf{while}~e~p} = \sem{\mathsf{while}~e~p'}.
\]
\end{theorem}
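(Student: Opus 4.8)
The plan is to prove separately, for each $\sem{-}\in\{\res,\trc,\scn,\trm\}$, that the induced program equivalence is a congruence with respect to the two nontrivial \while-constructors $\seqcomp$ and $\mathsf{while}\,e\,(-)$; the constructors $\mathsf{skip}$ and $x\ass e$ are nullary and hence respected automatically. Since trace, cost, and termination equivalence are strictly coarser than resumption bisimilarity, the four cases need to be treated independently.

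For resumption bisimilarity I would use the standard bisimulation-up-to-congruence technique. Let $\hatR$ be the least congruence on $\Progs$ (w.r.t.\ all \while-operations) containing ${\sim_\res}$. The claim is that $\hatR$ is itself a resumption bisimulation, proved by induction on the derivation of $(t,t')\in\hatR$: the base case $(t,t')\in{\sim_\res}$ is immediate, and among the closure steps only $\seqcomp$ and $\mathsf{while}\,e\,(-)$ require attention. For $(p\seqcomp q,\, p'\seqcomp q')$ with $(p,p'),(q,q')\in\hatR$ and a store $s$: a step $p,s\to p_1,s_1$ lifts to $p\seqcomp q,s\to p_1\seqcomp q,s_1$ and is matched, using the induction hypothesis for $(p,p')$, by $p',s\to p_1',s_1$ with $(p_1,p_1')\in\hatR$; then $(p_1\seqcomp q,\, p_1'\seqcomp q')\in\hatR$ by the $\seqcomp$-closure. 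A termination $p,s\downarrow s'$ lifts to $p\seqcomp q,s\to q,s'$, is matched by $p',s\downarrow s'$, hence $p'\seqcomp q',s\to q',s'$, and $(q,q')\in\hatR$. The pair $(\mathsf{while}\,e\,p,\, \mathsf{while}\,e\,p')$ is handled symmetrically via the two while-rules, with the successor $(p\seqcomp\mathsf{while}\,e\,p,\, p'\seqcomp\mathsf{while}\,e\,p')$ again lying in $\hatR$ by the closure properties. Maximality of ${\sim_\res}$ then gives $\hatR\seq{\sim_\res}$, so $\hatR={\sim_\res}$ is a congruence. (Alternatively, the resumption case can be obtained as an instance of the standard abstract GSOS compositionality theorem applied to the evident GSOS law over $\Set$ for the behaviour functor $BX=(X\times\S+\S)^\S$ that captures the rules of \Cref{fig:rules-while}.)

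For the remaining three semantics I would instead derive explicit \emph{decomposition laws} by unwinding \Cref{fig:rules-while}. For sequential composition, running $p\seqcomp q$ on a store $s$ amounts to running $p$ on $s$ and, once (and if) that terminates in some state $s'$, continuing with $q$ started on $s'$; hence $\trc(p\seqcomp q)(s)$ equals $\trc(p)(s)$ when the latter is infinite and otherwise the concatenation of $\trc(p)(s)$ with $\trc(q)(s')$ (where $s'$ is the last state of $\trc(p)(s)$), and analogously $\scn(p\seqcomp q)$ sums the two step counts (plus one for the hand-over step), while $\trm(p\seqcomp q)(s)=\trm(q)(\trm(p)(s))$ (with $*$ propagated). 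For while loops, the rules show that $\sem{\mathsf{while}\,e\,p}(s)$ is the $\sem{p}$-driven iteration that alternately tests $\oname{eev}(e,-)$ and applies (the behaviour of) $p$: it terminates precisely when this iteration reaches a state on which $e$ evaluates to $0$, accumulating the trace or cost along the way, and diverges otherwise. In each case the resulting description of $\sem{p\seqcomp q}$ and $\sem{\mathsf{while}\,e\,p}$ mentions only $\sem{p}$, $\sem{q}$ and $e$, which yields compositionality.

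The main obstacle is the while case for the coarser semantics, since it is an iterative rather than a one-step construction and its description has to be shown well-defined and to coincide with the operationally defined $\sem{\mathsf{while}\,e\,p}$. For traces one must check that the concatenation of the (possibly infinitely many) successive passes through the loop body is a genuine element of $\S^+\cup\S^\omega$, using that each pass contributes at least one state. For termination semantics the naive recursive equation $g(s)=g(\trm(p)(s))$ (in the case $\oname{eev}(e,s)\neq 0$ and $\trm(p)(s)\in\S$) admits many solutions — e.g.\ any $g$ when $\trm(p)(s)=s$, as for $\mathsf{while}\,1\,\mathsf{skip}$ — so one cannot simply invoke uniqueness of fixpoints; instead $\sem{\mathsf{while}\,e\,p}$ must be pinned down as the least fixpoint w.r.t.\ the flat order with $*$ at the bottom, equivalently by the explicit ``iterate until $\oname{eev}(e,-)=0$, else $*$'' formula, and verified against the operational semantics by analysing terminating and diverging runs separately. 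This bespoke, per-semantics reasoning is exactly what the bialgebraic framework developed in the rest of the paper is meant to replace.
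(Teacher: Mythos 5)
Your proposal is correct in outline, but it takes a genuinely different route from the paper. The paper never proves \Cref{thm:comp-while} by direct reasoning on \while: the resumption case is obtained in \Cref{sec:challenge} as an instance of the abstract GSOS congruence theorem (\Cref{thm:congruence-abstract-gsos}), and the trace, cost and termination cases are obtained only in \Cref{sec:imperative}, by passing to the two-sorted reader--writer refinement \whiletwo, proving compositionality there (\Cref{thm:whiletwo-cong}, via up-closedness of the lifting, liftability of the GSOS law, and the lax-bialgebra condition) and transferring back along the semantics-preservation result (\Cref{thm:semantics-pres}). You instead give the ``from scratch'' argument the paper explicitly sets out to avoid: a congruence-closure (bisimulation-up-to) coinduction for $\sim_\res$, and, for the coarser semantics, explicit decomposition laws --- concatenation of traces and summation of step counts for $p\seqcomp q$, and an iterative characterization of $\sem{\mathsf{while}~e~p}$ in terms of $\sem{p}$ and $e$, guarded-corecursive for traces and least-fixpoint for termination. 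Your ingredients are right: the sequential laws are exactly as you state; each loop unfolding emits the state $s$, which gives the productivity needed for uniqueness of the trace equation; and your observation that the naive termination equation has many solutions and must be pinned down as the least fixpoint is precisely the relevant subtlety. The resumption argument (the closure of $\sim_\res$ under the operations is itself a resumption bisimulation, by induction on the closure derivation) is also sound. What your approach buys is elementarity and self-containedness; what it costs is exactly what the paper aims to eliminate: the while case needs a bespoke well-definedness-and-adequacy verification for each of the three coarse semantics, and the whole proof is tied to this one language, whereas the paper's detour through \whiletwo makes the conclusion an instance of a generic theorem that scales unchanged to all cool stateful SOS specifications (\Cref{thm:cong-l}) and, with the higher-order machinery, to languages with higher-order store. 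Note only that at the two places you yourself flag (coincidence of the iterative description of the loop with the operational $\trc$, $\scn$, $\trm$) your text is a plan rather than a finished proof; those verifications are routine but would still have to be written out.
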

The importance of this theorem is that it enables modular, inductive reasoning about complex programs: compositionality guarantees that the observable behaviour of a program with respect to either of the different semantics is fully determined by the behaviour of its subterms. Standardly, compositionality results such as \Cref{thm:comp-while} are proved from scratch, using inductive or coinductive methods tailored to the features of the language at hand. This approach has several drawbacks: compositionality proofs tend to be tedious and error-prone, while at the same time containing many largely generic and repetitive parts. Additionally, every extension or modification of the language requires a meticulous adaptation of the proof. Our goal is to derive compositionality of stateful languages in a more principled manner by employing categorical techniques.

\subsection{Categorical Background}
\label{sec:catback}
The bialgebraic approach to compositional operational semantics, outlined in detail in \Cref{sec:abstract-gsos,sec:generalcomp}, rests on three fundamental categorical abstractions:
\begin{enumerate}
\item Model the set of programs of a language as an (initial) algebra for a suitable syntax functor.
\item Model the operational model of a language as a coalgebra for a suitable behaviour functor.
\item Model program equivalences via coalgebraic (bi)simulations, parametric in
  a relation lifting.
\end{enumerate}

\begin{remark}
  Relation lifting~\cite{KurzVelebil16} can be thought of as a systematic approach to
  \emph{relational reasoning} over algebraic structures. The precise notion is
  given ahead in the text.
\end{remark}

In the following we review the necessary terminology from category theory.
 Readers should be familiar with basic notions such as functors, natural transformations, (co)limits, and monads~\cite{Mac-Lane78}.

\paragraph*{Notation} Given
 objects $X_1, X_2$ in a category~$\C$, we write $X_1\times X_2$ for their product, and $\brks{f_1,f_2}\colon Y \to X_1 \times X_2$ for the pairing of
morphisms $f_i\colon Y \to X_i$, $i = 1,2$. Dually, $X_1+X_2$ denotes the coproduct, $[g_1,g_2]\c X_1+X_2\to Y$ the copairing of $g_i\colon X_i\to Y$, $i=1,2$, and we put $\nabla=[\id_X,\id_X]\colon X+X\to X$.

\paragraph*{Relations} A \emph{relation} on an object $X$ of~$\C$ is a subobject (represented by a monomorphism) $R\monoto X\times X$. We write $\outl_R,\outr_R\colon R \to X$ for the left and right projections, and usually drop the subscript.  A \emph{morphism} from $R\monoto X\times X$ to a relation $S\monoto Y\times Y$ is a morphism $f\colon X\to Y$ of~$\C$ such that there exists a (necessarily unique) dashed morphism making the first diagram in \eqref{eq:rel-diags} below commute. Relations $R,S\monoto X\times X$ are ordered by $R\leq S$ iff $\id_X$ is a morphism from $R$ of~$S$. We let $\Rel(\C)$ denote the category of relations and their morphisms.
\begin{equation}\label{eq:rel-diags}
\begin{tikzcd}
R \ar[dashed]{r} \ar[tail]{d} & S \ar[tail]{d} \\
X\times X \ar{r}{f\times f} & Y\times Y
\end{tikzcd} \qquad
\begin{tikzcd}
\Rel(\D) \ar{r}{\barF } \ar{d} & \Rel(\C) \ar{d} \\
\D \ar{r}{F} & \C
\end{tikzcd}
\end{equation}

In our applications we consider relations in the product category $\Set^T$ for a
set $T$. Its objects are $T$-sorted sets $X=(X_t)_{t\in T}$, and a morphism
$f\colon X\to Y$ is a $T$-sorted map $(f_t\colon X_t\to Y_t)_{t\in T}$.

\begin{example}
  A relation $R\monoto X\times X$ in $\Set^T$ is given by a family $(R_t\seq X_t\times X_t)_{t\in T}$.
  We write $\Delta_X\monoto X\times X$ for the \emph{identity relation}, i.e.
  $(\Delta_X)_t = \{ (x,x) \mid x\in X_t\}$, and $R\bullet S$ for the
  \emph{composite} of two relations $R,S\monoto X\times X$, i.e.\ $(R\bullet S)_t
  = R_t\bullet S_t$ where the right-hand $\bullet$ denotes the usual left-to-right composition
  of relations in $\Set$. The \emph{product} of $R\monoto X\times X$ and $S\monoto
  Y\times Y$ is the relation $R\times S \monoto (X\times Y)\times (X\times Y)$
  where $(R\times S)_t((x,y),(x',y'))$ iff $R_t(x,x')$ and $S_t(y,y')$. The
  \emph{power} $R^B\monoto X^B\times X^B$ for a set $B$ is defined analogously.
  The \emph{coproduct} $R+S\monoto (X+Y)\times (X+Y)$ is the disjoint union of $R$
  and $S$.
\end{example}

\begin{example}
  For the case of $T = 1$, where $1$ is the singleton set, thus $\C = \Set^{1}
  \cong \Set$, one recovers the standard notion of a relation on sets. For
  the case of $T = \twoc$, where $\twoc$ is the two-element set, relations on
  $\Set^{\twoc}$ are \emph{pairs} of set-theoretic relations etc.
\end{example}

A \emph{relation lifting} of a functor $F\colon \D\to\C$ is a functor $\barF $ making the second diagram in \eqref{eq:rel-diags} commute, where the vertical arrows are the respective forgetful functors given by $(R\monoto X\times X)\mapsto X$.
Every endofunctor $F$ on $\Set^T$ has a \emph{canonical} relation lifting, which sends a relation $R\monoto X\times X$ to the relation $\barF R\monoto FX\times FX$ obtained via the (surjective, injective)-factorization shown below:
\[
\begin{tikzcd}
FR \ar[shiftarr = {yshift=1.5em}]{rr}{\langle F\outl_R,\, F\outr_R\rangle} \ar[two heads]{r} & \barF R \ar[tail]{r} & FX\times FX
\end{tikzcd}
\]

\begin{example}\label{ex:can-lift-endofunctor}
\begin{enumerate}
\item Recall that a \emph{$T$-sorted algebraic signature}, for a fixed set $T$ of \emph{sorts}, consists of a set~$\Sigma$
of \emph{operation symbols} and a map $\ar\colon \Sigma\to T^{\star}\times T$
associating to every $\f\in \Sigma$ its \emph{arity}. We write $\f\colon t_1\times\cdots\times t_n\to t$ if $\ar(\f)=(t_1,\ldots,t_n,t)$, and $\f\colon t$ if $n=0$ (then $\f$ is called a \emph{constant}). In the case of a single-sorted signature ($T=1$) the arity of $\f$ is determined by the number $n$ of its arguments. Every
signature~$\Sigma$ induces an endofunctor on the category $\Set^T$, denoted by
the same letter $\Sigma$ and defined by $(\Sigma X)_t = \coprod_{\f\colon
  t_1\cdots t_n\to t} \prod_{i=1}^n X_{t_i}$ for $X\in \Set^T$ and $t\in T$.
Endofunctors of this form are called \emph{polynomial}. The canonical relation
lifting of $\Sigma$ maps a relation $R\monoto X\times X$ to the relation
$\ol{\Sigma}R\monoto \Sigma X\times \Sigma X$ where $(\ol{\Sigma} R)_t$ contains
all pairs $(\f(x_1,\ldots,x_n),\f(y_1,\ldots,y_n))$ such that $\f\colon
t_1\times \cdots \times t_n \to t$ is in $\Sigma$ and $R_{t_i}(x_i,y_i)$ for $i=1,\ldots,n$.
\item The canonical relation lifting of the power set functor $\Pow\colon \Set\to \Set$ maps a relation $R\seq X\times X$ to the \emph{Egli-Milner relation} $\barPow R\seq \Pow X\times \Pow X$ where $\barPow R(A,B)$ iff
\[ \text{(i) } \forall a\in A.\,\exists b\in B.\,R(a,b)\;\wedge\; \text{(ii) } \forall b\in B.\,\exists a\in A.\, R(a,b).  \]
Requiring only (i) yields the non-canonical lifting $\vec{\Pow}$. Note that the relation $\vec{\Pow}R$ is \emph{up-closed}, that is, $\vec{\Pow}R(A,B)$ and $B\seq B'$ implies  $\vec{\Pow}R(A,B')$.
\end{enumerate}
\end{example}
The up-closure of $\vec{\Pow}R$ turns out to be critical for the general compositionality result presented in \Cref{sec:generalcomp}. In abstract terms, this property is captured by imposing a suitable order structure:

\begin{definition}[\citet{UrbatTsampasEtAl23b}]
\begin{enumerate}
\item A functor $B\colon \D\to\C$ is \emph{ordered} if each hom-set $\C(Z,BX)$ ($Z\in \C, X\in \D$) is equipped with a preorder (a reflexive transitive relation) $\preceq$ such that
\[ \forall q,q'\colon Z\to BX,\,p\colon Z'\to Z.\qquad  q\preceq q' \implies q\comp p\preceq q'\comp p.\]
\item A relation lifting $\barB $ of an ordered functor $B\colon \D\to \C$ is \emph{up-closed} if for every relation $R\monoto X\times X$ in $\D$, every span $BX\xleftarrow{f} Z
  \xrightarrow{g} BX$ in $\C$ and every morphism $Z\to \barB R$ in $\C$ such that the
  left-hand triangle in the first diagram below commutes, and the right-hand triangle
  commutes laxly as indicated, there exists a morphism $Z\to \barB R$ in $\C$ such that the second
  diagram commutes.
\[
  \begin{tikzcd}
    & \ar[bend right=2em]{dl}[swap]{f} \ar[bend left=2em]{dr}{g} \ar{d} Z & {~}\\
    BX & \ar[phantom]{ur}[description, pos=.35, xshift=-10]{\dleq{45}} \ar{l}[swap]{\outl} \barB R \ar{r}{\outr} & BX
  \end{tikzcd}\qquad
  \begin{tikzcd}
    & \ar[bend right=2em]{dl}[swap]{f} \ar[bend left=2em]{dr}{g} \ar[dashed]{d} Z & {~}\\
    BX & \ar{l}[swap]{\outl} \barB R \ar{r}{\outr} & BX
  \end{tikzcd}
\]
\end{enumerate}
\end{definition}

For instance, the lifting $\vec{\Pow}$ is up-closed if $\Pow$ is ordered by $q\preceq q'$ iff $\forall z\in Z.\, q(z)\seq q'(z)$.

\paragraph*{Algebras}
Given an endofunctor $\Sigma$ on a category $\C$,
a \emph{$\Sigma$-algebra} is a pair $(A,a)$ of an object~$A$
 and a morphism $a\colon \Sigma A\to A$. A \emph{morphism} from
$(A,a)$ to a $\Sigma$-algebra $(B,b)$ is a morphism $h\colon A\to B$
of~$\C$ with $h\circ a = b\circ \Sigma h$. Algebras for $\Sigma$ and
their morphisms form a category, and an \emph{initial}
$\Sigma$-algebra is simply an initial object in that category. We denote the initial algebra by $(\mu\Sigma,\ini)$ if it exists.

More generally, a \emph{free $\Sigma$-algebra} on an object $X$ of $\C$ is a
$\Sigma$-algebra $(\Sigma^{\star}X,\iota_X)$ together with a morphism
$\eta_X\c X\to \Sigma^{\star}X$ of~$\C$ such that for every algebra $(A,a)$
and every morphism $h\colon X\to A$ of $\C$, there exists a unique
$\Sigma$-algebra morphism $h^\star\colon (\Sigma^{\star}X,\iota_X)\to (A,a)$
such that $h=h^\star\circ \eta_X$. If $\C$ has an initial object $0$, then $\Sigmas 0=\mu\Sigma$. If free algebras
exist on every object, then their formation induces a monad
$\Sigma^{\star}\colon \C\to \C$, the \emph{free monad} generated by $\Sigma$~\cite{Barr70}. Every $\Sigma$-algebra $(A,a)$ induces an
Eilenberg-Moore algebra $\hat a \colon \Sigma^{\star} A \to A$, viz.\ $\hat a =\id_A^\star$ for the identity morphism $\id_A\c A\to A$.

Given a relation lifting $\ol{\Sigma} $ of $\Sigma$, a \emph{$\ol{\Sigma}$-congruence} on an algebra $(A,a)$ is a relation $R\monoto A\times A$ such that the algebra structure $a\colon \Sigma A\to A$ is a morphism $a\colon \bar \Sigma R\to R$ of $\Rel(\C)$.

\begin{example} The prototypical instance of the above categorical concepts are algebras for a signature. For every $T$-sorted signature $\Sigma$, an algebra for the associated polynomial functor $\Sigma$ is
precisely an algebra for the signature $\Sigma$ in the usual sense, that is, an $T$-sorted set $A=(A_t)_{t\in T}$ equipped with an operation $\f^A\colon \prod_{i=1}^n A_{t_i}\to A_t$ for every $\f\colon t_1\times\cdots \times t_n\to t$ in $\Sigma$. Morphisms of $\Sigma$-algebras are $T$-sorted
maps respecting the algebraic structure. Given a $T$-sorted set $X$ of
variables, the free algebra $\Sigmas X$ is the $\Sigma$-algebra of
$\Sigma$-terms with variables from~$X$; more precisely, $(\Sigmas X)_t$ is inductively defined by $X_t\seq (\Sigmas X)_t$ and $\f(u_1,\ldots,u_n)\in (\Sigmas X)_t$ for all ${\f\colon t_1\times\cdots \times t_n\to t}$ and $u_i\in (\Sigmas X)_{t_i}$.
The free
algebra on the empty set is the initial algebra~$\mu \Sigma$; it is
formed by all \emph{closed} $\Sigma$-terms. For every
$\Sigma$-algebra $(A,a)$, the corresponding Eilenberg-Moore algebra
$\hat{a}\colon \Sigmas A \to A$ is given by the map that evaluates terms in~$A$.

For the canonical relation lifting $\ol{\Sigma}$, a $\ol{\Sigma}$-congruence on a $\Sigma$-algebra $A$ is the usual concept from universal algebra, namely a relation $R\monoto A\times A$ compatible with all operations: for $\f\colon t_1\times\cdots \times t_n\to t$ and elements $x_i,y_i\in A_{t_i}$ such that $R_{t_i}(x_i,y_i)$ ($i=1,\ldots,n$), one has $R_t(\f^A(x_1,\ldots,x_n), \f^A(y_1,\ldots,y_n))$. Note that we do not require congruences to be equivalence relations.
\end{example}

\begin{example}[Program terms]\label{ex:while-syntax}
The set $\Progs$ of $\while$-terms forms the initial algebra for the polynomial functor $\Sigma\colon \Set\to \Set$ corresponding the single-sorted signature of \while:
\begin{equation}\label{eq:sig-while} \Sigma X
    = \underbrace{\term}_{\mathsf{skip}}
       +\, \underbrace{\mathcal{A} \times \expr}_{\ass }
       \,+\, \underbrace{\expr \times X}_{\mathsf{while}}
       \,+\, \underbrace{X \times X}_{-;-}.
\end{equation}
 Compositionality of \while (\Cref{thm:comp-while}) states that for $\sem{-}\in \{ \res,\trc,\scn,\trm \}$ the equivalence relation $p\sim p' \iff \sem{p} = \sem{p'}$ forms a congruence on the algebra $\Progs=\mu\Sigma$ of program terms.
\end{example}
\smallskip
\paragraph*{Coalgebras}~Dually to the notion of algebra, a \emph{coalgebra} for an
endofunctor $B$ on $\C$ is a pair $(C,c)$ of an object $C$ and a morphism $c\colon C\to BC$. A \emph{morphism} from
$(C,c)$ to a $B$-coalgebra $(D,d)$ is a morphism
$h\colon C\to D$ of $\C$ such that $Bh\circ c = d\circ h$.
Coalgebras for $B$ and their morphisms form a category, and a
\emph{final} $B$-coalgebra, denoted $(\nu B,\tau)$, is a final object in that category.

\begin{example}\label{ex:while-behaviour}
Coalgebras form an abstraction of transition systems. For instance, the operational model $\gamma\colon \Progs \to (\Progs\times \store + \store)^\store$ of $\while$ is a $D^\S$-coalgebra where $DX=X\times \S + \S$ on $\Set$.
\end{example}
The theory of coalgebras allows modelling notions of strong or weak (bi)simulation for transition systems at a convenient level of generality, using relation liftings of behaviour functors.

\begin{definition}[\citet{UrbatTsampasEtAl23b}]\label{def:coalg-sim}
 Given a relation lifting $\barB $ of $B$, a \emph{weakening}~\cite{UrbatTsampasEtAl23} of a $B$-coalgebra $(C,c)$ is a $B$-coalgebra $(C,\tilde{c})$ such that for every relation $R\monoto C\times C$, there exists a morphism $R\to \barB R$ in $\C$ making the first diagram below commute iff there exists a morphism $R\to \barB R$ in $\C$ making the second diagram commute. A relation $R$ satisfying these two equivalent properties is then called a \emph{$\barB$-simulation on $(C,c,\tilde{c})$}. The greatest $\barB$-simulation (if it exists) is called \emph{$\barB$-similarity}.
\begin{equation}\label{eq:weakening}
\begin{tikzcd}
R \ar[dashed]{r} \ar[tail]{d} & \barB R \ar[tail]{d} \\
C\times C \ar{r}{c\times \tilde{c}} & BC\times BC
\end{tikzcd}
\qquad\qquad
\begin{tikzcd}
R \ar[dashed]{r} \ar[tail]{d} & \barB R \ar[tail]{d} \\
C\times C \ar{r}{\tilde{c}\times \tilde{c}} & BC\times BC
\end{tikzcd}
\end{equation}
\end{definition}

\begin{rem} Thanks to being parametric in the lifting $\barB$ and the weakening $\tilde{c}$, the above notion of $\barB$-simulation is quite flexible. In fact, despite the terminology, it will in some cases amount to \emph{bi}simulations in the usual sense. There are two typical situations to which $\barB$-simulations instantiate:
\begin{enumerate}
\item For the trivial weakening $\tilde{c}=c$, the notion of $\barB$-simulation first appeared in the work of \citet{HermidaJacobs98} (under the name $\barB$-\emph{bi}simulation). For instance, for the power set functor $\Pow$ with its lifting $\vec{\Pow}$ of \Cref{ex:can-lift-endofunctor}, a $\vec{\Pow}$-simulation is the usual notion of strong simulation of graphs, while the `symmetric' lifting $\barPow$ yields strong bisimulations.
\item If $\tilde{c}$ is some form of reflexive transitive closure of $c$, then $\barB $-simulations amount to a notion of \emph{weak} simulation. For instance, let $c\colon C\to \Pow C$ and let $\tilde{c}\colon C\to \Pow C$  be given by $y\in \tilde{c}(x)$ iff there exist $n\geq 0$ and $x=x_0,\ldots,x_n=y\in C$ such that $x_{i+1}\in c(x_i)$ for $0\leq i<n$. Then a $\vec{\Pow}$-simulation on $(C,c,\tilde{c})$ corresponds to a weak simulation: given any two related states $x,y$, every strong transition from $x$ is matched by a weak transition from $y$. The fact that $\tilde{c}$ is a weakening expresses that this property is equivalent to matching weak transitions with weak transitions.
\end{enumerate}
\end{rem}

\begin{rem}\label{rem:bisim-vs-behaveq} For endofunctors $B$ on $\Set^T$, we note that:
\begin{enumerate}
\item The $\barB $-similarity relation is given by the (sortwise) union of all $\barB$-simulations on $(C,c,\tilde{c})$.
\item  If $\barB$ is the canonical relation lifting, the final coalgebra $\nu B$ exists, the functor $B$ preserves weak pullbacks, and $\tilde{c}=c$, then two states of $(C,c)$ are $\barB$-similar iff they are \emph{behaviourally equivalent}, i.e.\ merged by the unique coalgebra morphism from $(C,c)$ to $\nu B$~\cite[Thm.~4.2.4]{Jacobs16}.
\end{enumerate}
\end{rem}

\begin{example}\label{ex:while-behaviour-2}
The behaviour functor $D^\store = (X\times \store+\store)^\store$ for \while has a canonical lifting $\barD^\store$ that sends  $R\seq X\times X$ to $\barD^\store  R\seq (X\times \S+\S)^\S \times (X\times \S+\S)^\S$ where $\barD^\store (f,g)$ iff, for all $s\in \S$, either $f(s)=(x,s')$ and $g(s)=(x',s')$ where $R(x,x')$ and $s'\in \S$, or $f(s)=g(s)\in \S$. Then
$\barD^\store$-similarity on the operational model \eqref{eq:while-opmodel} coincides with resumption bisimilarity (\Cref{def:res-bis}). Since $D^\store$ preserves weak pullbacks, it also coincides with behavioural equivalence (\Cref{rem:bisim-vs-behaveq}).
\end{example}

\subsection{Abstract Operational Semantics}\label{sec:abstract-gsos}
The abstract GSOS framework \cite{TuriPlotkin97} yields an elegant categorical approach to operational semantics. It is parametric in two endofunctors $\Sigma, B\colon \C\to \C$ on a category $\C$ with products, where $\Sigma$ has an initial algebra $\mu\Sigma$ and generates a free monad $\Sigmas$. The functors $\Sigma$ and $B$ represent the \emph{syntax} and \emph{behaviour} of a programming language, with $\mu\Sigma$ thought of as the object of program terms. The operational semantics of a language are modelled by a \emph{GSOS law of $\Sigma$ over $B$}: a natural transformation
\begin{equation}\label{eq:rho} \rho_X\colon \Sigma(X\times BX)\to B\Sigmas X \quad (X\in \C).
\end{equation}
Informally, $\rho$ encodes the inductive operational rules of the language at hand: for every program constructor~$\f$, it specifies the one-step behaviour of programs $\f(p_1,\ldots,p_n)$, i.e.\ the $\Sigma$-terms they transition into next, depending on the one-step behaviours of the operands $p_1,\ldots,p_n$.

\begin{example}\label{ex:while-gsos-law}
The operational rules of \while (\Cref{fig:rules-while}) can be translated into a GSOS law $\rho$ of the signature functor $\Sigma$ given in \Cref{ex:while-syntax} over the behaviour functor $D^\S X=(X\times \S+\S)^\S$. The component $\rho_X\colon \Sigma(X\times (X\times \S+\S)^\S)\to (\Sigmas X\times \S+\S)^\S$ encodes, for instance, the rules for sequential composition into the following assignment for $p,q\in X$ and $f,g \in (X\times \S+\S)^\S$:
\[ \rho_X((p,f) \seqcomp\, (q,g)) \;=\; \lambda s.
\begin{cases}
((p'\seqcomp q),s') & \text{if } f(s)=(p',s')\in X\times \S,\\
(q,s') & \text{if } f(s)=s'\in \S.
\end{cases} \]

\end{example}

The universal property of the initial algebra $(\mu\Sigma,\iota)$ entails that there
exists a unique $B$-coalgebra structure $\gamma\c \mu\Sigma\to B(\mu\Sigma)$ such that the diagram below commutes: \begin{equation}\label{eq:gamma-def}
\begin{tikzcd}
\Sigma(\mu\Sigma) \ar{r}{\ini} \ar{d}[swap]{\Sigma\langle \id,\,\gamma\rangle} & \mu\Sigma \ar[dashed]{r}{\gamma} & B(\mu\Sigma) \\
\Sigma(\mu\Sigma\times B(\mu\Sigma)) \ar{rr}{\rho_{\mu\Sigma}} & & B\Sigmas (\mu\Sigma) \ar{u}[swap]{B\hat \ini}
\end{tikzcd}
\end{equation}
The coalgebra $(\mu\Sigma,\gamma)$ is called the \emph{operational model} of the GSOS law $\rho$. Informally, this is the transition system that runs programs according the operational rules represented by $\rho$. For instance, the operational model of the GSOS law for \while (\Cref{ex:while-gsos-law}) is precisely the transition system~\eqref{eq:while-opmodel} on the set $\Progs=\mu\Sigma$ of program terms induced by the rules of \while.

\subsection{Compositionality in Abstract GSOS}
\label{sec:generalcomp}
The key feature of abstract GSOS is that it allows for general compositionality results applying uniformly to languages modelled in the framework. The compositionality theorem presented below (\Cref{thm:congruence-abstract-gsos}) is substantially more powerful than the original one by \citet{TuriPlotkin97}: it not only applies to behavioural equivalence in the final coalgebra, but to similarity w.r.t.~a choice of relation lifting of the behaviour functor and a choice of weakening of the operational model. Let us first fix the required data. We restrict to the base category $\C=\Set^T$ and polynomial syntax functors, which simplifies some technical conditions and is sufficient to our purposes:

\begin{definition}
  \label{def:aos}
An \emph{abstract operational setting} (\emph{AOS}) $\O=(\Sigma,\ol{\Sigma},B,\barB ,\rho,\gamma,\tilde{\gamma})$
is given by the following data:
\begin{itemize}
\item a polynomial functor $\Sigma\colon \Set^T\to \Set^T$ with its {canonical} relation lifting $\ol{\Sigma}$;
\item an ordered functor $(B,\preceq)\colon \Set^T\to \Set^T$ with a (not necessarily canonical) relation lifting $\barB$;
\item a GSOS law $\rho$ of $\Sigma$ over $B$;
\item the operational model $(\mu\Sigma,\gamma)$ of $\rho$ with a weakening $\tilde{\gamma}$.
\end{itemize}
\end{definition}

The point of the definition of AOS is to collect (in a neat, abstract form) the
\emph{problem setting} of relational reasoning about some ``weak'' notion of program
equivalence on a (first-order) programming language. The polynomial functor
$\Sigma$ models the algebraic signature of the language and $B$ models the
behaviour, for instance $B = \mathcal{P}(L \times \Id)$ for a nondeterministic
LTS. The relation lifting $\ol{B}$ eventually determines the type of program
equivalence considered. The GSOS law $\rho$ corresponds to the collection of
operational rules and the weakening $\ol{\gamma}$ to the chosen notion of a
\emph{weak} transition system, e.g. a saturation $\To$.

Given an AOS we can study the congruence of $\barB $-similarity on $(\mu\Sigma,\gamma,\tilde{\gamma})$. It turns out that the congruence property boils down to three natural conditions.

The first one is that the lifting $\barB$ is up-closed and laxly preserves composition and identities:
\[\barB R\bullet \barB S\seq \barB(R\bullet S) \text{ for all $R,S\monoto X\times X$} \qqand
\Delta_{BX} \seq \barB\Delta_X \text{ for all $X$},
\]
where $\seq$ means sortwise inclusion. This ensures that $\barB$-similarity is a preorder~\cite[Lemma~VI.3]{UrbatTsampasEtAl23b}.

The second condition is \emph{liftability} of the law $\rho$:
\begin{definition}
The GSOS law $\rho$ of an AOS is \emph{liftable} if for every relation $R\monoto X\times X$ in $\Set^T$, the component $\rho_X$ is a $\Rel(\Set^T)$-morphism from $\ol{\Sigma}(R\times \barB R)$ to $\barB \,\ol{\Sigma}^{*}{R}$.
\end{definition}
Note that the free monad $\ol{\Sigma}^{*}$ is given by the canonical lifting $\ol{\Sigma^{*}}$ of the free monad $\Sigma^{*}$~\cite[Prop.~V.4]{UrbatTsampasEtAl23b}. Intuitively, liftability expresses in abstract terms that the operational rules encoded by $\rho$ are parametrically polymorphic, i.e.\ do not inspect the structure of their arguments.
 Hence, every relation between the arguments should be preserved by $\rho$. Liftability is closely related to dinaturality of $\rho$; in fact, when using canonical liftings, dinaturality implies liftability (\autoref{rem:congruence-abstract-gsos}).

The final condition ensures that weak transitions interact well with the operational rules:
\begin{definition}\label{def:lax-bialg}
The triple $(\mu\Sigma,\ini,\tilde{\gamma})$ is a \emph{lax $\rho$-bialgebra} if the diagram below commutes laxly:
\begin{equation}\label{eq:lax-bialgebra-fo}
\begin{tikzcd}
\Sigma(\mu\Sigma) \ar{r}{\ini} \ar{d}[swap]{\Sigma \langle \id, \tilde{\gamma}\rangle} & \mu\Sigma \ar{r}{\tilde{\gamma}} & B(\mu\Sigma) \\
\Sigma(\mu\Sigma\times B(\mu\Sigma)) \ar{rr}{\rho_{\mu\Sigma}} & ~ \ar[phantom]{u}[description, yshift=5]{\dgeq{-90}} & B\Sigmas(\mu\Sigma) \ar{u}[swap]{B\hat\ini}
\end{tikzcd}
\end{equation}
\end{definition}
The lax bialgebra condition expresses that the operational rules corresponding to $\rho$ remain sound in the operational model when strong transitions (given by $\gamma$) are replaced by weak transitions (given by $\tilde{\gamma}$). Lax bialgebras were introduced (in a slightly less general form than \Cref{def:lax-bialg}) by \citet{BonchiPetrisanEtAl15} and originally used to analyse sound up-to techniques for weak (bi)similarity.

The following general compositionality result for abstract GSOS is a special case of \cite[Cor.~VIII.7]{UrbatTsampasEtAl23b}. A higher-order version will appear in \Cref{sec:higher-order}.

\begin{theorem}[Compositionality]\label{thm:congruence-abstract-gsos}
Suppose that $\O=(\Sigma,\ol{\Sigma},B,\barB ,\rho,\gamma,\tilde{\gamma})$ is an AOS such that
\begin{enumerate}
\item\label{comp-cond1} $\barB$ is up-closed and laxly preserves composition and identities;
\item\label{comp-cond2} $\rho$ is liftable;
\item\label{comp-cond3} $(\mu\Sigma,\ini,\tilde{\gamma})$ is a lax
  $\rho$-bialgebra.
\end{enumerate}
Then the $\barB $-similarity relation on $(\mu\Sigma,\gamma,\tilde{\gamma})$ is a $\ol{\Sigma}$-congruence on the initial algebra $(\mu\Sigma,\ini)$.
\end{theorem}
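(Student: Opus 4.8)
The plan is to realise the statement as the instance of \cite[Cor.~VIII.7]{UrbatTsampasEtAl23b} for the base category $\Set^T$ with polynomial syntax $\Sigma$ and its canonical lifting $\ol{\Sigma}$; here I sketch the argument underlying that result. Write $\lessapprox\monoto\mu\Sigma\times\mu\Sigma$ for $\barB$-similarity on $(\mu\Sigma,\gamma,\tilde\gamma)$; by condition~\ref{comp-cond1} and \cite[Lemma~VI.3]{UrbatTsampasEtAl23b} it is a preorder, hence reflexive. Let $\widehat{\lessapprox}\monoto\mu\Sigma\times\mu\Sigma$ be the image of $\ol{\Sigmas}\lessapprox$ under $\hat\ini\times\hat\ini$, where the canonical lifting $\ol{\Sigmas}$ of the free monad coincides with the free monad $\ol{\Sigma}^{\star}$ on $\ol{\Sigma}$ by \cite[Prop.~V.4]{UrbatTsampasEtAl23b}. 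Since $\ol{\Sigmas}\lessapprox$ relates $\Sigmas$-terms of equal shape with $\lessapprox$-related leaves, it contains the image of $\lessapprox$ under $\eta_{\mu\Sigma}\times\eta_{\mu\Sigma}$ and is closed under $\Sigma$-contexts; as $\hat\ini$ is a $\Sigma$-algebra morphism with $\hat\ini\comp\eta_{\mu\Sigma}=\id$, it follows that $\lessapprox\seq\widehat{\lessapprox}$ and that $\widehat{\lessapprox}$ is a $\ol{\Sigma}$-congruence on $(\mu\Sigma,\ini)$. Hence it suffices to prove $\widehat{\lessapprox}\seq\lessapprox$: then $\widehat{\lessapprox}=\lessapprox$, which is therefore a $\ol{\Sigma}$-congruence. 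Since $\lessapprox$ is the greatest $\barB$-simulation on $(\mu\Sigma,\gamma,\tilde\gamma)$, this reduces to showing that $\widehat{\lessapprox}$ is a $\barB$-simulation, i.e.\ by \Cref{def:coalg-sim} that $\gamma\times\tilde\gamma$ maps $\widehat{\lessapprox}$ into $\barB\widehat{\lessapprox}$.

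So let $(a,b)\in\widehat{\lessapprox}$, say $a=\hat\ini(u)$ and $b=\hat\ini(v)$ with $(u,v)\in\ol{\Sigmas}\lessapprox$. Let $\gamma^{\star},\tilde\gamma^{\star}\colon\Sigmas(\mu\Sigma)\to B\Sigmas(\mu\Sigma)$ be the $B$-coalgebra structures defined by structural recursion on $\Sigmas$-terms, applying $\rho$ at each operation symbol and $\gamma$, resp.\ $\tilde\gamma$, at the leaves; by the standard bialgebra property of operational models (compare \eqref{eq:gamma-def}), $\hat\ini$ is a coalgebra morphism $(\Sigmas(\mu\Sigma),\gamma^{\star})\to(\mu\Sigma,\gamma)$, so $B\hat\ini\comp\gamma^{\star}=\gamma\comp\hat\ini$. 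The key computation is that \emph{liftability of $\rho$} (condition~\ref{comp-cond2}), applied at each use of $\rho$ in these recursions, yields $(\gamma^{\star}u,\tilde\gamma^{\star}v)\in\barB\,\ol{\Sigmas}\lessapprox$. Applying $B\hat\ini$ — which, since $\barB$ is a functor and $\hat\ini$ is a morphism $\ol{\Sigmas}\lessapprox\to\widehat{\lessapprox}$, maps $\barB\,\ol{\Sigmas}\lessapprox$ into $\barB\widehat{\lessapprox}$ — gives $(B\hat\ini\,\gamma^{\star}u,\,B\hat\ini\,\tilde\gamma^{\star}v)\in\barB\widehat{\lessapprox}$.

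It remains to rewrite the two legs. On the left, $B\hat\ini\,\gamma^{\star}u=\gamma\,\hat\ini u=\gamma a$. On the right, the \emph{lax $\rho$-bialgebra} condition (condition~\ref{comp-cond3}), propagated from $\Sigma$-contexts to $\Sigmas$-contexts by induction on term structure, gives $B\hat\ini\,\tilde\gamma^{\star}v\preceq\tilde\gamma\,\hat\ini v=\tilde\gamma b$. Then \emph{up-closedness of $\barB$} (condition~\ref{comp-cond1}) lets us pass from $(\gamma a,\,B\hat\ini\,\tilde\gamma^{\star}v)\in\barB\widehat{\lessapprox}$, with $B\hat\ini\,\tilde\gamma^{\star}v\preceq\tilde\gamma b$, to $(\gamma a,\tilde\gamma b)\in\barB\widehat{\lessapprox}$, which is what we needed. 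Hence $\widehat{\lessapprox}$ is a $\barB$-simulation, so $\widehat{\lessapprox}=\lessapprox$ and $\lessapprox$ is a $\ol{\Sigma}$-congruence.

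I expect the main obstacle to be the two passages to the free monad: making precise the induced coalgebra structures $\gamma^{\star},\tilde\gamma^{\star}$ and verifying that both the liftability inequality and the lax-bialgebra inequality survive this extension. This is an induction on term depth that leans on the coherence laws of the GSOS law~$\rho$, the defining diagram~\eqref{eq:gamma-def} of the operational model, and functoriality of the canonical lifting $\ol{\Sigmas}$ (and, for the congruence property of $\widehat{\lessapprox}$, its compatibility with the free-monad structure). The remaining ingredients — reflexivity of $\lessapprox$ and the verification that $\widehat{\lessapprox}$ is a $\ol{\Sigma}$-congruence containing $\lessapprox$ — are routine given conditions~\ref{comp-cond1}--\ref{comp-cond3}.
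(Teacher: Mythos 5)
Your skeleton — take $\widehat{\lessapprox}$ to be the image of $\ol{\Sigma}^{\star}{\lessapprox}$ under $\hat\ini\times\hat\ini$, show it is a $\barB$-simulation, and conclude $\widehat{\lessapprox}={\lessapprox}$ — is a reasonable direct route in this $\Set^T$/polynomial/canonical-lifting setting, and it is genuinely different from the paper, which does not prove the theorem directly but obtains it as a special case of \cite[Cor.~VIII.7]{UrbatTsampasEtAl23b}, whose proof is a categorical Howe's method. However, your argument has a real gap precisely at the step you rely on for the right leg: the claim that the lax-bialgebra inequality ``propagates from $\Sigma$-contexts to $\Sigmas$-contexts'', i.e.\ that $B\hat\ini\circ\tilde\gamma^{\star}\preceq\tilde\gamma\circ\hat\ini$. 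In the inductive step for a term $\f(t_1,\ldots,t_n)$ one computes (via naturality of $\rho$ and the Eilenberg--Moore law) $B\hat\ini\,\tilde\gamma^{\star}(\f(\vec t))=B\hat\ini\,\rho_{\mu\Sigma}\bigl(\f(\hat\ini t_i,\;B\hat\ini\tilde\gamma^{\star}t_i)\bigr)$, and the induction hypothesis gives $B\hat\ini\tilde\gamma^{\star}t_i\preceq\tilde\gamma(\hat\ini t_i)$; but to bring the lax-bialgebra axiom to bear you must replace the behaviour arguments of $\rho_{\mu\Sigma}$ by these larger ones, which requires $\preceq$-monotonicity of $\rho_X$ in its $B$-arguments (and of postcomposition with $B\hat\ini$). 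Conditions \ref{comp-cond1}--\ref{comp-cond3} do not supply this: an ordered functor only guarantees stability of $\preceq$ under \emph{pre}composition, liftability is an order-free condition, and up-closedness is a property of the lifting $\barB$, not of $\rho$ — and it cannot be invoked here, since your inequality involves no relation at all. (In the paper's concrete instances $\rho$ happens to be monotone, but the theorem does not assume it.) So the auxiliary lemma underlying your right-leg rewriting is not derivable from the stated hypotheses.

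The gap is repairable by never introducing $\tilde\gamma^{\star}$: prove by one structural induction on the common context that $(u,v)\in\ol{\Sigma}^{\star}{\lessapprox}$ implies $(\gamma(\hat\ini u),\tilde\gamma(\hat\ini v))\in\barB\widehat{\lessapprox}$. Leaves are the simulation property of ${\lessapprox}$; at a node $\f$, writing $a_i=\hat\ini u_i$, $b_i=\hat\ini v_i$, the induction hypothesis gives $(\gamma a_i,\tilde\gamma b_i)\in\barB\widehat{\lessapprox}$, liftability applied to the relation $\widehat{\lessapprox}$ itself (rather than to $\ol{\Sigma}^{\star}{\lessapprox}$) relates $\rho_{\mu\Sigma}(\f(a_i,\gamma a_i))$ and $\rho_{\mu\Sigma}(\f(b_i,\tilde\gamma b_i))$ in $\barB\,\ol{\Sigma}^{\star}\widehat{\lessapprox}$, applying $B\hat\ini$ and \eqref{eq:gamma-def} turns the left leg into $\gamma(\ini(\f(\vec a)))$, and one instance of the lax-bialgebra inequality followed by up-closedness replaces the right leg by $\tilde\gamma(\ini(\f(\vec b)))$. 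The crucial point is that the weakened behaviour $\tilde\gamma$ is re-established at every layer before moving outward, so no monotonicity of $\rho$ is ever needed; this route uses condition \ref{comp-cond1} only through up-closedness, whereas the preorder property and lax preservation of composition/identities are what the Howe-closure proof of the cited result consumes. The remaining ingredients of your proposal — ${\lessapprox}\subseteq\widehat{\lessapprox}$, the congruence property of $\widehat{\lessapprox}$, the left-leg identity $B\hat\ini\circ\gamma^{\star}=\gamma\circ\hat\ini$, your claim that liftability gives $(\gamma^{\star}u,\tilde\gamma^{\star}v)\in\barB\,\ol{\Sigma}^{\star}{\lessapprox}$, and the final up-closure step — are all fine.
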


\begin{remark}\label{rem:congruence-abstract-gsos}
In practice, some of the conditions of the theorem may come for free:
\begin{enumerate}
\item holds if $\barB$ is canonical, and $B$ is ordered by equality and preserves weak pullbacks~\cite[Prop.~C.9]{UrbatTsampasEtAl23}.
\item holds if $\barB$ is the canonical lifting~\cite[Constr.~D.5]{UrbatTsampasEtAl23}.
\item holds if $B$ is ordered by equality and $\tilde{\gamma}=\gamma$; in this case, the lax bialgebra condition \eqref{eq:lax-bialgebra-fo} is just \eqref{eq:gamma-def}.
\end{enumerate}
For functors $B$ preserving weak pullbacks and having a final coalgebra, \Cref{thm:congruence-abstract-gsos} thus specializes to the original congruence result by~\citet{TuriPlotkin97}: for every GSOS law, behavioural equivalence in the final coalgebra is a congruence on the operational model.
\end{remark}

The general compositionality theorem simplifies proofs of congruence results for programming languages by reducing their congruence properties to elementary and usually easily verifiable conditions on the lifting $\barB $ of the behaviour functor and the rules of the language given by $\rho$.

\subsection{The Fundamental Challenge}\label{sec:challenge}

Can we deduce the compositionality of \while (\Cref{thm:comp-while}) from the compositionality theorem for abstract GSOS? At this stage, only the case of resumption semantics -- the most fine-grained and arguably least useful of the four notions of program semantics in \Cref{sec:while} -- is immediate:

\begin{example}
We apply \Cref{thm:congruence-abstract-gsos} to the AOS $\O=(\Sigma,\ol{\Sigma},D^\store,\barD^\store ,\rho,\gamma,{\gamma})$ where $\Sigma$ is the signature of \while, $D^\store$  and $\barD^\store$ are as in \Cref{ex:while-behaviour-2} (with $D^\store$ ordered by equality), and $\rho$ is as in \Cref{ex:while-gsos-law}. Recall that $\barD^\S$-similarity is resumption bisimilarity. By \Cref{rem:congruence-abstract-gsos}, all conditions of \Cref{thm:congruence-abstract-gsos} are satisfied, whence resumption bisimilarity is a congruence on the algebra $\Progs=\mu\Sigma$ of \while-terms. In fact, since $B$ has a final coalgebra, this result already follows from Turi and Plotkin's original theory of congruence and does not require the full generality of \Cref{thm:congruence-abstract-gsos}.
\end{example}

To capture the more interesting parts of \Cref{thm:comp-while} in our abstract
setting, we would need to come up with liftings $\barD^\S$ and weakenings
$\tilde{\gamma}$ such that the ensuing notions of $\barD^S$-similarity coincide
with trace, cost, and termination equivalence, respectively. This
appears to be impossible under the present choice of behaviour functor
$D^\store X=(X\times\store +\store)^\S$; intuitively, the exponent $(-)^\S$
forces a resumption-like semantics. Our solution lies in switching to a
\emph{two-sorted} refinement of \while.

\section{Readers and Writers}
\label{sec:imperative}

Trace semantics in \while interprets a term $p$ as the
function mapping an input store $s$ to the (possibly infinite) trace of $(p,s)$.
This suggests that program-store pairs $(p,s)$ are essentially treated as the states of a transition system, i.e. $(p,s) \xrightarrow{s'} (p',s')$. On the other hand, the operational semantics treats terms as \emph{transformers} acting on an input store: for instance, in $p;q$, the subterm $q$ is thought of as an 'open' computation, which needs to be given an input $s$ to start. Afterwards, however, trace semantics treats the pair $(q,s)$ as a \emph{running} computation.

The reader-writer approach to stateful semantics turns this implicit divide
between ``transformers'' and
``started transformers/computations'' into explicit syntactic sorts, namely
\emph{readers} and \emph{writers} respectively.
Readers represent programs that need to be provided with an input state in
order to run. Contrastingly,
writers represent programs that are already running, producing a trace of states, and for
instance arise by passing an input state to a reader. We
illustrate this idea via a language called \whiletwo, which extends the language \while.

\subsection{The Language $\boldsymbol{\whiletwo}$}\label{sec:whiletwo}
We inherit the sets $\mathcal{A}$ (program variables), $\S$ (states) and $\expr$ (arithmetic expressions) from \while.
The program terms of $\whiletwo$ are split into a set~$\Tr$ of
readers $p,q,\dots$ and a set~$\Co$ of writers $c,\dots$ and are defined by the
 grammar
\begin{equation}
  \label{eq:transsyn}
  \Tr\owns p, q \Coloneqq \mathsf{skip}\mid x \ass  e \mid \mathsf{while}~e~p %
  \mid p \seqcomp q \qqand
  \Co\owns  c 		\Coloneqq \run{s}{p} \mid s.c \mid \ret_s \mid c \seqcomp q
\end{equation}
where $s \in \store$, $x \in \mathcal{A}$ and $e \in \expr$. Equivalently, $(\Tr,\Co)$ is the initial algebra for the polynomial endofunctor $\Sigma \c \impcat \to
\impcat$ corresponding to the two-sorted signature of $\whiletwo$, with sorts $\Trs$ and $\Cos$:
\[
  \Sigma_{\Trs} X
  = \underbrace{\term}_{\mathsf{skip}}
  +\, \underbrace{\mathcal{A} \times \expr}_{\ass }
  \,+\, \underbrace{\expr \times X_{\Trs}}_{\mathsf{while}}
  \,+\, \underbrace{X_{\Trs} \times X_{\Trs}}_{-;-},\quad
  \Sigma_{\Cos} X
  = \underbrace{X_{\Trs}\times\store}_{[-]_s} \,+\, \underbrace{\store \times X_{\Cos}}_{s.-}\,+\, \underbrace{\store}_{\ret_s}
  \,+\,
  \underbrace{X_{\Cos} \times X_{\Trs}}_{-;-}.
\]
Note that readers are precisely \while-terms ($\Progs = \Tr$). Additionally, there are four
types of writers: For every reader $p$ and state $s$, the
writer~$\run{s}{p}$ represents the encapsulated computation of~$p$ at input $s$. This bears some resemblance with the \emph{fine-grain call-by-value} paradigm~\cite{LevyPowerEtAl03} with its operator $[-]$ casting values as computations.
 The writer $s.c$ produces the state $s$ and then behaves like $c$. The writer $\ret_s$ terminates in the state $s$. Finally, the writer $c\seqcomp q$ represents a sequential composition that currently evaluates its first argument. One can think of  writers as programs with
`explicit' store, in analogy with explicit substitutions: the store
  is gradually moved towards the computation point, e.g.~one may
  derive $[p\seqcomp q]_s \to [p]_s \seqcomp q$ or $\ret_s \seqcomp q \xto{s} [q]_s,s$, where a labelled transition denotes
  `store output', signaling to the outer
  world that a computation step in \while has been completed.

Formally, the operational semantics of readers are specified by the rules in the top part of \Cref{fig:comp-rules}.
\begin{figure}
  \begin{gather*}
    \inference{}{p\seqcomp q,s
      \to \run{s}{p}\seqcomp q}
    \qquad
    \inference{}{\mathsf{skip},s \to \ret_s}
    \qquad
    \inference{\oname{eev}(e,s) =
      n}{x \ass  e,s \to \ret_{s[x \ass  n]}}%
    \\[1ex]
    \inference{\oname{eev}(e,s) = 0}{\mathsf{while}~e~p,s \to \ret_s}
    \qquad
    \inference{\oname{eev}(e,s) \not = 0}
    {\mathsf{while}~e~p,s \to
      s.\run{s}{p\seqcomp \mathsf{while}~e~p}
    }
  \end{gather*}

\vspace{1.5ex}
~\dotfill~
\vspace{1ex}
  \begin{gather*}
    \inference{p,s\to c}{\run{s}{p} \to c}
    \qquad
    \inference{}{\ret_s \downarrow s}
    \qquad
    \inference{}{s.c \xto{s} c}
    \qquad
    \inference{c \xto{s} d}{c\seqcomp q
      \xto{s} d\seqcomp q}
    \qquad
    \inference{c \to d}{c\seqcomp q
      \to d\seqcomp q}
    \qquad
    \inference{c \downarrow s'}{c\seqcomp q
      \xto{s'} \run{s'}{q}}
  \end{gather*}
  \caption{Reader semantics of \whiletwo (top) and writer semantics of \whiletwo (bottom).}
  \label{fig:comp-rules}
\end{figure}
They inductively determine transitions of the form
\begin{equation}
  \begin{aligned}
    & p,s \to c
    & & \text{where}
        \quad p \in \Tr,\, c \in \Co, s\in \S,
  \end{aligned}
\end{equation}
to be read as `on input state $s$,
  reader~$p$ continues as writer~$c$'. The reader semantics induces a map
\begin{equation}
  \label{eq:trsem}
  \gamma_0^{\Trs} \c\Tr  \to
  \Co^{\store}
  \qquad\text{given by}\qquad \gamma_0^\Trs(p)(s)= c \qquad \text{if} \qquad p,s\to c.
\end{equation}
The operational semantics of the writers are given in the bottom part of \Cref{fig:comp-rules}. These rules determine transitions of type
\[ c \to d,\qquad c\xto{s} d, \qquad c\downarrow s, \qquad \text{where}\qquad c,d\in \Co,\, s\in \store, \]
 to be read as `$c$ progresses to $d$', `$c$ progresses to $d$ and generates the state $s$', and `$c$ terminates in the state $s$', respectively. The writer semantics thus yields a function
\begin{equation}
  \label{eq:cosem}
    \gamma_0^{\Cos} \colon \Co \to \Co\times \store + \Co + \store
\quad\text{given by}\quad \gamma_0^{\Cos}(c) = (d,s) \, / \, d\, /\, s \quad\text{if}\quad c\xto{s} d \, / \, c\to d \,/\, c\downarrow s.
\end{equation}

\begin{notation}
 We denote objects of $\Set^2$ as a pairs $X=(X_\Trs,X_\Cos)$, morphisms as pairs $f=(f^\Trs,f^\Cos)$\footnote{We use superscripts for the sorts of morphisms to reserve subscripts for components of natural transformations.}, and we write $F_\Trs$ and $F_\Cos$ for the two components of a functor $F\colon \C\to \Set^2$.
\end{notation}

The two maps~\eqref{eq:trsem} and~\eqref{eq:cosem} combine into a coalgebra
\begin{equation}\label{eq:while2opmodel}
\gamma_0=(\gamma_0^\Trs,\gamma_0^\Cos)\colon (\Tr,\Co) \to B_0(\Tr,\Co)
\end{equation}
for the endofunctor $B_0$ on $\Set^2$ defined by
\begin{equation}\label{eq:beh-whiletwo} (B_0)_\Trs X = X_\Cos^\store \qqand (B_0)_\Cos X = X_\Cos\times \store + X_\Cos + \store. \end{equation}
The coalgebra \eqref{eq:while2opmodel} is the operational model of the GSOS law $\rho_0$ of $\Sigma$ over $B_0$ that encodes the operational rules of \Cref{fig:comp-rules}. The component $\rho_{0,X}=(\rho_0)_X$ at $X\in \Set^2$ is given by the maps
\[
\rho^\Trs_{0,X} \colon \Sigma_\Trs(X\times B_0X)\to (\Sigmas_\Cos X)^\store \qqand
\rho^\Cos_{0,X} \colon \Sigma_\Cos(X\times B_0X)\to \Sigmas_\Cos X\times \S + \Sigmas_\Cos X +\store
\]
where, for instance,
\[
\rho^{\Trs}_{0,X}(x\ass  e) =
  \lambda s.\, \ret_{s[x \ass  \oname{eev}(e,s)]}\\\qqand
 \rho^{\Cos}_{0,X}((c,u)\seqcomp\, (q,f)) =
  \begin{cases}
d\seqcomp q & \text{if } $u\,=\,d$;\\
((d \seqcomp q),s) & \text{if } $u\,=\,(d,\,s)$;\\
([q]_{s},s) & \text{if } $u\,=\,s$.
\end{cases}
\]
See \ifarx{Appendix~\ref{app:gsos-laws}}{{\cite[App.~A]{stateful25-arxiv}}} for
  the full definition of $\rho_0$.
We will next introduce trace, cost, and termination
semantics for  $B_0$-coalgebras (including the operational model of \whiletwo) and demonstrate that unlike the case of $D^\S$-coalgebras in the previous section, these notions can be directly captured by coalgebraic similarity for suitable choices of relation liftings and weakenings. In this way, we are able to derive the respective congruence properties for \whiletwo using abstract GSOS theory.

\begin{remark}\label{rem:gsos-law-nondet}
To model the various forms of semantics coalgebraically, we extend the behaviour functor $B_0$ to the `nondeterministic' functor $B\colon \Set^2\to \Set^2$ given by
\begin{equation}\label{eq:beh-whiletwo-nondet} B_\Trs X = X_\Cos^\store \qqand B_\Cos X =\Pow (X_\Cos \times \store + X_\Cos + \store).  \end{equation}
Note that $\Pow$ only appears in the writer sort. This allows us to work with weak transitions of writers. We tacitly identify a $B_0$-coalgebra $(X,\chi_0)$ with the deterministic $B$-coalgebra $(X,\chi)$ given by
\[ \chi^\Trs = \chi_0^\Trs \qqand \chi^\Cos(c)=\{\chi_0^\Cos(c)\}. \]
The GSOS law $\rho_0$ of $\Sigma$ over $B_0$ extends to a GSOS law $\rho$ of $\Sigma$ over $B$ by using $\rho_0$ elementwise, e.g.,
\begin{equation}\label{eq:rho-x-example}
 \rho^{\Cos}_X((c,U)\seqcomp\, (q,f)) = \{ d\seqcomp\, q \mid d\in U \} \cup \{ ((d \seqcomp\, q),s) \mid (d,s)\in U \} \cup
\{ ([q]_{s},s) \mid s\in U \}.
\end{equation}
Since $\rho$ carries the same information as $\rho_0$, the operational models of $\rho_0$ and $\rho$ coincide modulo the above identification of $B_0$- and $B$-coalgebras.
\end{remark}

\subsection{Trace Semantics for $\boldsymbol{\whiletwo}$}\label{sec:whiletwo-trace}
To define trace semantics for $B_0$-coalgebras, we need to introduce suitable weak transitions:

\begin{notation}\label{not:weaktrans-1}
Let $\chi\colon X\to BX$ be a $B$-coalgebra. For $p\in X_\Trs$, $c,d\in X_\Cos$ and $s\in \S$ we write
\begin{itemize}
\item $p,s\to c$ if $\chi^\Trs(p)(s)=c$, and $c\to d$ / $c\xto{s} d$ / $c\downarrow s$ if $d$ / $(d,s)$ / $s \in \chi^\Cos(c)$, respectively.
\item $c\xTo{1} d$ if there exist $n\geq 0$ and $c_0,\ldots,c_n\in X_\Cos$ with $c=c_0 \to c_1\to \cdots \to c_n=d$.
\item $c\xTo{1,s} d$ if there exists $c'\in X_\Cos$ with $c\xTo{1} c'\to d,s$;
\item $c\Downarrow^1 s$ if there exists $c'\in X_\Cos$ with $c\xTo{1} c'\downarrow s$.
\end{itemize}
The \emph{trace map} of a $B_0$-coalgebra, i.e.~a deterministic $B$-coalgebra $(X,\chi)$, is the two-sorted map
 $\trc\colon X\to ((\store^\infty)^\store,\, \store^\infty)$
defined as follows:
\begin{itemize}
\item $\trc^\Trs(p)(s)=\trc^\Cos(c)$ if $p,s\to c$;
\item $\trc^\Cos(c)=(s_1,\ldots, s_n,s)$ if there exist $c=c_0,\ldots,c_n\in X_\Cos$ with $c_i\xTo{1} c_{i+1},s_{i+1}$ ($i<n$) and $c_n\Downarrow^1 s$;
\item $\trc^\Cos(c)=(s_1,s_2,s_3,\ldots)$ if there exist $c=c_0,c_1,c_2,\ldots \in X_\Cos$ with $c_i\xTo{1} c_{i+1},s_{i+1}$ for all $i\in \bN$.
\end{itemize}
\end{notation}
We say that $p,q\in X_\Trs$ are \emph{trace equivalent} if $\trc^\Trs(p)=\trc^\Trs(q)$; similarly for $c,d\in X_\Cos$. To capture trace equivalence for $B_0$-coalgebras, we work with the extended functor $B$ \eqref{eq:beh-whiletwo-nondet} and choose a relation lifting of $B$ and a notion of weakening for $B$-coalgebras as follows:

\paragraph*{Relation lifting} We take the relation lifting $\barB_\trc\colon \Rel(\Set^2)\to \Rel(\Set^2)$ of $B$ defined by
\begin{equation}\label{eq:barB-trc} (\barB_\trc R)_\Trs = R_\Cos^\S \qqand (\barB_\trc R)_\Cos = \vec{\Pow}(R_\Cos\times \Delta_\store + R_\Cos + \Delta_\store). \end{equation}
Here $\vec{\Pow}$ is the asymmetric Egli-Milner lifting (\Cref{ex:can-lift-endofunctor}), $\Delta_\S$ is the identity relation of $\S$, and $(-)^\S$, $\times$, $+$ are the $\S$-fold power, the product and the coproduct of single-sorted relations (\Cref{sec:catback}).

\paragraph*{Weakening} The weakening $\tilde{\chi}_\trc\colon X\to BX$ of a coalgebra $\chi\colon X\to BX$
is defined by
\begin{equation}\label{eq:weakening-trc} \tilde{\chi}^\Trs_\trc = \chi^\Trs \qqand \tilde{\chi}_\trc^\Cos(c) = \{ (d,s) \mid c\xTo{1} d,s \} \cup \{ d \mid c\xTo{1} d \} \cup \{ s \mid c\Downarrow^1 s\}.  \end{equation}
\ifarx{We verify in the appendix (\Cref{lem:weakening}) that $\tilde{\chi}_\trc$
    is a weakening w.r.t.~$\barB_\trc$}{See \cite[App.~B]{stateful25-arxiv} for
    details as to why $\tilde{\chi}_\trc$ is a weakening}.

For every $B$-coalgebra $(X,\chi)$, we thus obtain from \Cref{def:coalg-sim} the generic notions of $\barB_\trc$-simulation and $\barB_\trc$-similarity on $(X,\chi,\tilde{\chi}_\trc)$. They spell out as follows:

\begin{definition}\label{def:trace-sim}
A \emph{trace simulation} on a $B$-coalgebra $(X,\chi)$ is a relation $R\monoto X\times X$ such that the following holds for all $p,q \in X_\Trs$, $c,d\in X_\Cos$ and $s \in
    \store$:
    \begin{enumerate}
    \item If $R_\Trs(p,q)$ and $p,s\to c$ then there exists $d$ such that $q,s \to d$ and $R_{\Cos}(c,d)$.
    \item If $R_\Cos(c,d)$ and $c\to c'$ then there exists $d'$ such that $d\xTo{1} d'$ and $R_\Cos(c',d')$.
    \item\label{trc-sim-cond} If $R_\Cos(c,d)$ and $c\to c',s$ then there exists $d'$ such that $d\xTo{1,s} d'$ and $R_\Cos(c',d')$.
    \item If $R_\Cos(c,d)$ and $c\downarrow s$ then $d\Downarrow^1 s$.
    \end{enumerate}
\emph{Trace similarity} is the greatest trace simulation.
\end{definition}
For $B_0$-coalgebras, this notion of simulation captures precisely trace equivalence:
\begin{proposition}\label{prop:trc-sim}
For every $B_0$-coalgebra, trace similarity coincides with trace equivalence.
\end{proposition}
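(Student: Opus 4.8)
The plan is to establish the two inclusions between trace similarity and trace equivalence separately. Throughout, fix a $B_0$-coalgebra $(X,\chi_0)$, regarded as a deterministic $B$-coalgebra $(X,\chi)$ as in \Cref{rem:gsos-law-nondet}. The decisive structural fact is that $\chi$ is deterministic, so that for each writer $c\in X_\Cos$ the sequence obtained by iterating $\chi^\Cos$ from $c$ is uniquely determined; in particular the ``first observable event'' of $c$ --- a termination $c\downarrow t$, a labelled step $c\xTo{1,t}c'$ reached after finitely many silent steps with $t,c'$ uniquely determined, or an infinite silent stall --- is fixed by $c$, and $\trc^\Cos$ is thus a well-defined (partial) map. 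From this one reads off the unfolding identities $\trc^\Cos(c)=\trc^\Cos(c')$ if $c\to c'$, $\trc^\Cos(c)=(t)\cdot\trc^\Cos(c')$ if $c\xto{t}c'$, and $\trc^\Cos(c)=(t)$ if $c\downarrow t$. These identities, together with determinism and the fact that traces are always non-empty, are what drives both inclusions: in particular, $\trc^\Cos(c)=\trc^\Cos(d)$ forces the first observable events of $c$ and $d$ to be of the same kind.

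For $\trc$-equivalence $\subseteq$ $\trc$-similarity, I would verify that the two-sorted relation $R$ with $R_\Trs(p,q)\iff\trc^\Trs(p)=\trc^\Trs(q)$ and $R_\Cos(c,d)\iff\trc^\Cos(c)=\trc^\Cos(d)$ is a trace simulation in the sense of \Cref{def:trace-sim}; it is then contained in the greatest one, i.e.\ in trace similarity. Condition~(1) is immediate, since reader transitions are deterministic: if $p,s\to c$ and $R_\Trs(p,q)$, the unique $d$ with $q,s\to d$ satisfies $\trc^\Cos(d)=\trc^\Trs(q)(s)=\trc^\Trs(p)(s)=\trc^\Cos(c)$. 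For the writer conditions, assume $\trc^\Cos(c)=\trc^\Cos(d)$: if $c\to c'$ take $d'=d$ (so $d\xTo{1}d$) and use the first unfolding identity; if $c\xto{s}c'$, then $\trc^\Cos(d)=\trc^\Cos(c)$ has first entry $s$, so by determinism the run of $d$ performs some silent steps followed by a labelled $s$-step to a unique $d'$, whence $d\xTo{1,s}d'$ and, by the remaining identity, $\trc^\Cos(d')=\trc^\Cos(c')$; if $c\downarrow s$, then $\trc^\Cos(d)=(s)$ forces $d\Downarrow^1 s$.

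For $\trc$-similarity $\subseteq$ $\trc$-equivalence, let $R$ be an \emph{arbitrary} trace simulation; it suffices to prove $R_\Cos(c,d)\implies\trc^\Cos(c)=\trc^\Cos(d)$, since the reader case then follows from condition~(1) and reader-determinism. Assuming $R_\Cos(c,d)$, I would follow the observable run of $c$ and repeatedly apply conditions~(2)--(4): each silent $c$-step is absorbed into a weak $d$-transition while staying $R$-related, a labelled step $c''\xto{s}c'''$ is matched by some $d''\xTo{1,s}d'''$ with $R_\Cos(c''',d''')$, and a termination $c''\downarrow s$ is matched by $d''\Downarrow^1 s$. Determinism then upgrades these matches: $d''\xTo{1,s}d'''$ pins the first observable event of $d''$ to a labelled $s$-step whose unique successor must be $d'''$, and $d''\Downarrow^1 s$ pins it to termination with state $s$; hence $\trc^\Cos(c)$ and $\trc^\Cos(d)$ agree on their first entry and the recursion continues on the $R$-related pair $(c''',d''')$. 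When $\trc^\Cos(c)$ is finite this is an induction on its length; when it is infinite it is a coinductive argument producing matching infinite chains $c\xTo{1,s_1}c_1\xTo{1,s_2}c_2\cdots$ and $d\xTo{1,s_1}d_1\xTo{1,s_2}d_2\cdots$ and concluding $\trc^\Cos(c)=\trc^\Cos(d)=(s_1,s_2,\dots)$. Combining the two inclusions also shows, a~posteriori, that trace similarity is symmetric.

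The step I expect to require the most care is the infinite-trace case of the second inclusion: one must argue that the iterated one-step matching genuinely yields an \emph{infinite} matching run on the $d$-side, i.e.\ that determinism prevents $d$ from terminating or stalling silently at any finite stage, so that $\trc^\Cos(d)$ is the full infinite sequence and not a proper prefix. Everything else is routine bookkeeping around the unfolding identities for $\trc^\Cos$, which themselves rest entirely on the determinism of $B_0$-coalgebras.
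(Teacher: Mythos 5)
Your proof is correct and takes essentially the same route as the paper's: the paper's own proof consists of just two sentences asserting that any trace simulation relates only trace-equal writers and that trace equivalence is itself a trace simulation, which are precisely the two inclusions you verify in detail (your use of determinism to extract the unfolding identities for $\trc^\Cos$ and to pin down the matched runs is exactly the routine content the paper leaves implicit). The only caveat, which you share with the paper (it presents $\trc$ as a total map), is that your arguments for both inclusions tacitly assume no writer silently diverges, despite your aside that $\trc^\Cos$ is only a partial map; under the paper's reading this is harmless.
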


\subsection{Cost Semantics for $\boldsymbol{\whiletwo}$}\label{sec:whiletwo-stepcounting}
Given a $B_0$-coalgebra $\chi\colon X\to B_0X$, we define a two-sorted map
$\scn\colon X\to ((\bN\times \store+1)^\store,\,\bN\times \store+1)$:
\[
\scn^{\Trs}(p)(s) = \scn^\Cos(c) \quad\text{if}\quad p,s\to c,\qquad
\scn^{\Cos}(c) =
\begin{cases}
(n,s) & \text{if } \trc^{\Cos}(c) = (s_1,\ldots, s_n,s)\in \store^{+}, \\
* & \text{if } \trc^{\Cos}(c) \in \store^\omega.
\end{cases}
\]
We say that $p,q\in X_\Trs$ are \emph{cost equivalent} if $\scn^\Trs(p)=\scn^\Trs(q)$; similarly for $c,d\in X_\Cos$. To capture cost equivalence for $B_0$-coalgebras, we work with a relation lifting and weakening chosen as follows:

\paragraph*{Relation lifting} We take the relation lifting $\barB_\scn$ of $B$ defined below, where $\top_\store = \store\times\store$:
\begin{equation}\label{eq:barB-scn} (\barB_\scn R)_\Trs = R_\Cos^\S \qqand (\barB_\scn R)_\Cos = \vec{\Pow}(R_\Cos\times \top_\store + R_\Cos + \Delta_\store). \end{equation}
Thus, in comparison to $\barB_\trc$, the first occurrence of $\Delta_\S$ in the writer sort is replaced with $\top_\store$.

\paragraph*{Weakening} We take the same weakening as for trace semantics:
$\tilde{\chi}_\scn=\tilde{\chi}_\trc$.
\ifarx{We verify in the appendix (\Cref{lem:weakening-scn}) that
    $\tilde{\chi}_\scn$ is a weakening with respect to $\barB_\scn$}{See \cite[App.~B]{stateful25-arxiv} for
    details as to why $\tilde{\chi}_\scn$ is a weakening with respect to $\barB_\scn$}.

Given a $B$-coalgebra $(X,\chi)$, the notion of $\barB_\scn$-simulation on $(X,\chi,\tilde{\chi}_\scn)$ now gives:
\begin{definition}\label{def:cost-sim}
A \emph{cost simulation} on a $B$-coalgebra $(X,\chi)$ is a relation $R\monoto X\times X$ such that the following holds for all $p,q \in X_\Trs$, $c,d\in X_\Cos$ and $s \in
    \store$:
    \begin{enumerate}
    \item If $R_\Trs(p,q)$ and $p,s\to c$ then there exists $d$ such that $q,s \to d$ and $R_{\Cos}(c,d)$.
    \item If $R_\Cos(c,d)$ and $c\to c'$ then there exists $d'$ such that $d\xTo{1} d'$ and $R_\Cos(c',d')$.
    \item\label{scn-sim-cond} If $R_\Cos(c,d)$ and $c\xto{s} c'$ then there exist $d',s'$ such that $d\xTo{1,s'} d'$ and $R_\Cos(c',d')$.
    \item If $R_\Cos(c,d)$ and $c\downarrow s$ then $d\Downarrow^1 s$.
    \end{enumerate}
\emph{Cost similarity} is the greatest cost simulation.
\end{definition}
The only difference to trace simulations lies in condition \ref{scn-sim-cond}: due the more permissive lifting, it suffices to match $c\xto{s} c'$ with a weak transition $d\xTo{1,s'} d'$ where not necessarily $s=s'$. However, cost similar terms still produce the same \emph{number} of states, so we get:

\begin{proposition}\label{prop:scn-sim}
For every $B_0$-coalgebra, cost similarity coincides with cost equivalence.
\end{proposition}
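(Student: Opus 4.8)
The plan is to establish the two inclusions separately, following the pattern of the analogous \Cref{prop:trc-sim}. Throughout I would fix the given $B_0$-coalgebra, identified with the deterministic $B$-coalgebra $(X,\chi)$ as in \Cref{rem:gsos-law-nondet}. The first step is to record the structural consequence of determinism: every writer $c\in X_\Cos$ has a \emph{unique} run, which decomposes canonically into an alternation of finitely many silent steps and single state outputs, ending --- in the terminating case --- in a single $\downarrow$-step. Hence $\scn^\Cos(c)=(n,s)$ holds exactly when this run performs $n$ state outputs and then terminates in $s$, and $\scn^\Cos(c)=\ast$ exactly when it performs infinitely many. From this I would extract three bookkeeping facts that drive everything else: a silent step $c\to c'$ gives $\scn^\Cos(c)=\scn^\Cos(c')$; a state output $c\xto{s}c'$ gives ``$\scn^\Cos(c)=(n+1,t)$ iff $\scn^\Cos(c')=(n,t)$'' and ``$\scn^\Cos(c)=\ast$ iff $\scn^\Cos(c')=\ast$''; and $c\downarrow s$ implies $\scn^\Cos(c)=(0,s)$, while conversely $\scn^\Cos(c)=(0,s)$ implies $c\Downarrow^1 s$.

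For ``cost equivalence is contained in cost similarity'', I would show that the relation $E$ with $E_\Trs(p,q)\iff\scn^\Trs(p)=\scn^\Trs(q)$ and $E_\Cos(c,d)\iff\scn^\Cos(c)=\scn^\Cos(d)$ is itself a cost simulation; since cost similarity is the greatest one, this suffices. Condition~(1) is immediate from the definition of $\scn^\Trs$ together with determinism of the reader map. Conditions~(2) and~(4) follow from the bookkeeping facts (in~(2) one may simply take $d'=d$). Condition~(3) is the interesting case: given $c\xto{s}c'$ and $\scn^\Cos(c)=\scn^\Cos(d)$, I would read off from $\scn^\Cos(d)$ that $d$'s run performs at least one state output, let $d\xTo{1,s'}d'$ be its first one, and use the state-output bookkeeping fact to get $\scn^\Cos(d')=\scn^\Cos(c')$; the fact that $s'$ need not coincide with $s$ is exactly why the lifting $\barB_\scn$ (with $\top_\store$ in place of $\Delta_\store$) is the appropriate choice here.

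For the converse, ``cost similarity is contained in cost equivalence'', I would take an arbitrary cost simulation $R$ and prove $R\subseteq E$. The crucial auxiliary step is a matching lemma obtained by iterating the simulation conditions along a block of silent steps: if $R_\Cos(c,d)$ and $c\xTo{1,s}c'$, then there exist $d',s'$ with $d\xTo{1,s'}d'$ and $R_\Cos(c',d')$ (iterate~(2), then apply~(3)), and if $R_\Cos(c,d)$ and $c\Downarrow^1 s$, then $d\Downarrow^1 s$ (iterate~(2), then apply~(4)). Given $R_\Cos(c,d)$, I would then walk along the unique run of $c$: if it is finite with state outputs $s_1,\dots,s_n$ and terminal state $s$, applying the matching lemma $n$ times and the termination clause once yields a chain $d=d_0\xTo{1,s_1'}d_1\xTo{1,s_2'}\cdots\xTo{1,s_n'}d_n\Downarrow^1 s$; by determinism this chain \emph{is} the entire run of $d$, so $\scn^\Cos(d)=(n,s)=\scn^\Cos(c)$. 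If the run of $c$ is infinite, iterating the matching lemma produces an infinite chain of state outputs out of $d$, forcing $\scn^\Cos(d)=\ast=\scn^\Cos(c)$. The reader case $R_\Trs(p,q)$ then follows from the writer case via condition~(1), applied to the unique transition out of $p$ at each input state.

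The step I expect to be the main obstacle --- and the one to handle with care --- is precisely this last claim in the converse direction: that the chain of $d$-transitions manufactured by matching $c$'s run is the \emph{complete} run of $d$, not merely an initial fragment. This is what justifies reading off $\scn^\Cos(d)$ from it, and it hinges on determinism of $B_0$-coalgebras, since each $\xTo{1,s'}$-step consumes exactly one state output of $d$'s unique run, so that consecutive such steps tile the run without gaps or overlaps; the argument genuinely fails for nondeterministic $B$-coalgebras, consistent with the restriction in the statement. Apart from this, everything is a routine adaptation of the trace case (\Cref{prop:trc-sim}); the one substantive difference is that condition~(3) here tolerates a mismatched intermediate state, which is harmless because $\scn^\Cos$ retains only the \emph{number} of state outputs and the terminal state, both of which the construction preserves.
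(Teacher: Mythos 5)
Your proof is correct and follows essentially the same route as the paper's: both establish the two inclusions by checking that cost equivalence is itself a cost simulation and that any cost simulation forces equal costs (the paper's version states this in two sentences, leaving the determinism-based run-matching you spell out as routine). The extra care you take about the matched weak steps tiling the unique run of $d$ is exactly the point the paper's ``readily verified'' glosses over, so nothing is missing or different in substance.
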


\subsection{Termination Semantics for $\boldsymbol{\whiletwo}$}
\label{sec:terminationsemantics}

Given a $B_0$-coalgebra $\chi\colon X\to B_0X$, we define the two-sorted map
$\trm\colon X\to ((\store+1)^\store,\,\store+1)$
by
\[
\trm^{\Trs}(p)(s) = \trm^\Cos(c) \quad\text{if}\quad p,s\to c,\qquad
 \trm^{\Cos}(c) =
\begin{cases}
s & \text{if } \trc^{\Cos}(c) = (s_1,\ldots,s_n,s), \\
* & \text{if } \trc^\Cos(c)\in \S^\omega.
\end{cases}
\]
We say that $p,q\in X_\Trs$ are \emph{termination equivalent} if $\trm^\Trs(p)=\trm^\Trs(q)$; similarly for $c,d\in X_\Cos$. We capture termination equivalence via relation liftings and weakenings as follows:

\paragraph*{Relation lifting} We take the same relation lifting as for cost semantics:
\begin{equation}\label{eq:barB-trm} \barB_\trm = \barB_\scn.\end{equation}

\paragraph*{Weakening} We extend the weak transitions $\xTo{1}$, $\Downarrow^1$ of \Cref{not:weaktrans-1} to a more liberal version:
\begin{notation}\label{not:weak-trans-2}
Given a coalgebra $\chi\colon X\to BX$ and $c,d\in X_\Cos$, $s\in \store$, we write
\begin{itemize}
\item $c\xTo{2} d$ and $c\xTo{2,s} d$ if there exist $n\geq 0$ and $c=c_0,\ldots,c_n=d\in X_\Cos$ such that, for each $i<n$,
\[ c_i\to c_{i+1} \qquad \text{or}\qquad  \exists s_{i+1}.\, c_i\xto{s_{i+1}} c_{i+1}.\]
\item $c\Downarrow^2 s$ if there exists $c'\in X_\Cos$ such that $c\xTo{2} c' \downarrow s$.
\end{itemize}
\end{notation}
The weak transitions $\xTo{2}$, $\Downarrow^2$ thus completely ignore the intermediate states produced in a sequence of strong transitions. We define the weakening $\tilde{\chi}_\trm\colon X\to BX$ of $\chi$ by
\begin{equation}\label{eq:weakening-trm} \tilde{\chi}_\trm^\Trs = \chi^\Trs \qqand \tilde{\chi}_\trm^\Cos(c) = \{ (d,s) \mid c\xTo{2,s} d\} \cup \{ d \mid c\xTo{2} d\} \cup \{ s \mid c\Downarrow^2 s\}.  \end{equation}
\ifarx{We verify in the appendix (\Cref{lem:weakening-trm}) that
    $\tilde{\chi}_\trm$ is a weakening with respect to $\barB_\trm$}{See \cite[App.~B]{stateful25-arxiv} for
    details as to why $\tilde{\chi}_\trm$ is a weakening with respect to $\barB_\trm$}.

For every $B$-coalgebra $(X,\chi)$, the notion of $\barB_\trm$-simulation on $(X,\chi,\tilde{\chi}_\trm)$ then corresponds to:
\begin{definition}\label{def:term-sim}
A \emph{termination simulation} on a $B$-coalgebra $(X,\chi)$ is a relation $R\monoto X\times X$ such that the following holds for all $p,q \in X_\Trs$, $c,d\in X_\Cos$ and $s \in
    \store$:
    \begin{enumerate}
    \item If $R_\Trs(p,q)$ and $p,s\to c$ then there exists $d$ such that $q,s \to d$ and $R_{\Cos}(c,d)$.
    \item If $R_\Cos(c,d)$ and $c\xto{s} c'$ or $c\to c'$, then there exists $d'$ such that $d\xTo{2} d'$ and $R_\Cos(c',d')$.
    \item If $R_\Cos(c,d)$ and $c\downarrow s$ then $d\Downarrow^2 s$.
    \end{enumerate}
\emph{Termination similarity} is the greatest termination simulation and is denoted by $\preceq_\trm$.
\end{definition}
Since termination similarity only observes terminating states, we get:
\begin{proposition}\label{prop:trm-sim}
For every $B_0$-coalgebra, the relation $\preceq_\trm\cap \succeq_\trm$ equals termination equivalence.
\end{proposition}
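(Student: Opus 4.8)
The plan is to establish separately the two inclusions $E \subseteq {\preceq_\trm} \cap {\succeq_\trm}$ and ${\preceq_\trm} \cap {\succeq_\trm} \subseteq E$, where $E$ denotes termination equivalence (on both sorts) and $\succeq_\trm$ is the converse of $\preceq_\trm$. Both reduce to two observations about an arbitrary $B_0$-coalgebra $(X,\chi)$, which I would record first and which hinge on $B_0$-coalgebras being deterministic. \emph{(i) Single-step invariance}: if $c \to c'$ or $c \xto{s} c'$, then $\trm^\Cos(c) = \trm^\Cos(c')$ --- by determinism the unique computation of $c'$ is that of $c$ with its first strong step removed, and one step contributes at most one intermediate state, which $\trm^\Cos$ discards; by transitivity of $\xTo{2}$ this extends to $c \xTo{2} c' \implies \trm^\Cos(c) = \trm^\Cos(c')$. \emph{(ii) Termination via $\Downarrow^2$}: $c \Downarrow^2 s$ iff $\trm^\Cos(c) = s$ --- for the forward direction a witnessing path $c \xTo{2} c' \downarrow s$ is, by determinism, an initial segment of the unique computation of $c$, so that computation terminates with result $s$; for the converse, the witnesses for $\trc^\Cos(c) \in \store^+$ with last entry $s$ form a finite sequence of $\xTo{1}$- and $\xTo{1,s}$-steps (each of which is in particular a $\xTo{2}$-move) followed by a $\downarrow s$-step.

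For ${\preceq_\trm} \cap {\succeq_\trm} \subseteq E$, the crux is the lemma: \emph{for every termination simulation $R$, if $R_\Cos(c,d)$ and $\trm^\Cos(c) = s \neq *$, then $\trm^\Cos(d) = s$}. To prove it, write the finite computation of $c$ as a chain of strong steps $c = f_0, f_1, \ldots, f_m$, each $f_i \to f_{i+1}$ or $f_i \xto{\cdot} f_{i+1}$, with $f_m \downarrow s$; transport this chain along $R$ by repeated application of condition~(2) of \Cref{def:term-sim}, obtaining $d = g_0 \xTo{2} g_1 \xTo{2} \cdots \xTo{2} g_m$ with $R_\Cos(f_m,g_m)$; then condition~(3) gives $g_m \Downarrow^2 s$, hence $d \xTo{2} g_m \Downarrow^2 s$, i.e.\ $d \Downarrow^2 s$, whence $\trm^\Cos(d) = s$ by~(ii). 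The reader counterpart follows from condition~(1), determinism, and $\trm^\Trs(p)(s) = \trm^\Cos(\chi^\Trs(p)(s))$. Applying the lemma to the termination simulation $\preceq_\trm$ and to both $(c,d)$ and $(d,c)$ (the latter because $(c,d) \in {\succeq_\trm}$), together with a small case distinction on whether $\trm^\Cos(c) = *$, yields $\trm^\Cos(c) = \trm^\Cos(d)$; the reader case is argued pointwise in the input store.

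For $E \subseteq {\preceq_\trm} \cap {\succeq_\trm}$ it suffices, since $\preceq_\trm$ is the greatest termination simulation and $E$ is symmetric, to check that $E$ is a termination simulation --- then $E \subseteq {\preceq_\trm}$ and $E \subseteq {\succeq_\trm}$. Condition~(1) is immediate from determinism and $\trm^\Trs(p)(s) = \trm^\Cos(\chi^\Trs(p)(s))$. For condition~(2), take the degenerate witness $d' = d$ (a zero-length $\xTo{2}$-transition) and invoke~(i): $\trm^\Cos(c') = \trm^\Cos(c) = \trm^\Cos(d)$. For condition~(3): $c \downarrow s$ forces $\trm^\Cos(c) = s$, hence $\trm^\Cos(d) = s$, hence $d \Downarrow^2 s$ by~(ii). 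Together the two inclusions give ${\preceq_\trm} \cap {\succeq_\trm} = E$.

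The main obstacle I anticipate is making observations~(i) and~(ii) watertight: this needs a careful unwinding of the definitions of $\trc^\Cos$, $\xTo{1}$, $\xTo{1,s}$, $\xTo{2}$ and $\Downarrow^2$, together with repeated use of determinism to identify a weak-transition sequence with the unique computation it runs through --- including the degenerate zero-length cases, which are easy to overlook. Once (i) and (ii) are in place, the transport induction in the first inclusion and the verification of the simulation conditions in the second are routine.
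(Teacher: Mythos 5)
Your proposal is correct and takes essentially the same route as the paper: one inclusion by checking that termination equivalence is a symmetric termination simulation, the other by transporting a terminating computation through the simulation to get $c\Downarrow^2 s \iff d\Downarrow^2 s$ and identifying $\trm^\Cos(c)=s$ with $c\Downarrow^2 s$ in a deterministic coalgebra. Your transport lemma is precisely the content of the paper's weakening lemma (\Cref{lem:weakening-trm}), which the paper's short argument invokes implicitly rather than re-proving.
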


\subsection{Compositionality of $\boldsymbol{\whiletwo}$ and $\boldsymbol{\while}$}
We have shown that trace, cost, and termination equivalence for $B_0$-coalgebras correspond to coalgebraic similarity for suitable choices of relations liftings and weakenings. We now instantiate the general compositionality result for abstract GSOS (\Cref{thm:congruence-abstract-gsos}) to the AOS
\[\O_\star = (\Sigma,\ol{\Sigma},B,\barB_\star,\rho,\gamma,\tilde{\gamma}_\star)\qquad (\star\in \{\trc,\scn,\trm\}),\]
where the functor $B$ is ordered by $f\preceq g$ iff $f^\Trs=g^\Trs$ and $\forall z\in Z_\Cos.\, f^\Cos(z)\seq g^\Cos(z)$ for $f,g\colon Z\to BX$, and moreover $\rho$ is the extended GSOS law \eqref{eq:rho-x-example} for \whiletwo and $\gamma$ is its operational model. This yields:
\begin{theorem}[Compositionality of \whiletwo]\label{thm:whiletwo-cong}
Trace equivalence, cost equivalence and termination equivalence form a congruence on the operational model of \whiletwo.
\end{theorem}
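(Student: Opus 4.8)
The plan is to obtain \Cref{thm:whiletwo-cong} as three applications of the general compositionality theorem (\Cref{thm:congruence-abstract-gsos}), one for each $\star\in\{\trc,\scn,\trm\}$, to the abstract operational setting $\O_\star=(\Sigma,\ol\Sigma,B,\barB_\star,\rho,\gamma,\tilde\gamma_\star)$ with $B$ ordered as above; that $(B,\preceq)$ is an ordered functor and that $\tilde\gamma_\star$ is a weakening with respect to $\barB_\star$ has been established in \Cref{sec:whiletwo-trace,sec:whiletwo-stepcounting,sec:terminationsemantics} and the appendix, so $\O_\star$ is a well-formed AOS. Once conditions~\ref{comp-cond1}--\ref{comp-cond3} of \Cref{thm:congruence-abstract-gsos} are verified, we obtain that $\barB_\star$-similarity on $(\mu\Sigma,\gamma,\tilde\gamma_\star)$ -- which by \Cref{def:coalg-sim} is precisely trace, cost, resp.\ termination similarity of \Cref{def:trace-sim,def:cost-sim,def:term-sim} -- is a $\ol\Sigma$-congruence on the initial algebra $(\Tr,\Co)=\mu\Sigma$. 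For $\star\in\{\trc,\scn\}$ this already finishes the argument: by \Cref{prop:trc-sim} and \Cref{prop:scn-sim} these similarities coincide on the operational model of \whiletwo with trace resp.\ cost equivalence, which are hence congruences. For $\star=\trm$ the theorem gives that the preorder $\preceq_\trm$ is a $\ol\Sigma$-congruence; the canonical lifting $\ol\Sigma$ of a polynomial functor is symmetric, so the converse relation $\succeq_\trm$ is a $\ol\Sigma$-congruence as well, and an intersection of two $\ol\Sigma$-congruences is again one, whence $\preceq_\trm\cap\succeq_\trm$ is a congruence; by \Cref{prop:trm-sim} this intersection equals termination equivalence. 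Thus it remains to check the three conditions for each $\O_\star$.

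Condition~\ref{comp-cond1} -- $\barB_\star$ up-closed and laxly preserving composition and identities -- is checked sortwise. On the reader sort $(\barB_\star R)_\Trs=R_\Cos^\S$ while $B$ is ordered by equality there, so only lax preservation of composition and identities is at stake, which is immediate for the $\S$-power. On the writer sort $(\barB_\star R)_\Cos$ is $\vec{\Pow}$ applied to $R_\Cos\times E + R_\Cos + \Delta_\store$, where $E=\Delta_\store$ for $\star=\trc$ and $E=\top_\store$ for $\star\in\{\scn,\trm\}$ (recall $\barB_\trm=\barB_\scn$ by \eqref{eq:barB-trm}); here up-closedness reduces to that of $\vec{\Pow}$, which holds since $\Pow$ is ordered by inclusion, and lax preservation follows because $\Delta_\store$ and $\top_\store$ satisfy the required inclusions, products and coproducts of liftings inherit them, and $\vec{\Pow}R\bullet\vec{\Pow}S\seq\vec{\Pow}(R\bullet S)$ and $\Delta_{\Pow X}\seq\vec{\Pow}\Delta_X$ are elementary. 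The shortcut in \Cref{rem:congruence-abstract-gsos} does not apply here, as $\barB_\star$ is non-canonical due to $\vec{\Pow}$ and $\top_\store$.

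Condition~\ref{comp-cond2}, liftability of $\rho$, requires each $\rho_X$ to carry $\ol\Sigma(R\times\barB_\star R)$-related arguments to $\barB_\star\,\ol\Sigma^{\star}R$-related results, and is verified by going through the finitely many defining clauses of $\rho$ (spelled out in \Cref{rem:gsos-law-nondet} and the appendix). The uniform reason is that $\rho$ merely reassembles its arguments into a $\Sigma^{\star}$-term, together with $\store$-indexed data that does not depend on the relation: the coproduct shape of the writer-sort lifting forces related behaviours to be matched summand by summand, and the $\store$-components match because they are either literally equal ($\Delta_\store$) or unconstrained ($\top_\store$). For instance, from $R_\Cos(d,d')$ and $R_\Trs(q,q')$ one gets $(\ol\Sigma R)_\Cos(d\seqcomp q,\, d'\seqcomp q')$ and $(\ol\Sigma R)_\Cos(\run{s}{q},\run{s}{q'})$, which covers clause~\eqref{eq:rho-x-example} and $\rho^\Trs_X(p\seqcomp q)=\lambda s.\,\run{s}{p}\seqcomp q$; the remaining clauses are analogous or trivial.

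Condition~\ref{comp-cond3} -- that $(\mu\Sigma,\ini,\tilde\gamma_\star)$ is a lax $\rho$-bialgebra -- is the core of the proof. On the reader sort, diagram~\eqref{eq:lax-bialgebra-fo} commutes strictly, because $\rho^\Trs$ does not inspect the behaviour components of its arguments and $\tilde\gamma^\Trs_\star=\gamma^\Trs$; so that part is literally the reader-sort instance of~\eqref{eq:gamma-def}. On the writer sort we must show, for each writer constructor, that the $\tilde\gamma_\star$-behaviour of the built term contains whatever $\rho^\Cos$ produces from the $\tilde\gamma_\star$-behaviours of its operands. For $\ret_s$, $s.c$ and $\run{s}{p}$ this is immediate, as the relevant one-step transition is already a weak transition. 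The load-bearing case is $c\seqcomp q$: using the writer rules for $-\seqcomp q$, a weak computation of $c$ lifts to one of $c\seqcomp q$ -- $c\xTo{1}d$ gives $c\seqcomp q\xTo{1}d\seqcomp q$, $c\xTo{1,s}d$ gives $c\seqcomp q\xTo{1,s}d\seqcomp q$, and $c\Downarrow^1 s$ gives $c\seqcomp q\xTo{1,s}\run{s}{q}$ via the rule $c'\downarrow s\Rightarrow c'\seqcomp q\xto{s}\run{s}{q}$ -- and these three families are exactly the elements $\rho^\Cos$ emits in~\eqref{eq:rho-x-example}. The cost and termination variants are obtained verbatim, replacing $\xTo{1},\Downarrow^1$ by the coarser $\xTo{2},\Downarrow^2$ and forgetting the emitted states as appropriate. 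I expect this last verification to be the main obstacle -- though still a routine one -- since it is precisely where the reader/writer design earns its keep: the auxiliary writer constructors $\run{s}{q}$ and $s.c$ are what allow a weak transition chain to be pushed under sequential composition without leaving the syntax, the move that the naive one-sorted modelling of \Cref{sec:challenge} cannot perform.
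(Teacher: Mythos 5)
Your proposal is correct and follows essentially the same route as the paper: instantiate \Cref{thm:congruence-abstract-gsos} at the three AOS's $\O_\trc,\O_\scn,\O_\trm$, verify up-closure/lax preservation, liftability (parametric polymorphism of the rules), and the lax bialgebra condition (soundness of the writer rules, notably for $-\seqcomp q$, under weak transitions), then transfer to trace/cost/termination equivalence via \Cref{prop:trc-sim,prop:scn-sim,prop:trm-sim}, with the reversal-and-intersection step for $\preceq_\trm$ exactly as in the paper's appendix. The only slip is phrasing: for cost semantics the weakening is $\tilde{\gamma}_\scn=\tilde{\gamma}_\trc$, so its lax-bialgebra check is literally the trace one, and the replacement by $\xTo{2},\Downarrow^2$ is needed only for termination — a harmless imprecision.
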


\begin{proof}[Proof sketch]
 We only need to verify the conditions (1)--(3) of \Cref{thm:congruence-abstract-gsos}. We illustrate the case of trace semantics, the other two being almost identical. Put $\barB=\barB_\trc$ and $\tilde{\gamma}=\tilde{\gamma}_\trc$.
\begin{enumerate}
\item Up-closure and preservation of composition and identities by $\barB$ follow by an easy computation.
\item We need to show that the GSOS law $\rho$ is liftable, that is, for each $R\monoto X\times X$ in $\Set^2$ the map
$\rho_X\colon \Sigma(X\times BX)\to B\Sigmas X$ is a relation morphism from $\ol{\Sigma}(R\times \barB R)$ to $\barB\ol{\Sigma}^{\star} R$. This boils down to observing that the rules of \whiletwo (\Cref{fig:comp-rules}) are parametrically polymorphic. Let us give the argument for the operator~`$\seqcomp$' of arity $\Cos\times \Trs\to \Trs$. Consider two elements $(c,U) \seqcomp\, (q,f)$ and $(c',U') \seqcomp\, (q',f')$ of $\Sigma_\Cos(X\times BX)$ that are related in $(\ol{\Sigma}(R\times \barB R))_\Cos$; in particular, $(U,U')\in \vec{\Pow}(R_\Cos\times \Delta_\S + R_\Cos + \Delta_\S)$ and $(q,q')\in R_\Trs$. Since
\begin{align*}
\rho^{\Cos}_X((c,U)\seqcomp\, (q,f)) &= \{ d\seqcomp\, q \mid d\in U \} \cup \{ ((d \seqcomp\, q),s) \mid (d,s)\in U \} \cup
\{ ([q]_{s},s) \mid s\in U \},\\
\rho^{\Cos}_X((c',U')\seqcomp\, (q',f')) &= \{ d'\seqcomp\, q' \mid d'\in U' \} \cup \{ ((d' \seqcomp\, q'),s) \mid (d',s)\in U' \} \cup
\{ ([q']_{s},s) \mid s\in U' \}
\end{align*}
by \eqref{eq:rho-x-example}, it immediately follows that
\begin{equation}\label{eq:proof-goal-sketch} (\rho_X^\Cos( (c,U)\seqcomp\, (q,f) ),\, \rho_X^\Cos((c',U')\seqcomp\, (q',f') )) \in \vec{\Pow}((\ol{\Sigma}^{*}R)_\Cos\times \Delta_\S + (\ol{\Sigma}^{*}R)_\Cos + \Delta_\S)=(\barB \ol{\Sigma}^\star R)_\Cos,
\end{equation}
as required. The other operators are treated analogously.
\item Finally, we show that $(\mu\Sigma,\gamma,\tilde{\gamma})$ is a lax $\rho$-bialgebra. By \eqref{eq:lax-bialgebra-fo} and the definition of $\tilde{\gamma}$, this means that the writer rules
of \Cref{fig:comp-rules} remain sound in the operational model $\gamma\colon (\Tr,\Co)\to B(\Tr,\Co)$ when strong transitions $\to$, $\downarrow$ are replaced with corresponding weak transitions $\xTo{1}$, $\Downarrow^1$. For illustration, let us consider two of the writer rules for sequential composition and their weak versions:
\[
    \inference{c \to d}{c\seqcomp q
      \to d\seqcomp q}
\qquad
  \inference{c \xto{s} d}{c\seqcomp q
      \xto{s} d\seqcomp q}
\qquad
    \inference{c \xTo{1} d}{c\seqcomp q
      \xTo{1} d\seqcomp q}
\qquad
 \inference{c \xTo{1,s} d}{c\seqcomp q
      \xTo{1,s} d\seqcomp q}
 \]
Soundness of the third and fourth rule means that if the premise holds, then the conclusion holds. The third rule is sound because it follows by repeated application of the first one. The fourth rule is sound because it emerges from the third rule followed by one application of the second rule.\qedhere
\end{enumerate}
\end{proof}
It remains to relate the compositionality of \whiletwo to that of the original language \while. To this end, we make the key observation that the embedding of \while into \whiletwo is \emph{semantics-preserving}: the trace of an \while-term is the same regardless of whether it is executed as a program of \while or as a reader of \whiletwo. In the following we denote the trace maps of \while and \whiletwo by $\trc_\while\colon \Progs\to (\S^\infty)^\S$ and $\trc_{\whiletwo}\colon (\Tr,\Co)\to ((\S^\infty)^\S, \S^\infty)$; recall that  $\Progs=\Tr$. Similarly for the cost and termination maps.

\begin{theorem}\label{thm:semantics-pres}
The trace, cost, and termination semantics for \while and \whiletwo coincide:
\[\trc_\while = \trc^\Trs_{\whiletwo},\qquad  \scn_\while = \scn^\Trs_{\whiletwo},\qquad \trm_\while = \trm^\Trs_{\whiletwo}.\]
\end{theorem}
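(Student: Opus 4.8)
The plan is to reduce all three claimed equalities to the single identity $\trc^\Cos_{\whiletwo}(\run{s}{p}) = \trc_\while(p)(s)$, valid for all \while-terms $p$ and all $s\in\store$. First I would note that $\scn$ and $\trm$ — both for \while and for the writer sort of \whiletwo — are obtained from the respective trace maps by postcomposition with the same fixed truncation maps $\store^\infty\to\bN\times\store+1$ and $\store^\infty\to\store+1$, so it suffices to handle the trace case. Next, using $\Progs=\Tr$ and the fact that the writer rule turning a reader transition $p,s\to c$ into the single silent step $\run{s}{p}\to c$ leaves traces unchanged, I would obtain $\trc^\Trs_{\whiletwo}(p)(s)=\trc^\Cos_{\whiletwo}(c)=\trc^\Cos_{\whiletwo}(\run{s}{p})$, reducing the theorem to the displayed identity.

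The heart of the proof is a step-by-step correspondence between a single \while-transition and a bounded run of \whiletwo-writers, which I would establish by induction on the derivation of the \while-transition: (i) if $p,s\downarrow s'$ then $\run{s}{p}$ reaches $\ret_{s'}$ by silent steps and $\ret_{s'}\downarrow s'$, so $\trc^\Cos_{\whiletwo}(\run{s}{p})=(s')$; (ii) if $p,s\to p',s'$ then $\run{s}{p}\xTo{1,s'} c$ for some writer $c$ with $\trc^\Cos_{\whiletwo}(c)=\trc^\Cos_{\whiletwo}(\run{s'}{p'})$. The cases for $\mathsf{skip}$, $x\ass e$ and $\mathsf{while}~e~p$ follow by directly unfolding the reader rules of \Cref{fig:comp-rules} (for instance $\run{s}{\mathsf{while}~e~p}\to s.\run{s}{p\seqcomp \mathsf{while}~e~p}\xto{s}\run{s}{p\seqcomp \mathsf{while}~e~p}$ when $\oname{eev}(e,s)\neq 0$); for $p\seqcomp q$ I would start from the reader step $\run{s}{p\seqcomp q}\to\run{s}{p}\seqcomp q$ and propagate the induction hypothesis for $p$ through the three writer rules governing $c\seqcomp q$.

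I expect the main obstacle to be the inductive case $p\seqcomp q$ with $p,s\to p',s'$: there the \whiletwo-run I obtain stops at $c\seqcomp q$, whereas the \while-configuration $(p'\seqcomp q,s')$ corresponds to the writer $\run{s'}{p'\seqcomp q}$, and $c\seqcomp q$ and $\run{s'}{p'\seqcomp q}$ are only trace equivalent, not literally equal. To bridge this gap I would invoke compositionality of \whiletwo (\Cref{thm:whiletwo-cong}): since trace equivalence is a congruence, $\trc^\Cos_{\whiletwo}(c)=\trc^\Cos_{\whiletwo}(\run{s'}{p'})$ implies that $c\seqcomp q$ and $\run{s'}{p'}\seqcomp q$ have the same trace, and one further silent step $\run{s'}{p'\seqcomp q}\to\run{s'}{p'}\seqcomp q$ identifies this with the trace of $\run{s'}{p'\seqcomp q}$.

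Finally I would conclude by finality. Viewing $\Progs\times\store$ as a coalgebra for the functor $FZ=\store\times Z+\store$ on $\Set$ via the \while-transitions — which form a deterministic and total transition system, emitting $s'$ and continuing with $(p',s')$ if $p,s\to p',s'$, and emitting $s'$ and halting if $p,s\downarrow s'$ — whose final coalgebra is $\store^\infty$, the trace map $\trc_\while$ is the unique $F$-coalgebra morphism into $\store^\infty$. By (i) and (ii), together with determinism of \whiletwo-writer transitions, the assignment $(p,s)\mapsto\trc^\Cos_{\whiletwo}(\run{s}{p})$ satisfies the same recursion — returning $(s')$ when $p,s\downarrow s'$ and prepending $s'$ to $\trc^\Cos_{\whiletwo}(\run{s'}{p'})$ when $p,s\to p',s'$ — and hence is the same morphism. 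This yields the displayed identity, and with it the theorem.
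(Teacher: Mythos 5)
Your proposal is correct, but it takes a genuinely different route from the paper's. The paper proves, by structural induction, a purely \emph{operational} zig-zag lemma: if $p,s\to p',s'$ in \while, then there are writers $c_0,c_1,c_2,c_3$ with $p,s\to c_0\xTo{1,s'}c_1$, $c_1\xTo{1}c_2$ and $p',s'\to c_3\xTo{1}c_2$, i.e.\ both sides silently converge to a \emph{common} writer $c_2$ (plus the analogous statement for $\downarrow$). This strengthened invariant is exactly what absorbs the mismatch you identify at sequential composition ($c\seqcomp q$ versus $\run{s'}{p'\seqcomp q}$) without any appeal to congruence, and the theorem then follows by chaining these diagrams along a \while-run, the infinite case being analogous. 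You instead keep the weaker, semantic invariant $\trc^\Cos_{\whiletwo}(c)=\trc^\Cos_{\whiletwo}(\run{s'}{p'})$, close the gap at $\seqcomp$ by importing \Cref{thm:whiletwo-cong}, and conclude by finality of $\store^\infty$ rather than by explicit chaining. This is legitimate: \Cref{thm:whiletwo-cong} is proved before and independently of \Cref{thm:semantics-pres} via the abstract GSOS machinery, so there is no circularity, and the finality argument treats finite and infinite traces uniformly. What you trade away is self-containedness: the paper's proof is elementary and independent of the compositionality results, which is also what lets it transfer verbatim to the general stateful SOS setting (\Cref{prop:trace-pres}), whereas your proof makes the semantics-preservation result depend on the congruence theorem. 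Two facts you use implicitly and should state if you write this up: silent writer steps preserve traces, and $\trc^\Cos_{\whiletwo}$ is total on the operational model (no silent divergence of writers); both follow from determinism and the fact that consecutive silent steps are bounded by the size of the enclosed reader term, and both your reduction $\trc^\Trs_{\whiletwo}(p)(s)=\trc^\Cos_{\whiletwo}(\run{s}{p})$ and your coalgebra-morphism claim for $(p,s)\mapsto\trc^\Cos_{\whiletwo}(\run{s}{p})$ rely on them.
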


\begin{proof}[Proof sketch]
It suffices to prove the first equality; the other two are then immediate because cost and termination semantics in both \while and \whiletwo are derived from trace semantics. The key to the proof lies in relating transitions in the operational model \eqref{eq:while-opmodel} of \while to (weak) transitions $\to$, $\xTo{1}$, $\Downarrow^1$ (\Cref{not:weaktrans-1}) in the operational model \eqref{eq:while2opmodel} of \whiletwo. We write $\To$, $\Downarrow$ for $\xTo{1}$, $\Downarrow^1$. By structural induction one can show that the following holds for all $p,p'\in \mS$ and $s,s'\in \store$:

\begin{enumerate}
\item\label{statement-1-sketch}
If $p,s\to p',s'$ in \while, then there exist $c_0,c_1,c_2,c_3\in \Co$ and transitions in \whiletwo as depicted below on the left. (The upper arrow is an \while-transition, the remaining arrows are \whiletwo-transitions.)
\item\label{statement-2-sketch} If $p,s\downarrow s'$ in \while, then there exists $c\in \Co$ such that $p,s\to c$ and $c\Downarrow s'$ in \whiletwo.
\end{enumerate}
The statement of the theorem now easily follows. Suppose that $\trc_\while(p)(s)=(s_1,\ldots,s_n,s')$, witnessed by \while-transitions
$p,s\to  p_1,s_1 \to p_2,s_2 \to \cdots \to p_n,s_n\downarrow s'$. Applying \ref{statement-1-sketch} to the first $n$ transitions and \ref{statement-2-sketch} to the last one yields the situation below on the right, where the $\bullet$'s are elements of $\Co$, the transitions in the upper row are in \while, and the remaining transitions are in \whiletwo.
\[
\begin{tikzcd}[column sep=15, row sep=9]
p,s \ar{rr} \ar{dd} && p',s' \ar{d} \\
&& c_3 \ar[Rightarrow]{d} \\
c_0 \ar[Rightarrow]{r}{s'} & c_1 \ar[Rightarrow]{r} & c_2
\end{tikzcd}
\quad
\begin{tikzcd}[column sep=10, row sep=10]
p,s \ar{rr} \ar{dd} && p_1,s_1 \ar{d} \ar{rr} && p_2,s_2 \ar{d} & \cdots\quad \ar{r} & p_n,s_n \ar{d} &[-1.5em] \downarrow  s' \\
&& \bullet \ar[Rightarrow]{d} && \bullet \ar[Rightarrow]{d} && \bullet \ar[Rightarrow]{d}&  \\
\bullet \ar[Rightarrow]{r}{s_1} & \bullet \ar[Rightarrow]{r} & \bullet \ar[Rightarrow]{r}{s_2} & \bullet \ar[Rightarrow]{r} & \bullet & \quad \bullet \ar[Rightarrow]{r} & \bullet & \Downarrow  s'
\end{tikzcd}
\]
Thus $\trc_{\whiletwo}^\Trs(p)(s)=(s_1,\ldots,s_n,s')=\trc_{\while}(p)(s)$. The case where $\trc_\while(p)(s)$ is an infinite trace is treated analogously. We conclude that $\trc_\while = \trc_{\whiletwo}^\Trs$ as claimed.
\end{proof}

In particular, two \while-programs are trace/cost/termination equivalent iff they are trace/cost/termination equivalent as \whiletwo-readers, and so we recover the essential parts of \Cref{thm:comp-while}:
\begin{theorem}[Compositionality of \while]
Trace equivalence, cost equivalence, and termination equivalence form congruences on the operational model of \while.
\end{theorem}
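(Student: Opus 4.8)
The plan is to derive this as an immediate corollary of \Cref{thm:whiletwo-cong} (compositionality of \whiletwo) and \Cref{thm:semantics-pres} (semantics preservation for the embedding of \while into \whiletwo), with no further (co)inductive reasoning. We treat the three semantics uniformly: write $\sem{-}$ for one of $\trc,\scn,\trm$, let $\sem{-}_{\while}$ be the corresponding \while-map on $\Progs$ and let $\sem{-}^{\Trs}_{\whiletwo}$ be the reader component of the corresponding two-sorted \whiletwo-map on $\Tr$. Let $\approx$ be the associated equivalence on $(\Tr,\Co)$ (trace, cost, or termination equivalence, respectively), and let $\approx_{\while}$ be the equivalence on $\Progs$ given by $\sem{p}_{\while}=\sem{q}_{\while}$. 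By \Cref{thm:whiletwo-cong}, $\approx$ is a congruence on the operational model of \whiletwo, hence compatible with every operation of the two-sorted signature of \whiletwo; in particular its reader component $\approx_{\Trs}$ is compatible with the reader operations.

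The first step is a semantic identification. Since $\Progs=\Tr$, \Cref{thm:semantics-pres} yields $\sem{p}_{\while}=\sem{p}^{\Trs}_{\whiletwo}$ for all $p\in\Progs$. Consequently $\approx_{\while}$ coincides, as a relation on $\Progs$, with the reader component $\approx_{\Trs}$ of $\approx$; that is, two \while-programs are $\sem{-}_{\while}$-equivalent exactly when they are related by the reader part of the \whiletwo-congruence $\approx$.

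The second step is a comparison of the two signatures. The single-sorted signature of \while in \eqref{eq:sig-while} is literally the reader part $\Sigma_{\Trs}$ of the two-sorted signature of \whiletwo: the constructors $\mathsf{skip}$, $x\ass e$, $\mathsf{while}~e~p$ and $p\seqcomp q$ all target the sort $\Trs$ and accept only reader arguments. Being a two-sorted congruence, $\approx$ is compatible with each of these constructors, and for a constructor whose output and all of whose inputs have sort $\Trs$ this compatibility is a statement purely about the reader component $\approx_{\Trs}$: e.g.\ $\approx_{\Trs}(p,p')$ and $\approx_{\Trs}(q,q')$ imply $\approx_{\Trs}(p\seqcomp q,\, p'\seqcomp q')$, and similarly for $\mathsf{while}~e~(-)$, the constant cases being trivial. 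Transporting this along the identification of the first step gives exactly the implications asserted in \Cref{thm:comp-while} for $\sem{-}\in\{\trc,\scn,\trm\}$, i.e.\ $\approx_{\while}$ is a congruence on the operational model of \while. Carrying this out for each of the three semantics completes the proof.

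I do not expect a genuine obstacle, since all the substantive content already lives in \Cref{thm:whiletwo-cong} and \Cref{thm:semantics-pres}. The one point that deserves a careful word is conceptual rather than technical: the passage from \whiletwo back to \while forgets both the writer sort and the four writer operations, so one has to observe that the reader component of a two-sorted congruence remains a congruence for the smaller, reader-only signature. This holds because a congruence is verified operation by operation and every \while-operation is literally a reader operation of \whiletwo, so restricting attention to the sort $\Trs$ loses nothing.
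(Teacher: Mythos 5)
Your proposal is correct and matches the paper's own argument: the paper likewise obtains this theorem as an immediate consequence of \Cref{thm:whiletwo-cong} together with \Cref{thm:semantics-pres} (see the proof of the general version, \Cref{thm:cong-l}, which applies semantics preservation to the arguments, invokes congruence on the two-sorted model, and transports back). Your extra remark that every \while-constructor is literally a reader-sorted operation of \whiletwo, so the reader component of the two-sorted congruence is a congruence for \while, is exactly the (implicit) observation the paper relies on.
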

It turns out that our above results on \whiletwo and its tight relation to \while are not an accident: they extend to a whole family of stateful languages emerging from the general setting of \emph{stateful SOS}, and in fact are arguably best understood at that level of abstraction.

\section{From Stateful SOS to Abstract GSOS}
\label{sec:translations}
In this section we revisit the developments for \while and \whiletwo at the level of \emph{stateful SOS}, a general rule format for modelling stateful languages introduced by \citet{GoncharovMiliusEtAl22}. In particular, we will recover their main compositionality result by employing the methodology of abstract GSOS.

\subsection{Stateful SOS Specifications}
We start by recalling the definition of the stateful SOS format. Let us first settle some notation:
\begin{notation}  \label{not:stateful-sos}
\begin{enumerate}
\item We fix a single-sorted algebraic signature
  $\Sig$; as before, we also denote the induced polynomial functor by $\Sig \c \Set \to \Set$. Moreover, we fix a set $\S$ of \emph{states}. We think of $\Theta$ as the signature of a single-sorted stateful language such as \while and of $\store$ as its set of variable stores.
\item We continue to work with the functors
\begin{align*}
& D\colon \Set\to \Set,&& DX=X\times \store + \store,\\
& B\colon \Set^2\to \Set^2, && B_\Trs X = X_\Cos^\store, && B_\Cos X = \Pow(X_\Cos\times \store + X_\Cos + \store).
\end{align*}
\item Moreover, we fix a countably infinite set of variables
\[\Var=\{ x_n \mid n\in \bN\}\cup \{ y_n\mid n\in \bN \}.\]
\end{enumerate}
\end{notation}

\begin{definition}
  \label{def:isosrule}
\begin{enumerate}
\item
  A \emph{\isos} rule for an $n$-ary operator $\f \in \Sig$ is an expression of either form
  \begin{equation}\label{eq:sos-rule-progressing}
      \inference{x_i, s\to y_i,s_i \; (i\in W) \quad x_j,s\downarrow s_j\; (j\in\ol{W})}{\f(x_1,\ldots,x_n),s \to t,s'}
\end{equation}
  \begin{equation}\label{eq:sos-rule-terminating}
 \inference{x_i, s\to y_i,s_i \; (i\in W) \quad x_j,s\downarrow s_j\; (j\in\ol{W})}{\f(x_1,\ldots,x_n),s\downarrow s'}
\end{equation}
where $W\seq \{1,\ldots,n\}$, $\ol{W}=\{1,\ldots,n\}\setminus W$, $s,s',s_i\in \store$ for $i=1,\ldots,n$ and $t\in \Sigmas\Var$ is a term in the variables $x_1,\ldots,x_n$ and $y_i$ ($i\in W$). The tuple $(W,s,s_1,\ldots,s_n)$ is the \emph{trigger} of the rule.
\item A \emph{stateful SOS specification} is a set $\L$ of stateful SOS rules such that for each $n$-ary $\f\in \Theta$ and each $W\seq \{1,\ldots,n\}$, $s,s_1,\ldots,s_n\in \store$, there is a unique rule for $\f$ with trigger $(W,s,s_1,\ldots,s_n)$ in $\L$.
\end{enumerate}
\end{definition}

\begin{remark}
  Stateful SOS specifications are in a bijective correspondence
  with \emph{stateful SOS laws}, which are natural transformations of the form
  \[
    \delta_{X} \c \Sig(X \times  DX)\times\store\to  D(\Sigs X) \qquad (X\in \Set).
  \]
Similar to a GSOS law, a stateful SOS law simply encodes the rules of its corresponding stateful SOS specification into a natural family of functions; see \citet{GoncharovMiliusEtAl22} for more details.
\end{remark}

\begin{notation}\label{not:sos-conventions}
Unused premises can be dropped as expected: given an $n$-ary operator $\f$ and disjoint subsets $W_0,W_1\seq \{1,\ldots.n\}$, the rule on the left below represents the set of all rules on the right where $W\seq \{1,\ldots,n\}$ satisfies $W_0\seq W$ and $W_1\seq \ol{W}$ and $s_i\in \store$ for $i\in\{1,\ldots,n\}\setminus (W_0\cup W_1)$.
  \begin{equation*}
      \inference{x_i, s\to y_i,s_i \; (i\in W_0) \quad x_j,s\downarrow s_j\; (j\in W_1)}{\f(x_1,\ldots,x_n),s \to t,s'} \qquad  \inference{x_i, s\to y_i,s_i \; (i\in W) \quad x_j,s\downarrow s_j\; (j\in\ol{W})}{\f(x_1,\ldots,x_n),s\to t,s'}
\end{equation*}
Note that $t$ is then a term in the variables $x_1,\ldots,x_n$ and $y_i$ for $i\in W_0$ and that $t$ and $s'$ only depend on $s$ and $s_i$ for $i\in W_0\cup W_1$. Analogously for rules of the form \eqref{eq:sos-rule-terminating}. The special case $W_0=W_1=\emptyset$ corresponds to premise-free rules, where the behaviour of $\f$ is fully determined by the current state:
  \begin{equation}\label{eq:passive}
      \inference{}{\f(x_1,\ldots,x_n),s \to t,s'} \qquad  \inference{}{\f(x_1,\ldots,x_n),s \downarrow s'}
\end{equation}
We say that an operator $\f$ is \emph{passive} if all rules for $\f$ are premise-free. An \emph{active} operator is one that is not passive, i.e.\ its behaviour actually depends on the behaviour of some of its operands.
\end{notation}

A stateful SOS specification completely and deterministically defines the behaviour of each closed $\Theta$-term $\f(p_1,\ldots,p_n)$ depending on the current state and the possible behaviours of its subterms $p_i$ on that state. Its operational model is the coalgebra that runs $\Theta$-terms according to the rules:

\begin{definition}\label{def:opmodel-stateful-sos}
\begin{enumerate}
\item
Every stateful SOS specification $\L$ induces a GSOS law $\rho^\L$ of $\Theta$ over $D^\store$ with
\[ \rho^\L_X\colon \Theta(X\times (X\times \S+\S)^\S) \to (\Sigmas X \times \S + \S)^\S \]
defined as follows: given an $n$-ary $\f\in \Theta$, $p_1,\ldots,p_n\in X$, $u_1,\ldots,u_n\in (X\times \S+\S)^\S$, and $s\in \S$, put
\[W=\{ i\in \{1,\ldots,n\} \mid u_i(s)\in X\times \S \},\qquad (t_i,s_i)=u_i(s),\;\, i\in W,\qquad s_j=u_j(s),\;\, j\in \ol{W}.
\]
 Consider the unique rule in $\L$ for $\f$ with trigger $(W,s,s_1,\ldots,s_n)$. If the rule is given by \eqref{eq:sos-rule-progressing}, put
\[\rho^\L_X(\f((p_1,u_1),\ldots,(p_n,u_n)))(s)= (t[p_1/x_1,\ldots, p_n/x_n, t_{i_1}/y_{i_1},\ldots, t_{i_k}/y_{i_k} ],s')\]
where $W=\{i_1,\ldots,i_k\}$ and $-/-$ denotes substitution. If the rule is given by \eqref{eq:sos-rule-terminating}, put
\[ \rho^\L_X(\f((p_1,u_1),\ldots,(p_n,u_n)))(s) = s'. \]
\item The \emph{operational model} of $\L$ is the operational model of the GSOS law $\rho^\L$, denoted by
\begin{equation}\label{eq:l-opmodel} \gamma^\L\colon \mS\to (\mS\times \S+\S)^\S.\end{equation}
\end{enumerate}
\end{definition}

\begin{example}
The rules of \while (\Cref{fig:rules-while}) form a stateful SOS specification $\L$ for the signature $\Theta$ of \while, modulo renaming of variables and dropping unused premises (\Cref{not:sos-conventions}). For instance, the rule for sequential composition shown on the left represents the rules on the right for all $s_2\in \store$.
\[
    \inference{p,s\downarrow s'}{p \seqcomp q, s
      \to q,s'}
\qquad
\inference{x_1,s\downarrow s'\quad x_2,s\to y_2,s_2}{x_1 \seqcomp x_2, s
      \to x_2,s'}
\qquad
\inference{x_1,s\downarrow s'\quad x_2,s\downarrow s_2}{x_1 \seqcomp x_2, s
      \to x_2,s'}
\]
The induced GSOS law $\rho^\L$ is that of \Cref{ex:while-gsos-law}, and the operational model of $\L$ is the coalgebra~\eqref{eq:while-opmodel} that runs \while-terms according the rules of the language.
\end{example}

\subsection{Reader-Writer Semantics for Stateful SOS}
We will now generalize the transformation of \while into \whiletwo to the level of stateful SOS specifications. This requires a syntactic restriction of the rules:

\begin{defn}
  A \isos{} specification is \emph{cool} if for every $n$-ary active
  operator $\f$ there exists $j\in \{1,\ldots, n\}$ (the \emph{receiving position of $\f$}) such that all rules for $\f$ are of one of the following forms:
     \begin{equation}\label{eq:cool-1}
        \inference{\goes{x_{j},s}{y_j,s'}}{\goes{\f(x_{1},\ldots,x_{j-1},x_j,x_{j+1},\ldots,x_n),s}{\f(x_{1},\ldots,x_{j-1},y_j,x_{j+1},\ldots,x_n),s'}}
      \end{equation}
      \begin{equation}\label{eq:cool-2}
        \inference{\rets{x_{j},s}{s'}}{\goes{\f(x_{1},\dots,x_{n}),s}{t,s''}} \qquad \inference{\rets{x_{j},s}{s'}}{\rets{\f(x_{1},\dots,x_{n}),s}{s''}}
      \end{equation}
     where in \eqref{eq:cool-2} we have $t\in \Sigmas(\{x_{1},\dots,x_{n}\} \smin \{x_{j}\})$, and $t$ and $s''$ depend only on $s'$ but not on $s$.

  \end{defn}

The cool format asserts that an active operator $\f$ runs its $j$-th
subterm until termination and then discards it, proceeding to a state
derivable from the terminating state of the subterm. The terminology alludes to the
  \emph{cool} congruence formats for labelled transition systems by \citet{Bloom95} and
  \citet{Glabbeek11}. In particular, rules of type~\eqref{eq:cool-1} are a stateful version of
\emph{patience rules}~\cite{Glabbeek11}.

\begin{example}
The language \while (\Cref{fig:rules-while}) is cool. Its only active operator is sequential composition $-\seqcomp-$, which is receiving in the first position.
\end{example}

\begin{notation}\label{not:sig-l2} For the remainder of this section, we fix a cool stateful SOS specification $\L$ over the signature $\Theta$. We use the following notation for its passive and active operators:
\begin{itemize}
\item We let $\Theta_{\pas,n}$ denote the set of $n$-ary passive operators. For $\f\in \Theta_{\pas,n}$ and $s\in \store$ we put $o(\f,s)=t,s'$ or $o(\f,s)=s'$ if $\L$ contains the first or second rule \eqref{eq:passive}, respectively.
\item We let $\Theta_{\act,n,j}$ denote the set of $n$-ary active operators with receiving position $j$. For $\f\in \Theta_{\act,n,j}$ and $s,s'\in \store$, we put $o(\f,s,s')=t,s''$ or $o(\f,s,s')=s''$ if $\L$ contains the first or second rule \eqref{eq:cool-2}.
\end{itemize}
The signature $\Theta$ extends to the two-sorted signature $\Sigma$ (with sorts $\Trs$ and $\Cos$) containing the operators
\begin{itemize}
\item $\f\colon \Trs\times \cdots \times \Trs \to \Trs$ (with $n$ inputs) for every $n$-ary $\f\in \Theta$;
\item $\ol{\f}\colon \Trs\times \cdots\times \Trs \times \Cos \times \Trs \times\cdots\times \Trs\to \Cos$ (with $n$ inputs and $\Cos$ in the $j$-th position) for every $\f\in \Theta_{\act,n,j}$.
\item a constant $\ret_s\colon w$ and unary operators $[-]_s\colon \Trs\to \Cos$ and $s.-\colon \Cos\to \Cos$ for every $s\in \store$.
\end{itemize}
The signature $\Sigma$ corresponds to the polynomial functor $\Sigma\colon \Set^2\to \Set^2$ given by
\[ \Sigma_\Trs X = \Theta X_\Trs\qand   \Sigma_{\Cos} X
  = \underbrace{X_{\Trs}\times\store}_{[-]_s} \,+\, \underbrace{\store}_{\ret_s}
  \,+\, \underbrace{\store \times X_{\Cos}}_{s.-}\,+\,
 \coprod_{n,j}\coprod_{\f\in \Theta_{\act,n,j}} \underbrace{X_\Trs^{j-1}\times X_\Cos \times X_\Trs^{n-j}}_{\ol{\f}}.  \]
Elements of $(\mu\Sigma)_\Trs$ and $(\mu\Sigma)_\Cos$ are called \emph{readers} and \emph{writers}, respectively. Note that
\[ (\mu\Sigma)_\Trs = \mS. \]
\end{notation}
For the case where $\Theta$ is the single-sorted signature of \while, the induced $\Sigma$ is the two-sorted signature of \whiletwo. The refinement of the operational rules of \while to the reader-writer rules of \whiletwo (\Cref{fig:comp-rules}) then generalizes to the present setting as follows:

\begin{figure}
  \begin{gather*}
   \inference{\f\in \Theta_{\pas,n}\quad o(\f,s)=t,s'}{\f(x_1,\ldots,x_n),s \to s'.[t]_{s'}}[($\f$1)]
\qquad
  \inference{\f\in \Theta_{\pas,n}\quad o(\f,s)=s'}{\f(x_1,\ldots,x_n),s \to \ret_{s'}}[($\f$2)]\\
\qquad
     \inference{\f\in \Theta_{\act,n,j}}{\f(x_1,\ldots,x_j,\ldots,x_n),s \to \barf(x_1,\ldots, [x_j]_s,\ldots, x_n)} [($\f$3)]
  \end{gather*}

\vspace{1ex}
~\dotfill~
\vspace{1ex}
  \begin{gather*}
    \inference{\f\in \Theta_{\act,n,j} \quad v_j\to w_j}{\barf(x_1,\ldots,v_j,\ldots,x_n) \to \barf(x_1,\ldots, w_j,\ldots, x_n)}[($\barf$1)]
   \;
   \inference{\f\in \Theta_{\act,n,j}\quad o(\f,s,s')=t,s''\quad v_j\downarrow s'}{\barf(x_1,\ldots,v_j,\ldots,x_n) \xto{s''} [t]_{s''}}[($\barf$3)]  \\
 \inference{\f\in \Theta_{\act,n,j} \quad v_j\xto{s} w_j}{\barf(x_1,\ldots,v_j,\ldots,x_n) \xto{s} \barf(x_1,\ldots, w_j,\ldots, x_n)}[($\barf$2)]   \;
\inference{\f\in \Theta_{\act,n,j}\quad o(\f,s,s')=s'' \quad  v_j\downarrow s'}{\barf(x_1,\ldots,v_j,\ldots,x_n) \downarrow s''}[($\barf$4)]  \\
\inference{x_1,s\to v_1}{\run{s}{x_1} \to v_1}[($\text{[}-\text{]}$)]
    \qquad
    \inference{}{\ret_s \downarrow s}[($\ret$)]
    \qquad
    \inference{}{s.v_1 \xto{s} v_1}[($s.-$)]
  \end{gather*}
  \caption{Reader semantics of $\L^2$ (top) and writer semantics of $\L^2$ (bottom).}
  \label{fig:comp-rules-l2}
\end{figure}

\begin{definition}
We fix a new set of variables
\[ \Var^2 = \{ x_n \mid n\in \bN \} \uplus \{ v_n \mid n\in \bN \} \uplus \{ w_n \mid n\in \bN \}  \]
with $x_n$ representing readers and $v_n,w_n$ representing writers. (We use two types of writer variables for convenience, so that writer transitions in premises of rules can be written as $v_j\to w_j$ or $v_j\xto{s} w_j$.) The \emph{reader-writer extension} $\L^2$ of $\L$ is given by the operational rules of \Cref{fig:comp-rules-l2}. Its \emph{operational model} is the $B$-coalgebra
\begin{equation}\label{eq:l2-opmodel} \gamma^{\L^2}\colon \mu\Sigma\to ((\mu\Sigma)_\Cos^\store,\, \Pow((\mu\Sigma)_\Cos\times \S + (\mu\Sigma)_\Cos + \S)) \end{equation}
that runs terms according to the above rules.
\end{definition}
Analogous to the special case \whiletwo, the operational model of $\L^2$ is
actually deterministic (that is, it can be identified with a $B_0$-coalgebra),
and it forms the operational model of a GSOS law $\rho^{\L^2}$ of $\Sigma$ over
$B$ that encodes the rules of $\L^2$. \ifarx{The law $\rho^{\L^2}$ is given in
    Appendix~\ref{app:gsos-laws}}{For the full definition of $\rho^{\L^2}$, see
    \cite[App.~A]{stateful25-arxiv}}. As an alternative to the present rule-based definition, it is also possible to construct the law $\rho^{\L^2}$ diagrammatically from~$\rho^{\L}$ using functorial strength.

\subsection{Compositionality of $\boldsymbol{\L^2}$ and $\boldsymbol{\L}$}
We now derive compositionality of the reader-writer extension $\L^2$ of $\L$ with respect to trace, cost, and termination semantics as an application of the general compositionality result for abstract GSOS (\Cref{thm:congruence-abstract-gsos}). To this end, we instantiate the theorem to the AOS
\[\O_\star = (\Sigma,\ol{\Sigma},B,\barB_\star,\rho^{\L^2},\gamma^{\L^2},\tilde{\gamma}_\star^{\L^2})\qquad (\star\in \{\trc,\scn,\trm\}),\]
with the relation liftings $\barB_\star$ of $B$ given by \eqref{eq:barB-trc}, \eqref{eq:barB-scn}, \eqref{eq:barB-trm} and the weakenings $\tilde{\gamma}_\star^{\L^2}$ of the operational model $\gamma^{\L^2}$ given by \eqref{eq:weakening-trc}, \eqref{eq:barB-scn}, \eqref{eq:barB-trm}. Recall that $\barB_\trc$-similarity on $(\mu\Sigma,\gamma^{\L^2},\tilde{\gamma}^{\L^2})$ is trace equivalence (\Cref{prop:trc-sim}), $\barB_\scn$-similarity is cost equivalence (\Cref{prop:scn-sim}), and the relation $\preceq_\trm\cap \succeq_\trm$ (where $\preceq_\trm$ denotes $\barB_\trm$-similarity) is termination equivalence (\Cref{prop:trm-sim}). Consequently:
\begin{theorem}[Compositionality of $\L^2$]\label{thm:l2-comp}
Trace equivalence, cost equivalence and termination equivalence form congruences on the operational model of $\L^2$.
\end{theorem}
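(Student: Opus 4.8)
The plan is to apply the general compositionality theorem \Cref{thm:congruence-abstract-gsos} to each of the three abstract operational settings $\O_\star$ ($\star\in\{\trc,\scn,\trm\}$), which reduces the claim to verifying its three hypotheses (1)--(3). The argument is a direct generalization of the proof sketch of \Cref{thm:whiletwo-cong} from \whiletwo to an arbitrary cool stateful SOS specification $\L$, and the combinatorial core is identical for all three semantics; I would spell it out for trace semantics and merely note the minor changes for the other two.

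For condition (1), up-closure of $\barB_\star$ together with lax preservation of composition and identities are properties of the behaviour functor $B$ and of the liftings \eqref{eq:barB-trc}, \eqref{eq:barB-scn}, \eqref{eq:barB-trm} alone -- they do not mention $\L$ -- so the elementary computation already given for \whiletwo carries over verbatim. For condition (2), liftability of $\rho^{\L^2}$, I would argue operator by operator, which is legitimate since $\Sigma$ is polynomial. The cool format leaves only a bounded list of rule shapes to inspect: the reader operators $\f$ for $\f\in\Theta$ governed by $(\f1)$--$(\f3)$, the writer operators $\barf$ for $\f\in\Theta_{\act,n,j}$ governed by $(\barf1)$--$(\barf4)$, and the operators $\ret_s$, $[-]_s$, $s.-$. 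In each case the conclusion of the rule transports its arguments without inspecting them -- e.g.\ for $\barf$ a writer behaviour $U$ in the receiving position is processed elementwise by a fixed map while the passenger readers are left intact -- so inputs related in $\ol{\Sigma}(R\times\barB_\star R)$ are sent to outputs related in $\barB_\star\ol{\Sigma}^\star R$. This is exactly the computation carried out for `$\seqcomp$' in \Cref{thm:whiletwo-cong}, now done schematically.

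Condition (3), the lax bialgebra property, is where I expect the real work to lie. Unwinding \eqref{eq:lax-bialgebra-fo} and the definition of $\tilde{\gamma}_\star^{\L^2}$ (see \eqref{eq:weakening-trc}), it says that every writer rule of \Cref{fig:comp-rules-l2} stays sound in the operational model $\gamma^{\L^2}$ once the strong transitions $\to$, $\xto{s}$, $\downarrow$ occurring in it are replaced by the weak transitions $\xTo{1}$, $\xTo{1,s}$, $\Downarrow^1$ (for $\star=\trc,\scn$) or $\xTo{2}$, $\xTo{2,s}$, $\Downarrow^2$ (for $\star=\trm$). I would derive the weak versions of the patience/congruence rules $(\barf1)$, $(\barf2)$, $([-])$, $(s.-)$ by iterating the strong rule along the chain of steps witnessing the weak premise, and the weak versions of the terminating rules $(\barf3)$, $(\barf4)$ by prepending such an iteration to one application of the strong rule -- exactly the pattern of the two illustrative cases in \Cref{thm:whiletwo-cong}. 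Soundness of the reader rules $(\f1)$--$(\f3)$ is trivial because $\tilde{\gamma}_\star^{\Trs}=\gamma^{\Trs}$. This is the main obstacle: it is the only step requiring a genuine inductive argument, and it is precisely here that coolness of $\L$ is used, since the patience rules and the specific shape of $(\barf3)$/$(\barf4)$ are what make the iterated strong transitions line up with the weak ones.

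With (1)--(3) in place, \Cref{thm:congruence-abstract-gsos} yields that $\barB_\star$-similarity on $(\mu\Sigma,\gamma^{\L^2},\tilde{\gamma}_\star^{\L^2})$ is a $\ol{\Sigma}$-congruence on $\mu\Sigma$. For $\star\in\{\trc,\scn\}$ this similarity is exactly trace, resp.\ cost, equivalence by \Cref{prop:trc-sim} and \Cref{prop:scn-sim}, finishing these two cases. For $\star=\trm$ we get that $\preceq_\trm$ is a congruence; since $\ol{\Sigma}$ is the canonical lifting of a polynomial functor it commutes with relational converse, so $\succeq_\trm$ is a congruence as well, and hence so is their intersection, which by \Cref{prop:trm-sim} is termination equivalence. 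This establishes all three congruence statements.
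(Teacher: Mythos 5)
Your proposal is correct and follows essentially the same route as the paper's proof: instantiate \Cref{thm:congruence-abstract-gsos} at the three settings $\O_\star$, verify (1) by the same computation on $\vec{\Pow}$, (2) by parametric polymorphism of the rules of \Cref{fig:comp-rules-l2}, and (3) by showing each writer rule sound for weak transitions (with coolness — in particular $x_j$ not occurring in $t$ — entering exactly in the $(\barf 3)$/$(\barf 4)$ cases), then symmetrize $\preceq_\trm$ for termination equivalence. The only negligible difference is that the paper justifies the $([-]_s)$ rule by noting its premise is a reader transition untouched by the weakening, rather than grouping it with the patience rules, but your argument still goes through there.
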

To prove the theorem, we only need to show that the conditions of \Cref{thm:congruence-abstract-gsos} hold. The arguments are similar to the proof sketch of \Cref{thm:whiletwo-cong}, with the coolness restriction on the given stateful SOS specification $\L$ being the key to the lax bialgebra condition.

Lastly, we show that congruence properties of the original language $\L$ can be reduced to those of $\L^2$. We denote the trace, cost, and termination maps on the operational models of $\L$ and $\L^2$ by
\begin{align*}
&\trc_\L\colon \mS \to (\S^\infty)^\S, && \trc_{\L^2}\colon \mu\Sigma\to ((\store^\infty)^\store,\\
&\scn_\L\colon \mS \to (\bN\times\S + 1)^\S, && \scn_{\L^2}\colon \mu\Sigma\to ((\bN\times \store+1)^\store, \\
&\trm_\L\colon \mS \to (\S + 1)^\S, && \trm_{\L^2}\colon \mu\Sigma\to ((\store+1)^\store,\,\store+1).
\end{align*}
Generalizing \Cref{thm:semantics-pres}, the extension of $\L$ to $\L^2$ leaves the semantics unchanged:
\begin{theorem}\label{prop:trace-pres} The trace, cost, and termination semantics for $\L$ and $\L^2$ coincide:
\[\trc_\L = \trc^\Trs_{\L^2},\qquad  \scn_\L = \scn^\Trs_{\L^2},\qquad \trm_\L = \trm^\Trs_{\L^2}.\]
\end{theorem}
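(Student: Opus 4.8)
The plan is to mirror, at the present level of generality, the proof of \Cref{thm:semantics-pres}. First I would reduce the claim to the single equality $\trc_\L = \trc^\Trs_{\L^2}$: on both sides the cost and termination maps are derived from the respective trace map by the very same transformations $\store^\infty \to \bN\times\store+1$ (the length of the trace together with its final state) and $\store^\infty \to \store+1$ (the final state, if any), so $\scn_\L = \scn^\Trs_{\L^2}$ and $\trm_\L = \trm^\Trs_{\L^2}$ follow at once from $\trc_\L = \trc^\Trs_{\L^2}$.

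For the trace equality I would establish, by structural induction on a reader $p\in\mS=(\mu\Sigma)_\Trs$, a correspondence between the strong transitions of the operational model $\gamma^\L$ of $\L$ (a $D^\store$-coalgebra) and the weak transitions $\to$, $\xTo{1}$, $\Downarrow^1$ (\Cref{not:weaktrans-1}) of the operational model $\gamma^{\L^2}$ of $\L^2$, of exactly the shape of statements~(1) and~(2) in the proof sketch of \Cref{thm:semantics-pres}: \emph{(1)} if $p,s\to p',s'$ in $\L$ then in $\L^2$ there are writers $c_0,c_1,c_2,c_3$ with $p,s\to c_0$, $c_0\xTo{1,s'} c_1\xTo{1} c_2$, $p',s'\to c_3$, and $c_3\xTo{1} c_2$; and \emph{(2)} if $p,s\downarrow s'$ in $\L$ then there is a writer $c$ with $p,s\to c$ and $c\Downarrow^1 s'$ in $\L^2$. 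The induction runs over the two-sorted signature $\Sigma$ of \Cref{not:sig-l2}. For the base case, a term $\f(p_1,\dots,p_n)$ with $\f\in\Theta_{\pas,n}$ fires its unique applicable premise-free rule~\eqref{eq:passive} at $s$; if that rule is progressing, rule~$(\f1)$ of $\L^2$ gives $\f(p_1,\dots,p_n),s\to s'.\run{s'}{t}$, and together with rules~$(s.-)$ and~$([-])$ this yields witnesses of the form required in~(1); if it is terminating, rule~$(\f2)$ gives $\f(p_1,\dots,p_n),s\to\ret_{s'}$ and rule~$(\ret)$ yields~(2).

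For the inductive step, let $\f\in\Theta_{\act,n,j}$ be active with receiving position $j$. Coolness guarantees that every $\L$-rule for $\f$ has one of the forms~\eqref{eq:cool-1}, \eqref{eq:cool-2}, so the $\L$-behaviour of $\f(p_1,\dots,p_n)$ on $s$ is governed entirely by the $\L$-behaviour of the receiving argument $p_j$. In $\L^2$, rule~$(\f3)$ gives $\f(p_1,\dots,p_n),s\to\barf(p_1,\dots,\run{s}{p_j},\dots,p_n)$, rule~$([-])$ unfolds $\run{s}{p_j}$ into the first writer of $p_j$ run on $s$, and the writer rules~$(\barf1)$--$(\barf4)$ transport the behaviour of that sub-writer upward (silent steps via $(\barf1)$, state-emitting steps via $(\barf2)$, termination via $(\barf3)$/$(\barf4)$ through $o(\f,s,\argument)$). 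I would then apply the induction hypothesis to $p_j$ and wrap all resulting $\L^2$-transitions inside the context $\barf(p_1,\dots,\argument,\dots,p_n)$ to obtain the required patterns for $\f(p_1,\dots,p_n)$. I expect this inductive step to be the main obstacle: the delicate point is to check that the finitely many $\L^2$-steps occurring strictly between two consecutive states emitted by $\L$ reproduce exactly the one state emitted by the corresponding $\L$-step and nothing else, and that the hand-off from the reader phase $\run{s}{p_j}$ to the wrapped writer phase $\barf(\dots)$ loses no information. This is precisely where the coolness restriction (a single receiving position and patience-style rules $(\barf1)$, $(\barf2)$) does the essential work, and the argument is a more bookkeeping-heavy variant of the corresponding argument for \while and \whiletwo treated in \Cref{thm:semantics-pres}.

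Finally, granting~(1) and~(2), I would conclude as in \Cref{thm:semantics-pres}. If $\trc_\L(p)(s)=(s_1,\dots,s_n,s')\in\store^+$, witnessed by $\L$-transitions $p,s\to p_1,s_1\to\cdots\to p_n,s_n\downarrow s'$, then applying~(1) to the first $n$ transitions and~(2) to the terminating one, and gluing the resulting $\L^2$-fragments---consecutive fragments share a reader transition out of $p_i,s_i$, so by determinacy of the $\L^2$-operational model they meet along the same writer---produces a sequence of weak $\L^2$-transitions out of $p,s$ that emits exactly the states $s_1,\dots,s_n$ and terminates in $s'$; hence $\trc^\Trs_{\L^2}(p)(s)=(s_1,\dots,s_n,s')$. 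The case of an infinite $\L$-trace is handled by performing the same gluing for all $n$. We conclude $\trc_\L=\trc^\Trs_{\L^2}$, and the cost and termination equalities follow by the reduction carried out in the first step.
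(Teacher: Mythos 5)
Your proposal is correct and follows essentially the same route as the paper's proof: reduce everything to $\trc_\L=\trc^\Trs_{\L^2}$, prove by structural induction the two transfer statements (1) and (2) relating strong $\L$-transitions to weak $\L^2$-transitions (with the passive case handled by rules $(\f1)$, $(\f2)$, $(s.-)$, $([-])$, $(\ret)$ and the active case by $(\f3)$, $([-])$, $(\barf1)$--$(\barf4)$ together with coolness), and then glue the resulting fragments to read off the trace. The paper's appendix carries out the inductive step you flag as the main obstacle in exactly the way you describe, splitting the active case according to whether the $\L$-transition comes from a patience rule \eqref{eq:cool-1} or a termination rule \eqref{eq:cool-2}, so your plan is faithful to the actual argument.
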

From this theorem and the compositionality of $\L^2$, the following is immediate:

\begin{theorem}[Compositionality of $\L$]\label{thm:cong-l}
Trace equivalence, cost equivalence and termination equivalence are congruences on the operational model of~$\L$.
\end{theorem}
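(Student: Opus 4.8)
The statement follows by combining the compositionality of $\L^2$ (\Cref{thm:l2-comp}) with the semantics-preservation result (\Cref{prop:trace-pres}); I merely spell out why the combination is immediate.

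First I would recall from \Cref{not:sig-l2} that the reader sort of the initial $\Sigma$-algebra satisfies $(\mu\Sigma)_\Trs = \mS$, and that, since $\Sigma_\Trs X = \Theta X_\Trs$, the reader-to-reader operations of the two-sorted signature $\Sigma$ are \emph{exactly} the operations of $\Theta$. Hence $(\mu\Sigma)_\Trs$, equipped with these operations, is the initial $\Theta$-algebra $\mS = \mu\Theta$, i.e.\ the underlying algebra of the operational model $\gamma^\L$ of $\L$. Correspondingly, for the canonical lifting one has $(\ol\Sigma R)_\Trs = \ol\Theta(R_\Trs)$ for every relation $R\monoto \mu\Sigma\times\mu\Sigma$ in $\Set^2$, so that the reader-sort component of any $\ol\Sigma$-congruence on $\mu\Sigma$ is a $\ol\Theta$-congruence on $\mS$.

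Now fix $\star \in \{\trc,\scn,\trm\}$ and let $\approx_\star$ denote the corresponding $\star$-equivalence on the operational model $\gamma^{\L^2}$ of $\L^2$. By \Cref{thm:l2-comp}, $\approx_\star$ is an $\ol\Sigma$-congruence on $(\mu\Sigma,\ini)$, so by the previous paragraph its reader-sort component $(\approx_\star)_\Trs$ is a $\ol\Theta$-congruence on $\mS$. On the other hand, \Cref{prop:trace-pres} gives $\trc_\L = \trc^\Trs_{\L^2}$, $\scn_\L = \scn^\Trs_{\L^2}$, and $\trm_\L = \trm^\Trs_{\L^2}$, whence $(\approx_\star)_\Trs$ is precisely $\star$-equivalence on the operational model $\gamma^\L$ of $\L$. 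Combining the two observations, trace, cost, and termination equivalence are $\ol\Theta$-congruences on $\gamma^\L$, which is the assertion. For the termination case one additionally uses that $\preceq_\trm\cap\succeq_\trm$ is a congruence, since intersections of $\ol\Sigma$-congruences are $\ol\Sigma$-congruences and the canonical lifting $\ol\Sigma$ commutes with relational converse, so $\succeq_\trm$ is a congruence whenever $\preceq_\trm$ is.

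Since the mathematical content has already been established in \Cref{thm:l2-comp} and \Cref{prop:trace-pres}, I do not anticipate a genuine obstacle; the only point requiring (minor) care is the bookkeeping around restricting a $\Set^2$-valued congruence along the sort inclusion $\Theta\hookrightarrow\Sigma$, which is exactly what the first step above handles.
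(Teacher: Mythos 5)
Your proposal is correct and follows essentially the same route as the paper: the paper's own proof also just transports related arguments along $\trc_\L = \trc^\Trs_{\L^2}$ (resp.\ $\scn$, $\trm$) from \Cref{prop:trace-pres}, applies the congruence property of \Cref{thm:l2-comp} to the reader-sort operator $\f\in\Theta$, and transports back, with the other two semantics handled analogously. Your extra remark about closure of congruences under converse and intersection for the termination case is already subsumed by the statement of \Cref{thm:l2-comp}, so it is harmless but not needed.
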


The congruence of trace and termination equivalence for cool stateful SOS specifications has been shown previously by \citet{GoncharovMiliusEtAl22}. However, in contrast to the latter work, our present approach uses abstract GSOS rather than ad hoc reasoning on the cool rule format, and is based on the principled extension of $\L$ to $\L^2$. In fact, our approach provides a deeper explanation of why the cool format works: it gives rise to a lax bialgebra structure in the extended language $\L^2$.

\section{Higher-Order Store}
\label{sec:higher-order}

So far we have shown how to cover stateful languages with
first-order store in abstract GSOS. We shall now demonstrate that the reader-writer approach to the
operational semantics of stateful languages is also well-suited for more complex
C-style and ML-style languages with higher-order store, where programs cannot only store basic values like integers, but also (pointers to) other programs. For that purpose, we
introduce \reflang, a C-style, untyped imperative language
with function pointers that follows the reader-writer paradigm. The presence of higher-order store yields semantics that
correspond to \emph{higher-order} abstract GSOS~\cite{GoncharovMiliusEtAl23}.
The inherent (bifunctorial) coalgebraic notions of program equivalence are thus
fundamentally of the higher-order variety; we focus on one such notion,
\emph{termination simulation}, and employ the higher-order abstract GSOS theory to build a sound proof method for the
standard \emph{contextual equivalence} of \reflang.

\subsection{An Imperative Language with Higher-Order Store}

The language \reflang is a basic, untyped imperative language with
references and higher-order store, following mainly \citet{ReusStreicher05}, as well as
\citet{Pierce02}. Like Reus and Streicher, we
work without variable binding, e.g.~$\lambda$-abstractions, allowing us to focus on the core phenomena arising from
higher-order store. We outline in \Cref{sec:higher-higher-order} how such additional features are incorporated.

We write $Loc$ for the abstract set of
\emph{locations} and use the metavariable $l$ to denote a generic location. Locations represent regions in memory that can be allocated.
The expressions of \reflang are generated by the grammar below (of course, more arithmetic operations may be added at wish):
\begin{equation}
\expr\owns e,r \Coloneqq\;
l \mid n \mid {!}e \mid e \oplus r \mid e \ominus r \quad (l \in Loc, n \in \mathbb{Z}).
\end{equation}
The set~$\store(X)$ of \emph{stores} in \reflang is parameterized by a set $X$, which
abstracts the set of readers being stored, and is defined to be the set of
\emph{partial} maps
\[
  \store(X) = Loc \rightharpoonup V(X) \qquad \text{where} \qquad V(X) = Loc +
  \mathbb{Z} + X.
\]
Given $s\in \S(X)$ we write $s[l\mapsto v]$ for the store that maps $l$ to $v$ and otherwise equals $s$. The partial evaluation of expressions is given by the map
$  \oname{eev}_{X} \c \store(X) \times \expr \rightharpoonup V(X)$, natural in $X$,
defined by
\[
  \begin{aligned}
    \oname{eev}_{X}(s,l)
    =\;& l, \qquad
    \oname{eev}_{X}(s,n)
    = n,\\
    \oname{eev}_{X}(s,{!}e)
    =\;& s(l) \text{\quad if\quad }~\oname{eev}_{X}(s,e) = l,\\
    \oname{eev}_{X}(s,e_{1} \oplus e_{2})
    =\;& n_{1} + n_{2}\text{\quad if\quad }~\oname{eev}_{X}(s,e_{i}) = n_{i},\\
    \oname{eev}_{X}(s,e_{1} \ominus e_{2})
    =\;& n_{1} - n_{2}\text{\quad if\quad }~ \oname{eev}_{X}(s,e_{i}) = n_{i}.
  \end{aligned}
\]
According to the above, expressions may evaluate to either a location, an
integer, or an element of~$X$. Stores are accessed by
invoking the dereferencing operator~$!$ on an expression that
evaluates to a valid location, meaning one in the domain of the store.

Analogous to the language \whiletwo, the terms of \reflang are divided into two sorts, \emph{readers}
and \emph{writers}, and are inductively generated by the
following grammar:
\begin{equation*}
  \label{eq:csyn}
  \begin{aligned}
    \Tr\owns p, q \Coloneqq\;
    & \mathsf{skip} \mid \mathsf{while}~e~p \mid e \ass  p
      \mid  \mathsf{if}~e~\mathsf{then}~p~\mathsf{else}~q \mid p \seqcomp q \mid
    \&p \mid \mathsf{expr}~e \mid \mathsf{proc}~p \qquad (e \in \expr)\\
    \Co\owns  c \Coloneqq\;& e \ass c \mid c \seqcomp q \mid \& c \mid s.c \mid \run{s}{p} \mid \ret_{v,s} \mid \ret_{s}
                             \hspace{9em} (s\in\store(\Tr), v \in V(\Tr))
  \end{aligned}
\end{equation*}
We highlight the important added features in comparison to \whiletwo. The reader
$\& p$ first evaluates $p$, then allocates a new region in memory containing
the value of~$p$ and returns the location of this region; $\mathsf{proc}\ p$
returns the reader~$p$ as a value; and $\mathsf{expr}\ e$
evaluates~$e$, returning the result. The behaviour of writers also changes as
they no longer only return an output store, but potentially also a \emph{value} in $V(\Tr)$, that is, an
integer, a location, or a reader term.
Differences aside, \reflang is another instance of the two-sorted
reader-writer approach to stateful languages that was earlier exemplified by
\whiletwo. In particular, it also features writers $\run{s}{p}$,
$\ret$ and $s.c$, and is implicitly an extension of a single-sorted language $\mathsf{Ref}$ whose program terms correspond to the readers of \reflang.

The full operational semantics of \reflang are specified by the inductive rules in
\Cref{fig:comp-rules-ho}. The rules apply to
stores $s\in \store(\Tr)$ and specify transitions of type $p,s \to c$ where $p\in \Tr$ and $c\in \Co$ (`given an input store $s \in \store(\Tr)$, the
reader $p$ continues as $c$'), and transitions $c\to d$ / $c\xto{s} d$ / $c\downarrow s$ / $c \downarrow v,s$ where $c,d\in \Co$ and $v\in V(\Tr)$. The first three types of writer transitions are as in \whiletwo, the last one means `$c$ terminates with value $v$ and output store $s$'. The third rule for the allocation operator \& allows the choice of an \emph{arbitrary} fresh location $l$, making writer transitions {non\-de\-ter\-mi\-nis\-tic}. Reader transitions are (partial) deterministic: there might be no transition $p,s\to c$ for given $p$ and $s$ if expression evaluation fails or gives the wrong type of value, e.g.~in the rule for $\mathsf{while}$.

\begin{figure}
  \begin{gather*}
    \inference{}{p\seqcomp q,s
      \to \run{s}{p}\seqcomp q}
    \qquad
    \inference{}{\mathsf{proc}~p,s \to \ret_{p,s}}
    \qquad
    \inference{}{\&p,s \to \&\run{s}{p}}
    \\[1ex]
    \inference{}{\mathsf{skip},s \to \ret_{s}}
    \qquad
    \inference{}{e \ass  p,s \to e \ass \run{s}{p}}
    \qquad
    \inference{\oname{eev}_{\Trs}(e,s) = 0}{\mathsf{while}~e~p,s \to \ret_{s}}
    \\[1ex]
    \inference{\oname{eev}_{\Tr}(e,s) = n \quad n \not = 0}
    {\mathsf{while}~e~p,s \to
      s.\run{s}{p; \mathsf{while}~e~p}
    }
    \qquad
    \inference{\oname{eev}_{\Tr}(e,s) =
      0}{\mathsf{if}~e~\mathsf{then}~p~\mathsf{else}~q,s \to
      s.\run{s}{q}}
    \\[1ex]
    \inference{\oname{eev}_{\Tr}(e,s) = n \quad n \not = 0}
    {\mathsf{if}~e~\mathsf{then}~p~\mathsf{else}~q,s \to
      s.\run{s}{p}
    }
    \qquad
    \inference{\oname{eev}_{\Tr}(e,s) = v \quad v \in Loc + \mathbb{Z}}
    {\mathsf{expr}~e,s \to \ret_{v,s}}
    \qquad
    \inference{\oname{eev}_{\Tr}(e,s) = p}
    {\mathsf{expr}~e,s \to s.\run{s}{p}}
  \end{gather*}

\vspace{1ex}
~\dotfill~
\vspace{1ex}
  \begin{gather*}
    \inference{p,s\to c}{\run{s}{p} \to c}
    \qquad
    \inference{}{\ret_{v,s} \downarrow v,s}
    \qquad
\inference{}{\ret_{s} \downarrow s}
    \qquad
    \inference{}{s.c \xto{s} c}
    \\[1ex]
    \inference{c \xto{s} d}{\&c \xto{s} d}
    \qquad
    \inference{c \to d}{\&c \to \& d}
    \qquad
    \inference{c \downarrow v,s \quad l \not\in \mathrm{dom}(s)}{\&c \downarrow
      l,s[l \mapsto v]}
    \\[1ex]
    \inference{c\xto{s} d}{c\seqcomp q \xto{s} d \seqcomp q}
    \qquad
    \inference{c\to d}{c\seqcomp q \to d \seqcomp q}
    \qquad
    \inference{c\downarrow v,s}{c\seqcomp q \xto{s} \run{s}{q}}
 \qquad
    \inference{c\downarrow s}{c\seqcomp q \to \run{s}{q}}
    \\[1ex]
    \inference{c \xto{s} d}{e \ass  c\xto{s} e \ass  d}
    \qquad
    \inference{c \to d}{e \ass  c\to e \ass  d}
    \qquad
    \inference{\oname{eev}_{\Tr}(e,s) = l \quad c \downarrow v,s}{e \ass
      c \downarrow s[l \mapsto v]}
  \end{gather*}
  \caption{Reader semantics of \reflang (top) and writer semantics of \reflang (bottom).}
  \label{fig:comp-rules-ho}
\end{figure}

\begin{example}[Landin's knot]
  \label{ex:landin1}
  Landin's knot~\cite{landin64} is a programming gadget to encode general recursion into languages with higher-order store
  but without explicit recursion primitives. For \reflang, which is an
  untyped language, Landin's knot can be implemented via the simple pattern
  \[
    l \ass \mathsf{proc}~p \seqcomp \mathsf{expr}~{!}l,
  \]
  where $l$ is a location and $p$ is a reader that makes use of the expression
  $\mathsf{expr}~{!}l$. For instance, the reader $l \ass
  \mathsf{proc}(\mathsf{expr}~{!}l) \seqcomp \mathsf{expr}~{!}l$ diverges and the reader
  \[
    l \ass
    \mathsf{proc}(\mathsf{if}~l'~\mathsf{then}~q \seqcomp l' \ass \mathsf{expr}({!}l' \ominus 1) \seqcomp \mathsf{expr}~{!}l~\mathsf{else}~\mathsf{skip})
    \seqcomp l' \ass \mathsf{expr}~ 10 \seqcomp \mathsf{expr}~{!}l
  \]
  acts as an iterator. Parentheses have been added for readability.
\end{example}

\subsection{Program Equivalence in $\boldsymbol{\reflang}$}

Our treatment of program equivalence on \reflang proceeds along the lines of termination equivalence in \whiletwo (\Cref{sec:terminationsemantics}), adapted to higher-order stores.
Let us first recall the standard program equivalence for languages with higher-order store, which is
\emph{contextual equivalence}.

\begin{notation}\label{not:weak-trans-ref}
For $p\in \Tr$, $c,d\in \Co$, $s\in \store(\Tr)$ and $v\in V(\Tr)$ we write
\begin{itemize}
\item $c\To d$ and $c\xTo{s} d$ if there exist $n\geq 0$ and $c=c_0,\ldots,c_n=d\in \Co$ such that, for each $i<n$, either
$c_i\to c_{i+1}$ or $c_i\xto{s_{i+1}} c_{i+1}$ for some $s_{i+1}\in \store(\Tr)$;
\item $c\Downarrow v,s$ if there exists $c'\in \Co$ such that $c\To c' \downarrow v,s$;
\item $c\Downarrow s$ if there exists $c'\in \Co$ such that $c\To c' \downarrow s$;
\item $c\Downarrow$ if either $c\Downarrow v,s$ for some $v\in V(\Tr)$ and $s\in \S(\Tr)$, or $c\Downarrow s$ for some $s\in \S(\Tr)$;
\item $p,s\Downarrow$ if there exists $c\in \Co$ such that $p,s\to c$ and $c\Downarrow$.
\end{itemize}
\end{notation}

\begin{definition} Let $\Sigma$ be the two-sorted signature of \reflang.
\begin{enumerate}
\item A \emph{context} is a $\Sigma$-term $C$ in a single variable `$\cdot$' (its \emph{hole}) that occurs at most once in $C$. For $\mathsf{s},\mathsf{t}\in \{\Trs,\Cos\}$ we write $C_\mathsf{t}[\cdot_\mathsf{s}]$ for a context $C$ of output sort $\mathsf{t}$ with a hole of sort $\mathsf{s}$. Moreover, $C[t]$ denotes the term obtained by substituting a term $t\in \Tr\cup\Co$ of suitable sort for the hole.
\item \emph{Contextual equivalence} $\equiv^{\mathrm{ctx}}$ is the two-sorted equivalence relation on $(\Tr,\Co)$ defined by
\begin{itemize}
\item  $p\equiv^{\ctx}_\Trs q$\; if\; (i) $\forall C_\Trs[\cdot_\Trs].\,\forall s.\, C[p],s\,{\Downarrow}\iff C[q],s\,{\Downarrow}$ \;and\; (ii) $\forall C_\Cos[\cdot_\Trs].\, C[p] \,{\Downarrow}  \iff C[q]\, {\Downarrow}$;
\item $c\equiv^{\ctx}_\Cos d$ \;if\; (i) $\,\forall C_\Trs[\cdot_\Cos].\forall s.\, C[c],s\,{\Downarrow}\iff C[d],s\,{\Downarrow}$ \,and\, (ii) $\forall C_\Cos[\cdot_\Cos].\, C[c] \,{\Downarrow}  \iff C[d]\,{\Downarrow}$.
\end{itemize}
\end{enumerate}
\end{definition}
A standard observation is that contextual equivalence can be characterized alternatively as the greatest congruence relation on the set of programs that is adequate with respect to termination:

\begin{definition}
\begin{enumerate}
\item A (\reflang)-\emph{congruence} is a congruence on the initial algebra $(\Tr,\Co)$ for $\Sigma$, that is, a two-sorted relation $R = (R_{\Trs} \subseteq \Tr \times \Tr,R_{\Cos} \subseteq
  \Co \times \Co)$ compatible with all constructors of \reflang.
\item A two-sorted relation $R = (R_{\Trs} \subseteq \Tr \times \Tr,R_{\Cos} \subseteq
  \Co \times \Co)$ is (\emph{termination-})\emph{adequate} if
  \begin{enumerate}
  \item for all $p,q \in \Tr$ and $s \in \store(\Tr)$, if $R_{\Trs}(p,q)$ then $p,s
    \Downarrow \iff q,s \Downarrow$;
  \item for all $c,d \in \Co$, if $R_{\Cos}(c,d)$ then $c \Downarrow \iff d \Downarrow$.
  \end{enumerate}
\end{enumerate}
\end{definition}

\begin{proposition}\label{prop:cont-eq}
  Contextual equivalence is the greatest adequate congruence.
\end{proposition}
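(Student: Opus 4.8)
The plan is to establish three things in turn: that $\equiv^{\ctx}$ is termination-adequate, that it is a \reflang-congruence, and that every adequate congruence is contained in it. Adequacy I would read off directly from the definition: instantiating clause~(i) of $p\equiv^{\ctx}_\Trs q$ with the trivial context $C_\Trs[\cdot_\Trs]=\cdot_\Trs$ yields $p,s\Downarrow\iff q,s\Downarrow$ for all $s$, while instantiating clause~(ii) of $c\equiv^{\ctx}_\Cos d$ with $C_\Cos[\cdot_\Cos]=\cdot_\Cos$ yields $c\Downarrow\iff d\Downarrow$, so that the two adequacy conditions hold.

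For the congruence property I would run a standard zig-zag argument. Fix a constructor $g\colon\mathsf{s}_1\times\cdots\times\mathsf{s}_n\to\mathsf{t}$ of $\Sigma$ and argument tuples with $t_k\equiv^{\ctx}_{\mathsf{s}_k}t_k'$ for $k=1,\dots,n$, and put $u_k=g(t_1',\dots,t_k',t_{k+1},\dots,t_n)$, so $u_0=g(t_1,\dots,t_n)$ and $u_n=g(t_1',\dots,t_n')$. Since $\equiv^{\ctx}$ is an equivalence relation, it suffices to show $u_{k-1}\equiv^{\ctx}_{\mathsf{t}}u_k$ for each $k$. Now $u_{k-1}$ and $u_k$ are $C[t_k]$ and $C[t_k']$ for the one-hole context $C=g(t_1',\dots,t_{k-1}',\cdot_{\mathsf{s}_k},t_{k+1},\dots,t_n)$, which has hole sort $\mathsf{s}_k$ and output sort $\mathsf{t}$; for any further context $D_{\mathsf{u}}[\cdot_{\mathsf{t}}]$ (of either output sort $\mathsf{u}\in\{\Trs,\Cos\}$) the composite $D[C[\cdot_{\mathsf{s}_k}]]$ is again a legal context with hole of sort $\mathsf{s}_k$, so $t_k\equiv^{\ctx}_{\mathsf{s}_k}t_k'$ forces $D[C[t_k]]$ and $D[C[t_k']]$ to agree on termination. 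As $D$ ranges over all such contexts this is precisely $C[t_k]\equiv^{\ctx}_{\mathsf{t}}C[t_k']$, i.e.\ $u_{k-1}\equiv^{\ctx}_{\mathsf{t}}u_k$.

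For maximality, given an adequate congruence $R$ I would first prove the key lemma that $R$ is closed under contexts: $R(t,t')$ implies $R(C[t],C[t'])$ for every context $C$, by induction on $C$. The base case $C=\cdot$ is the hypothesis; the inductive step writes $C=g(C_1,\dots,C_n)$ with the hole in some $C_j$, uses the induction hypothesis on $C_j$ together with reflexivity of $R$ on the hole-free subterms $C_i$ ($i\neq j$), and concludes by compatibility of $R$ with $g$. Then, for $R_\Trs(p,q)$: applying the lemma to an arbitrary reader context $C_\Trs[\cdot_\Trs]$ and invoking adequacy of $R$ gives $C[p],s\Downarrow\iff C[q],s\Downarrow$ for all $s$, and applying it to an arbitrary writer context $C_\Cos[\cdot_\Trs]$ and adequacy gives $C[p]\Downarrow\iff C[q]\Downarrow$; hence $p\equiv^{\ctx}_\Trs q$. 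The writer case $R_\Cos\seq\equiv^{\ctx}_\Cos$ is verbatim the same, so $R\seq\equiv^{\ctx}$, which together with the first two points gives that $\equiv^{\ctx}$ is the greatest adequate congruence.

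The part I expect to require the most care is the congruence property, where the two-sortedness has to be tracked throughout: one must distinguish the hole sort from the output sort of each context and check that plugging a term into an argument slot of a constructor and then into an outer context really composes into a legal context of the required pair of sorts. In the maximality step the one delicate point is the reflexive use of $R$ on the hole-free subterms of a context, which is where the standing understanding that \reflang-congruences are reflexive (equivalently, that it suffices to quantify over equivalence relations) is needed.
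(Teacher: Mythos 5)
Your proposal is correct and follows essentially the same route as the paper's proof: adequacy via the empty context, congruence of $\equiv^{\ctx}$ via composition of contexts, and maximality by showing that any adequate congruence is closed under contexts and then invoking adequacy; you merely spell out the zig-zag and the induction on contexts that the paper compresses into ``clearly'' and ``because $R$ is a congruence''. The reflexivity point you flag is indeed needed but harmless: on the initial algebra every $\ol{\Sigma}$-congruence contains the diagonal (by structural induction from the constants, using that the liftings $V(-)$ and $\store(-)$ act as identities on the parameter parts), so quantifying over not-necessarily-reflexive congruences changes nothing.
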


It is worth mentioning that mere termination equivalence is no longer a suitable
program equivalence in \reflang, as it only relates programs that return strictly
equal stores. This is too restrictive in the higher-order setting: the returned stores may contain readers that should themselves be related, but not necessarily be syntactically equal.
However, to reason about contextual equivalence in an efficient manner, we use
the same method as for \whiletwo, namely that of \emph{termination
  simulations}, which is a higher-order variant of the notion introduced in
\Cref{def:term-sim}.

\begin{notation}
  For every relation $R \subseteq X \times X$, we define the relation $V(R)
  \subseteq V(X) \times V(X)$ by
\[ V(R) = \Delta_{Loc}\cup \Delta_\bZ \cup R, \]
and the relation $\store(R) \subseteq \store(X) \times \store(X)$ by
\[\store(R) = \{(s_{1},s_{2}) \mid \mathrm{dom}(s_{1}) =
    \mathrm{dom}(s_{2}) \,\land\, \forall l\in \mathrm{dom}(s_1).\, V(R)(s_{1}(l),s_{2}(l))\}.
\]
\end{notation}

\begin{definition}
  \label{def:termbisimref}
  A two-sorted relation $R = (R_{\Trs} \subseteq \Tr \times \Tr,R_{\Cos} \subseteq
  \Co \times \Co)$ is a
  (\emph{higher-order}) \emph{termination
    simulation} if the following conditions hold for all $p,q\in \Tr$, $c,d\in \Co$, $s\in \S(\Tr)$ and $v\in V(\Tr)$:
  \begin{enumerate}
  \item \label{enum:whileweakbisim1ref} If $R_\Trs(p,q)$ and $p,s\to c$ then there exists $d$ such that $q,s\to d$ and $R_\Cos(c,d)$.
   \item If $R_\Cos(c,d)$ and $c\xto{s} c'$ or $c\to c'$, then there exists $d'$ such that $d\To d'$ and $R_\Cos(c',d')$.
  \item If $R_\Cos(c,d)$ and $c\downarrow s$, then there exists $s'$ such that $d\Downarrow s'$ and $\S(R_\Trs)(s,s')$.
\item If $R_\Cos(c,d)$ and $c\downarrow v,s$, then there exist $v'$, $s'$ such that $d\Downarrow v',s'$ and $V(R_\Trs)(v,v')$ and $\S(R_\Trs)(s,s')$.
  \end{enumerate}
  \emph{Termination similarity}, denoted by ${\preceq} = (\preceq_{\Trs},\preceq_{\Cos})$,
  is the greatest termination simulation.
\end{definition}

We denote the symmetrization of termination similarity by $\equiv \;= (\equiv_{\Trs},\equiv_{\Cos}) =
(\preceq_{\Trs} \cap \succeq_{\Trs},\preceq_{\Cos} \cap \succeq_{\Cos})$, where
$\succeq_{\Trs/\Cos}$ is the converse of $\preceq_{\Trs/\Cos}$. From the definition of similarity, we immediately obtain:

\begin{proposition}[Adequacy]
  \label{prop:refadeq}
  The relation $\equiv$ is adequate.
\end{proposition}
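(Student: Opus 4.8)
The plan is to reduce adequacy of $\equiv$ to the clauses of \Cref{def:termbisimref}, exploiting that $\equiv$ is symmetric by construction. Since $\equiv_{\Trs} = {\preceq_{\Trs}}\cap{\succeq_{\Trs}}$ and $\equiv_{\Cos} = {\preceq_{\Cos}}\cap{\succeq_{\Cos}}$, it suffices to prove the two \emph{forward} implications ``$c\Downarrow\implies d\Downarrow$'' whenever $c\preceq_{\Cos} d$, and ``$p,s\Downarrow\implies q,s\Downarrow$'' whenever $p\preceq_{\Trs} q$; applying each of these to $(p,q)$ and to $(q,p)$ (resp.\ to $(c,d)$ and $(d,c)$) then yields the biconditionals required by termination-adequacy.

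The core is the writer case, and the key lemma is that weak transitions are simulated by weak transitions: if $c\preceq_{\Cos} d$ and $c\To c'$, then there is $d'$ with $d\To d'$ and $c'\preceq_{\Cos} d'$. I would prove this by induction on the length $n$ of a witnessing sequence $c = c_0,\dots,c_n = c'$ (each $c_i\to c_{i+1}$ or $c_i\xto{s_{i+1}} c_{i+1}$): the base case $n = 0$ is trivial, and the step combines the induction hypothesis with clause~(2) of \Cref{def:termbisimref} applied to a single strong step, using that $\To$ is a preorder (reflexive and closed under concatenation of such sequences, by \Cref{not:weak-trans-ref}). Granting the lemma, suppose $c\preceq_{\Cos} d$ and $c\Downarrow$. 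By \Cref{not:weak-trans-ref} this means $c\To c'\downarrow s$ or $c\To c'\downarrow v,s$ for some $c'$ (and $v$, $s$). In the first case the lemma supplies $d'$ with $d\To d'$ and $c'\preceq_{\Cos} d'$, and clause~(3) then gives $s'$ with $d'\Downarrow s'$; concatenating the weak transitions yields $d\Downarrow s'$, hence $d\Downarrow$. The second case is identical using clause~(4) in place of clause~(3) (the extra data $V(R_{\Trs})(v,v')$ and $\S(R_{\Trs})(s,s')$ delivered by those clauses is irrelevant here). The reader case is then immediate: if $p\preceq_{\Trs} q$ and $p,s\Downarrow$, then $p,s\to c$ with $c\Downarrow$ for some $c$; clause~(1) yields $d$ with $q,s\to d$ and $c\preceq_{\Cos} d$, the writer case gives $d\Downarrow$, and so $q,s\Downarrow$.

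The only step requiring actual work is the ``weak simulates weak'' lemma; the remainder is pure unfolding of \Cref{not:weak-trans-ref} and \Cref{def:termbisimref}. The subtlety to keep in mind is that clause~(2) matches a \emph{single} strong step only by a \emph{weak} transition on the right, so the induction must be organised on the number of strong steps rather than attempted in one shot --- which is exactly why $\To$, and not $\to$, is the appropriate relation on the simulating side.
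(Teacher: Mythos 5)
Your proof is correct and is precisely the routine unfolding that the paper treats as immediate (it offers no explicit proof, stating that adequacy follows directly from the definition of similarity): reduce to the one-sided implications for $\preceq$, establish by induction on the number of strong steps that $\To$-transitions are matched by $\To$-transitions, and finish with clauses (1), (3), (4) of \Cref{def:termbisimref} together with transitivity of $\To$. No gaps.
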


\begin{remark}
  The stricter notion of termination equivalence is recovered by requiring
  $d \Downarrow s$ and $d \Downarrow v,s$ in clauses (3) and (4) of
  \Cref{def:termbisimref}. Termination equivalence is included in $\equiv$.
\end{remark}

We give a few examples of termination-similar terms.

\begin{example}[Skip-ing]
  For any reader $p \in \Tr$, one has $\mathsf{skip} \seqcomp p \equiv_\Trs p$ because the two programs are termination equivalent. The relation $(T_{\Trs},T_{\Cos})$ below and its converse are termination simulations:
  \[
      T_{\Trs}
      =\; \Delta_{\Trs} %
          \cup \{(\mathsf{skip} \seqcomp p, p)\}, \qquad
      T_{\Cos}
      =\; \Delta_{\Cos} \cup \bigcup_{s\in \store(\Tr)}
         \{(\run{s}{\mathsf{skip}} \seqcomp p,c_{s}),(\ret_{s} \seqcomp
         p,c_{s}),(\run{s}{p},c_{s})\},
  \]
\end{example}

\begin{example}
  For any location $l$, one has $l \ass 2 \seqcomp l \ass
    \mathsf{expr}({!l} \oplus {2})
    \;\;\equiv_{\Trs}\;\;
    l \ass 2 \seqcomp l \ass
    \mathsf{expr}({!l} \oplus {!l})$.
\end{example}

\begin{example}
  Given a termination simulation $(T_{\Trs},T_{\Cos})$ and a location $l$,
  one can build a new termination simulation $(Q_{\Trs},Q_{\Cos})$ by setting
  \[
    \begin{aligned}
      Q_{\Trs} =&\; T_{\Trs} \cup \{(l \ass \mathsf{proc}~p, l
                  \ass \mathsf{proc}~q) \mid (p,q) \in T_{\Trs}\} , \\
      Q_{\Cos} =&\;  T_{\Cos} \cup \bigcup_{s \in \store(\Tr), (p,q)\in T_\Trs}\{(l \ass \run{s}{\mathsf{proc}~p},l \ass
                  \run{s}{\mathsf{proc}~q}), (l \ass \ret_{p,s},l \ass
                  \ret_{q,s})  \}.
    \end{aligned}
  \]
  Hence, given any pair $(p,q) \in T_{\Trs}$ of related readers, $l \ass \mathsf{proc}~p
  \leq_{\Trs} l \ass \mathsf{proc}~q$.\lsnote{Assignment wants a reader on the right hand side, right?}
\end{example}

\begin{example}[Landin's knot]
  Recall Landin's knot from \Cref{ex:landin1}. We can see that
  \[
    l \ass
    \mathsf{proc}(\mathsf{expr}~{!}l) ; \mathsf{expr}~{!}l
    \;\;\equiv_\Trs\;\;
    l \ass
    \mathsf{proc}(\mathsf{while}~1~\mathsf{skip}) ; \mathsf{expr}~{!}l
  \]
 because both programs diverge on every input state.
\end{example}

The following theorem is our main technical result on \reflang, and is proved in \Cref{subsec:reflangmodel,sec:proof-ref}.

\begin{theorem}[Compositionality]
  \label{th:finalcong}
  Termination similarity, hence also $\equiv$, forms a congruence.
\end{theorem}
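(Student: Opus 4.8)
The plan is to exhibit \reflang as a model in \emph{higher-order abstract GSOS}~\cite{GoncharovMiliusEtAl23,UrbatTsampasEtAl23b} and then invoke the higher-order analogue of the compositionality theorem \Cref{thm:congruence-abstract-gsos}, following exactly the pattern that worked for \whiletwo. In \Cref{subsec:reflangmodel} I would first set up the model. The syntax is the two-sorted polynomial functor $\Sigma\colon\Set^2\to\Set^2$ induced by the grammar of \reflang, so that $(\mu\Sigma)_\Trs=\Tr$ and $(\mu\Sigma)_\Cos=\Co$. The behaviour is a \emph{mixed-variance} bifunctor $\ol B\colon(\Set^2)^{\op}\times\Set^2\to\Set^2$, where the mixed variance is forced by higher-order store: a reader consumes an input store that itself contains reader terms, so the reader component $\ol B_\Trs(X,Y)=\store(X)\rightharpoonup Y_\Cos$ is contravariant in $X$ through the domain $\store(X)=Loc\rightharpoonup V(X)$, while the writer component is a power-set of a sum of the four writer-transition shapes of \Cref{fig:comp-rules-ho} (progress, store output, termination with/without a value), with the output stores and values living in the covariant argument. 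The rules of \Cref{fig:comp-rules-ho} then assemble into a dinatural higher-order GSOS law $\rho$ of $\Sigma$ over $\ol B$, with operational model $\gamma\colon\mu\Sigma\to\ol B(\mu\Sigma,\mu\Sigma)$ that runs terms according to the rules; dinaturality amounts to observing that the rules inspect their operands only through the generic transition structure and the \emph{natural} map $\oname{eev}$.

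Still in \Cref{subsec:reflangmodel} I would fix the relational data pinning down higher-order termination similarity: a relation lifting $\barB_\trm$ that, in the writer sort, uses $\store(R)$ and $V(R)$ in place of the identity relations — so that stored readers need only be related, not syntactically equal — and the up-closing $\vec{\Pow}$; together with a weakening $\tilde\gamma_\trm$ that replaces the strong writer transitions $\to$, $\xto{s}$, $\downarrow$ by the saturated versions $\To$, $\Downarrow$ of \Cref{not:weak-trans-ref}. Unfolding \Cref{def:coalg-sim} then shows that a $\barB_\trm$-simulation on $(\mu\Sigma,\gamma,\tilde\gamma_\trm)$ is precisely a higher-order termination simulation (\Cref{def:termbisimref}), so that termination similarity $\preceq$ coincides with $\barB_\trm$-similarity, and the greatest such relation exists as the union of all termination simulations.

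The congruence proof in \Cref{sec:proof-ref} then reduces to verifying the three hypotheses of the higher-order compositionality theorem, parallel to the proof sketch of \Cref{thm:whiletwo-cong}: (1) $\barB_\trm$ is up-closed and laxly preserves composition and identities — an elementary computation, using that $\store(-)$ and $V(-)$ are canonical liftings and hence closed under relational composition; (2) $\rho$ is liftable, i.e.\ the rules of \Cref{fig:comp-rules-ho} are parametrically polymorphic in their operands — checked rule by rule, the only non-routine points being that the allocation operator $\&$ is handled uniformly in the choice of the fresh location (here $\store(R)$ forces $\mathrm{dom}(s)=\mathrm{dom}(s')$, so \emph{the same} fresh locations are available on both sides of a related pair) and that $\oname{eev}$ respects $\store(R)$ and $V(R)$, which follows from its naturality; (3) $(\mu\Sigma,\ini,\tilde\gamma_\trm)$ is a lax $\rho$-bialgebra, i.e.\ every writer rule of \Cref{fig:comp-rules-ho} stays sound when $\to$, $\xto{s}$, $\downarrow$ are replaced by $\To$, $\Downarrow$ — proved by iterating the strong congruence-style rules for $\seqcomp$, $\ass$ and $\&$ to obtain their weak forms, exactly as in the first-order argument. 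Given (1)--(3), the theorem yields that $\preceq$ is a $\ol\Sigma$-congruence; since the intersection of a congruence with its converse is again a congruence, $\equiv=\preceq\cap\succeq$ is a congruence as well, and combined with \Cref{prop:refadeq} and \Cref{prop:cont-eq} this also places $\equiv$ inside contextual equivalence.

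I expect hypothesis (3), the lax bialgebra condition, to be the main obstacle, for a reason specific to higher-order store. When a writer terminates, $c\downarrow s$ or $c\downarrow v,s$, a related writer need only terminate with an output store $s'$ satisfying $\store(R_\Trs)(s,s')$, so $s$ and $s'$ may differ in the reader terms they embed; pushing this through rules such as $c\downarrow s\Rightarrow c\seqcomp q\to\run{s}{q}$, the assignment rule, and $c\downarrow v,s\wedge l\notin\mathrm{dom}(s)\Rightarrow\&c\downarrow l,s[l\mapsto v]$ forces one to compare $\run{s}{q}$ with $\run{s'}{q}$ and $\run{s}{p}$ with $\run{s'}{p}$, which again refers to the very relation being built — the circularity that the mixed-variance bialgebra machinery of higher-order abstract GSOS is precisely designed to break. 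A secondary, largely bookkeeping, difficulty is checking the dinaturality side conditions on $\rho$ imposed by the contravariant occurrence of the reader sort in $\ol B$, but these become routine once the naturality of $\oname{eev}$ and the partiality of reader transitions are properly recorded.
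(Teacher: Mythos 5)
Your proposal follows essentially the same route as the paper: instantiate the higher-order compositionality theorem (\Cref{thm:congruence-ho-abstract-gsos}) with the mixed-variance behaviour bifunctor, the lifting built from $\store(R)$, $V(R)$ and $\vec{\Pow}$, and the weakening of \Cref{not:weak-trans-ref}, identify $\barB(\Delta,-)$-similarity with termination similarity, and verify up-closure/lax preservation, liftability (with exactly the paper's key points about $\oname{eev}$ respecting related stores and fresh locations for $\&$), and the lax bialgebra condition via weak instances of the rules. One minor remark: the circularity you flag actually never arises in the lax bialgebra check, which only involves concrete weak instances of the rules with literally equal stores; the relational circularity is absorbed by the liftability condition and the abstract Howe-style congruence theorem, just as you note the machinery is designed to do.
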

As an important corollary of \Cref{prop:cont-eq}, \Cref{prop:refadeq}
and \Cref{th:finalcong}, we have thus established two-way termination similarity as a sound proof method for contextual equivalence:

\begin{corollary}[Soundness] If $p \equiv_{\Trs} q$ then $p
  \equiv^{\mathrm{ctx}}_{\Trs} q$, and if $c \equiv_{\Cos} d$ then $c
  \equiv^{\mathrm{ctx}}_{\Cos} d$.
\end{corollary}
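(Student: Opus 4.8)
The plan is to read the corollary off as a direct consequence of three facts already in place: \Cref{prop:cont-eq} (contextual equivalence $\equiv^{\ctx}$ is the greatest adequate congruence), \Cref{prop:refadeq} (the symmetrized termination similarity $\equiv$ is adequate), and \Cref{th:finalcong} (termination similarity, and hence $\equiv$, is a congruence). The key observation is that $\equiv$ simultaneously enjoys both properties that characterize $\equiv^{\ctx}$ as a greatest element, so $\equiv$ must be contained in $\equiv^{\ctx}$ sortwise, which is exactly what the corollary asserts.

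Concretely, I would first note that \Cref{th:finalcong} already states that $\equiv = (\equiv_{\Trs},\equiv_{\Cos})$ is a congruence on the initial algebra $(\Tr,\Co)$ for $\Sigma$. The passage there from the one-directional similarity $\preceq$ to its symmetrization uses only that congruences are closed under converse and componentwise intersection, which holds because being a congruence amounts to every constructor of \reflang being monotone with respect to the relation, a condition plainly preserved by these two operations. Combining this with \Cref{prop:refadeq}, the relation $\equiv$ is thus an \emph{adequate congruence} in both sorts.

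The remaining step is to invoke the universal characterization from \Cref{prop:cont-eq}: since $\equiv^{\ctx}$ is the \emph{greatest} adequate congruence, every adequate congruence is contained in it sortwise. Applying this to $\equiv$ yields $\equiv_{\Trs} \subseteq \equiv^{\ctx}_{\Trs}$ and $\equiv_{\Cos} \subseteq \equiv^{\ctx}_{\Cos}$. Unfolding these inclusions gives precisely the two implications claimed: $p \equiv_{\Trs} q$ implies $p \equiv^{\ctx}_{\Trs} q$, and $c \equiv_{\Cos} d$ implies $c \equiv^{\ctx}_{\Cos} d$.

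There is no substantive obstacle left at this stage: the two genuinely hard ingredients, namely the compositionality of termination similarity (\Cref{th:finalcong}) and its adequacy (\Cref{prop:refadeq}), are discharged beforehand, and the corollary is then a formal consequence of the greatest-element property of contextual equivalence. The one point warranting care is that it is the \emph{two-sorted, symmetric} relation $\equiv$, rather than the asymmetric similarity $\preceq$, that must be fed into \Cref{prop:cont-eq}; this is exactly why \Cref{th:finalcong} is phrased for $\equiv$ and why \Cref{prop:refadeq} establishes adequacy for the symmetrization, so that both hypotheses of the universal property line up.
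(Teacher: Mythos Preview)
Your proposal is correct and follows exactly the paper's intended route: the corollary is presented there as an immediate consequence of \Cref{prop:cont-eq}, \Cref{prop:refadeq}, and \Cref{th:finalcong}, with no further argument given. Your elaboration of why these three facts combine to yield the inclusion ${\equiv} \subseteq {\equiv^{\ctx}}$ is precisely the reasoning the paper leaves implicit.
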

The presence of a higher-order store complicates the proof of
\Cref{th:finalcong}, to the extent that a standard, language-specific approach
would likely require the development of a tailor-made Kripke/step-indexed logical relation
or a bisimulation-based method as in~\cite{KoutavasWand06}, tasks that are both
technically challenging and laborious. We will instead systematically derive the theorem using the abstract theory of congruence provided by (higher-order) abstract GSOS.

\subsection{Categorical Modelling of $\boldsymbol{\reflang}$}
\label{subsec:reflangmodel}

As for \whiletwo, the syntax and operational semantics \reflang are modelled
over the category $\impcat$. The syntax is given by the polynomial endofunctor $\Sigma$ on $\Set^2$ corresponding to the signature of \reflang:
\begin{equation*}
  \begin{aligned}
    \Sigma_{\Trs}(X)
    =&\;
       \underbrace{1}_{\mathsf{skip}}
       \;+\; \underbrace{\expr \times X_{\Trs}}_{\ass }
       \;+\; \underbrace{\expr \times X_{\Trs}}_{\mathsf{while}}
       \;+\; \underbrace{X_{\Trs} \times X_{\Trs}}_{-;-}
       \;+\; \underbrace{\expr \times X_{\Trs} \times X_{\Trs}}_{\mathsf{if}\text{-statement}}
       \;+\; \underbrace{\expr}_{\mathsf{expr}} \;+\;
       \underbrace{X_{\Trs}}_{\&} \;+\; \underbrace{X_{\Trs}}_{\mathsf{proc}}, \\
    \Sigma_{\Cos}(X)
    =&\; \underbrace{X_{\Trs}\times\store(X_{\Trs})}_{\mathsf{\run{-}{-}}}
       \;+\; \underbrace{\expr \times X_{\Cos}}_{\ass}
       \;+\; \underbrace{X_{\Cos} \times X_{\Trs}}_{-;-} \;+ \underbrace{X_{\Cos}}_{\&}
       +\; \underbrace{\store(X_{\Trs})}_{\ret}
       \;+\; \underbrace{V(X_{\Trs}) \times \store(X_{\Trs})}_{\ret_{-,-}}
       \;+\; \underbrace{\store(X_{\Trs}) \times X_{\Cos}}_{-.-}.
  \end{aligned}
\end{equation*}
The initial algebra for $\mu\Sigma$ is formed by the two-sorted set $(\Tr,\Co)$ of readers and writers.

Recall that the dynamics of \whiletwo corresponds to a coalgebra
\eqref{eq:while2opmodel}. Here, from a coalgebraic standpoint, the presence of a
higher-order store turns the dynamics of \reflang into a
\emph{higher-order} coalgebra~\cite{GoncharovMiliusEtAl23}, that is, a morphism of type $C \to B(C,C)$ where $B\colon \C^\op\times \C\to \C$
is a \emph{bifunctor} of mixed variance. Specifically, the dynamic behaviour of \reflang is modelled by the
bifunctor $B \c (\impcat)^{\opp} \times \impcat \to \impcat$ given as follows:
\begin{equation}
  \label{eq:behreflang}
  B_{\Trs}(X,Y) = (Y_{\Cos}+1)^{\store(X_{\Trs})} \qand
  B_{\Cos}(X,Y)
  = \mathcal{P}((V(Y_{\Trs}) + 1) \times \store(Y_{\Trs}) + Y_{\Cos} \times (\store(Y_{\Trs}) + 1)).
\end{equation}
We regard $B$ as ordered by equality in the reader sort and by pointwise inclusion in the writer sort. The transitions of \Cref{fig:comp-rules-ho} then induce a higher-order $B$-coalgebra
\begin{equation}\label{eq:opmodel-ho}
\gamma\colon (\Tr,\Co)\to ((\Co+1)^{\S(\Tr)}, \mathcal{P}((V(\Tr) + 1) \times \store(\Tr) + \Co \times (\store(\Tr) + 1))) = B((\Tr,\Co),(\Tr,\Co))
\end{equation}
where the reader component is given by $\gamma^\Trs(p)(s) = c$ if $p,s\to c$ and $\gamma^\Trs(p)(s)=*$ if no such transition exists, and the writer component is defined as follows for $d\in \Co$, $s\in \S(\Tr)$ and $v\in V(\Tr)$,
\[ (d,*)\;/\; (d,s)\;/\; (*,s) \;/\; (v,s) \in \gamma^\Cos(c) \quad\iff\quad c\to d\;/\; c\xto{s} d\;/\; c\downarrow s\;/\; c\downarrow v,s.   \]
The operational rules of \reflang can be modelled using a higher-order version of the notion of GSOS law. A (\emph{$0$-pointed}) \emph{higher-order
  GSOS law}~\cite{GoncharovMiliusEtAl23} of $\Sigma\colon \C\to \C$ over $B\colon \C^\opp\times \C\to \C$ is a family
\begin{align}\label{eq:ho-gsos-law}
    \rho_{X,Y} \c \Sigma (X \times B(X,Y))\to B(X, \Sigma^\star (X+Y))\qquad (X,Y\in \C)
  \end{align}
of morphisms in $\C$ dinatural in $X \in \C$ and natural in $Y\in \C$. For \reflang we consider the higher-order GSOS law \eqref{eq:ho-gsos-law} of the syntax functor $\Sigma$ of the language over the behaviour functor \eqref{eq:behreflang} that, as in the first-order case, encodes the rules of the language. For instance, for each $X\in \Set^2$ the  map
\[
\rho_{X,Y}^\Cos\colon \Sigma_\Cos(X\times B(X,Y)) \to \mathcal{P}((V(\Sigmas_\Trs(X+Y)) + 1) \times \store(\Sigmas_\Trs(X+Y)) + \Sigmas_\Cos(X+Y) \times (\store(\Sigmas_\Trs(X+Y)) + 1))
 \]
encodes the behaviour of the allocation operator $\&\colon \Cos\to \Cos$ as follows for $c\in X_\Cos$, $U\in B_\Cos(X,Y)$:
\[
 \rho^{\Cos}_{X,Y}(\&(c,U)) =
        \{(\&d,s) \mid (d,s) \in U\} \cup \{\&d \mid d \in U\} \cup
        \{(l,s[l \mapsto v]) \mid (v,s) \in U \land l \not\in \mathrm{dom}(s)\}.
\]
See \ifarx{Appendix~\ref{app:gsos-laws}}{\cite[App.~A]{stateful25-arxiv}} for the full definition of the law for \reflang. The (di)naturality of the higher-order GSOS law $\rho$ results from the operational
semantics being parametrically polymorphic w.r.t. the choices of $X$ and $Y$.
In particular, it is polymorphic on the set of readers $X_{\Trs}$ in $\store(X_{\Trs})$.

The definition of the operational model for first-order GSOS laws extends to the present
higher-order setting. Specifically, the \emph{operational model} of $\rho$
\eqref{eq:ho-gsos-law} is the higher-order coalgebra
$\gamma \c \mu\Sigma \to B(\mu\Sigma,\mu\Sigma)$
whose structure is the unique morphism making the diagram below commute:

\begin{equation}\label{eq:ho-gsos-opmodel}
\begin{tikzcd}[column sep=5em]
\Sigma(\mu\Sigma) \ar{rr}{\iota} \ar{d}[swap]{\Sigma\langle \id,\, \gamma\rangle} & & \mu\Sigma \ar[dashed]{d}{\gamma} \\
\Sigma(\mu\Sigma\times B(\mu\Sigma,\mu\Sigma)) \ar{r}{\rho_{\mu\Sigma,\mu\Sigma}} & B(\mu\Sigma,\Sigmas(\mu\Sigma+\mu\Sigma)) \ar{r}{B(\id,\,\hat\ini\comp \Sigmas \nabla)} &  B(\mu\Sigma,\mu\Sigma)
\end{tikzcd}
\end{equation}
In the case of \reflang this yields precisely the higher-order coalgebra \eqref{eq:opmodel-ho} that runs program terms.

\subsection{Program Equivalence in $\boldsymbol{\reflang}$ via Higher-Order Abstract GSOS}\label{sec:proof-ref}
Similar to first-order abstract GSOS, a general theory of congruence of
coalgebraic similarity is available in higher-order abstract GSOS. We adapt the
notion of AOS (\Cref{def:aos}) in the base category $\Set^{T}$ (for a set $T$) as follows:

\begin{definition}
A \emph{higher abstract operational setting} (\emph{HAOS})
$\O=(\Sigma,\ol{\Sigma},B,\barB ,\rho,\gamma,\tilde{\gamma})$ is given by
\begin{itemize}
\item a polynomial functor $\Sigma\colon \Set^T\to \Set^T$ with its {canonical} relation lifting $\ol{\Sigma}$;
\item an ordered bifunctor $(B,\preceq)\c (\Set^T)^{\opp} \times \Set^T\to \Set^T$ with a relation lifting $\barB$;
\item a higher-order GSOS law $\rho$ of $\Sigma$ over $B$;
\item the operational model $(\mu\Sigma,\gamma)$ of $\rho$ with a weakening $\tilde{\gamma}$.
\end{itemize}
\end{definition}
Here, a \emph{weakening} $\tilde{\gamma}\colon \mu\Sigma\to B(\mu\Sigma,\mu\Sigma)$ of the higher-order coalgebra $\gamma$ is meant to be one with respect to the {endo}functor $B(\mu\Sigma,-)\colon \Set^T\to \Set^T$ and its lifting $\barB(\Delta_{\mu\Sigma},-)\colon \Rel(\Set^T)\to \Rel(\Set^T)$ as in \Cref{def:coalg-sim}. The compositionality theorem for first-order abstract GSOS (\Cref{thm:congruence-abstract-gsos}) then generalizes accordingly, and is another special case of~\cite[Cor.~VIII.7]{UrbatTsampasEtAl23b}. Its proof in \emph{op.~cit.}~is based on a categorical abstraction of \emph{Howe's method}~\cite{Howe89,Howe96}, a standard technique for establishing congruence of (bi)similarity for higher-order languages.

\begin{theorem}[Compositionality]\label{thm:congruence-ho-abstract-gsos}
Suppose that $\O=(\Sigma,\ol{\Sigma},B,\barB ,\rho,\gamma,\tilde{\gamma})$ is a HAOS such that
\begin{enumerate}
\item\label{ho-comp-cond1} $\barB$ is up-closed and satisfies $\Delta_{B(X,Y)}\seq\barB(\Delta_X,\Delta_Y)$ for all $X,Y\in \Set^T$ and
\[ \barB(R,S)\cdot \barB(\Delta_X,T)\seq \barB(R,S\bullet T) \quad \text{for all $R\monoto X\times X$ and $S,T\monoto Y\times Y$};\]
\item\label{ho-comp-cond2} $\rho$ is liftable;
\item\label{ho-comp-cond3} $(\mu\Sigma,\ini,\tilde{\gamma})$ is a {higher-order} lax
  $\rho$-bialgebra.
\end{enumerate}
Then the $\barB(\Delta_{\mu\Sigma},-)$-similarity relation on $(\mu\Sigma,\gamma,\tilde{\gamma})$ is a $\ol{\Sigma}$-congruence on the initial algebra $(\mu\Sigma,\ini)$.
\end{theorem}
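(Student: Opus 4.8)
The result is the higher-order counterpart of \Cref{thm:congruence-abstract-gsos} and, like the latter, is an instance of the general congruence theorem \cite[Cor.~VIII.7]{UrbatTsampasEtAl23b}: a HAOS over $\Set^T$ with polynomial syntax is precisely a bundle of data to which that corollary applies, and hypotheses \eqref{ho-comp-cond1}--\eqref{ho-comp-cond3} are the specializations of its abstract assumptions. So the plan is to check that the HAOS fits the format required in \emph{op.~cit.}~and then invoke the corollary. For the reader's benefit I would also sketch the argument underlying it, namely a categorical rendering of \emph{Howe's method}~\cite{Howe89,Howe96}. Throughout, write $R$ for the $\barB(\Delta_{\mu\Sigma},-)$-similarity relation on $(\mu\Sigma,\gamma,\tilde\gamma)$; by hypothesis \eqref{ho-comp-cond1} (cf.~\cite[Lemma~VI.3]{UrbatTsampasEtAl23b}) it is a preorder, and the goal is to show that $\ini\colon \ol\Sigma R\to R$ is a morphism of $\Rel(\Set^T)$.

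The first step is to form the \emph{Howe closure} $\hatR$ of $R$: the least two-sorted relation on $\mu\Sigma$ closed under the rule that from $x_i\mathrel{\hatR}y_i$ for all $i$ and $\ini(\f(y_1,\dots,y_n))\mathrel{R}z$ one may conclude $\ini(\f(x_1,\dots,x_n))\mathrel{\hatR}z$, where $\f$ ranges over the operations of $\Sigma$. A routine induction shows that $\hatR$ is reflexive, is a $\ol\Sigma$-congruence, contains $R$, and satisfies $\hatR\bullet R\seq\hatR$. Since $R\seq\hatR$, it therefore suffices to prove the reverse inclusion $\hatR\seq R$: this forces $\hatR=R$, whence $R$ inherits the congruence property of $\hatR$ and we are done. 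Because $R$ is the \emph{greatest} $\barB(\Delta_{\mu\Sigma},-)$-simulation on $(\mu\Sigma,\gamma,\tilde\gamma)$ and is transitive, the inclusion $\hatR\seq R$ reduces in turn to showing that $\hatR$ — or, if the precise setup demands it, its transitive closure, which the identity $\hatR\bullet R\seq\hatR$ keeps under control — is a $\barB(\Delta_{\mu\Sigma},-)$-simulation, i.e.~carries a witnessing morphism into $\barB(\Delta_{\mu\Sigma},\hatR)$ over the relevant pairing of $\gamma$ and $\tilde\gamma$ from \Cref{def:coalg-sim}.

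That last ``simulation'' step is where the remaining hypotheses do the work, and I expect it to be the main obstacle. One argues by induction along the inductive generation of $\hatR$ and along the defining diagram \eqref{eq:ho-gsos-opmodel} of the operational model $\gamma$: liftability \eqref{ho-comp-cond2} transports $\hatR$-relatedness of the operands through $\rho$, and the higher-order lax $\rho$-bialgebra condition \eqref{ho-comp-cond3} then lets the resulting one-step behaviour of $\ini(\f(\vec x))$ be matched \emph{up to weakening} — i.e.~through $\tilde\gamma$ rather than $\gamma$ — by that of the related term; the up-closure of $\barB$ and its lax preservation of identities and of (mixed) composition from \eqref{ho-comp-cond1} are precisely what absorbs the ordering slack in the two weakening diagrams of \Cref{def:coalg-sim} and re-closes the matched behaviours under $\hatR$ (using $\hatR\bullet R\seq\hatR$). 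The delicate bookkeeping, absent from Howe's classical first-order argument, is tracking which transitions are strong (on the left) and which weak (on the right) and verifying that composing an $\hatR$-step with a weak $R$-step again yields a weak $\hatR$-step; the mixed-variance complication — $\hatR$ must be plugged into $\barB$ both in the contravariant slot (where it meets the $\Delta_{\mu\Sigma}$ coming from the self-reference $B(\mu\Sigma,\mu\Sigma)$ of the operational model) and in the covariant one — is handled exactly by the two inequalities $\Delta_{B(X,Y)}\seq\barB(\Delta_X,\Delta_Y)$ and $\barB(R,S)\cdot\barB(\Delta_X,T)\seq\barB(R,S\bullet T)$ of \eqref{ho-comp-cond1}. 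Once $\hatR\seq R$ is in hand, $\hatR=R$, and hence $R$ is a $\ol\Sigma$-congruence, as required.
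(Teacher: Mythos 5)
Your proposal matches the paper's own treatment: the paper likewise establishes this theorem by observing that it is a special case of \cite[Cor.~VIII.7]{UrbatTsampasEtAl23b}, whose proof is a categorical abstraction of Howe's method, and gives no further argument. Your additional sketch of the Howe-closure argument is consistent with that cited proof, so the approach is essentially the same.
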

Here \ref{ho-comp-cond2} means that for each $R\monoto X\times X$ and $S\monoto Y\times Y$ the map $\rho_{X,Y}$ is a relation morphism from $\ol{\Sigma}(R\times \barB(R,S))$ to $\barB(R,\barSigmas(R+S))$, and \ref{ho-comp-cond3} means that the diagram \eqref{eq:ho-gsos-opmodel} commutes laxly when $\gamma$ is replaced with $\tilde{\gamma}$, that is, $B(\id,\hat\iota \comp \Sigmas\nabla) \comp
  \rho_{\mu\Sigma,\mu\Sigma} \comp \Sigma\langle \id,\tilde{\gamma}\rangle \preceq
  \tilde{\gamma} \comp \iota$.

The compositionality of termination similarity in \reflang (\Cref{th:finalcong}) is an instance of the compositionality result for higher-order abstract GSOS: we apply \Cref{thm:congruence-ho-abstract-gsos} to
$ \O=(\Sigma,\ol{\Sigma},B,\barB ,\rho,\gamma,\tilde{\gamma})$
where $\rho$ is the higher-order GSOS law for \reflang, the relation lifting $\barB$ of $B$ \eqref{eq:behreflang} is defined by
\begin{align*}
(\barB(R,S))_\Trs &= \{ (f_1,g_1) \mid \forall s_1,s_2.\, \store(R_{\Trs})(s_{1},s_{2}) \implies f_1(s_1)=f_2(s_2)=\star \vee S_\Cos(f_1(s_1),f_2(s_2)) \},\\
(\barB(R,S))_\Cos &= \vec{\Pow}((V(S_\Trs)+\Delta_1)\times \S(S_\Trs) + S_\Cos \times (\S(S_\Trs)+\Delta_1)),
\end{align*}
and $\tilde{\gamma}\colon (\Tr,\Co)\to B((\Tr,\Co),(\Tr,\Co))$ is the weakening of $\gamma$ given by \Cref{not:weak-trans-ref}:
\[ \tilde{\gamma}^\Trs = \gamma^\Trs \qand \tilde{\gamma}^\Cos(c)= \{ d \mid c\To d \} \cup \{ (d,s) \mid c\To d,s \} \cup \{ s\mid d\Downarrow s\} \cup \{ (v,s) \mid c\Downarrow v,s \}.  \]
Observe that $\barB(\Delta_{\mu\Sigma},-)$-similarity on the operational model $((\Tr,\Co),\gamma,\tilde{\gamma})$ is termination similarity (\Cref{def:termbisimref}). It is then a matter of routine calculations, following the definitions of the ingredients of $\O$, to verify that the conditions \ref{ho-comp-cond1}--\ref{ho-comp-cond3} of \Cref{thm:congruence-ho-abstract-gsos} hold. Once again, \ref{ho-comp-cond2} boils down to parametric polymorphism of the rules of \reflang, and \ref{ho-comp-cond3} to the rules being sound for weak transitions.

\subsection{Towards Variable Binding and Higher-Order Features}\label{sec:higher-higher-order}

The reader-writer approach to the operational semantics of stateful languages also
applies to higher-order languages that feature higher-order functions, variable binding and
substitution.
More so, at first glance, this combination of reader-writer semantics with
higher-order features seems to be compatible with higher-order
abstract GSOS (as its constituents demonstrably are). We shall now briefly sketch the
key ideas of a reader-writer, call-by-name, untyped $\lambda$-calculus with higher-order store (which we call $\lambda$-\reflang) in higher-order abstract GSOS, by extending \reflang accordingly.

The \emph{reader} syntax of $\lambda$-\reflang extends that of \reflang by
adding variables $x$, $\lambda$-abstractions $\lambda x.p$ and applications
$p\;q$. On the side of \emph{writers}, we add ``ongoing'' applications, i.e.
expressions of the form $c\;q$. The operational semantics of the new constructs
are as follows:
\vspace{-0.03cm}
\[
  \begin{gathered}
    \inference{}{p\;q,s \to \run{s}{p}\;q}\qquad
    \inference{c \to d}{c\;q \to d\;q}\qquad
\inference{c \xto{s} d}{c\;q \xto{s} d\;q}\qquad
    \inference{}{\run{s}{\lambda x.p}\;q \xto{s} \run{s}{p[q/x]}}
\end{gathered}
\]

The operational semantics of $\lambda$-\reflang live in $(\Set^{\fset})^{2}$,
the category of two-sorted (covariant) presheaves over the category $\fset$ of finite cardinals and functions (representing untyped variable
contexts). Modelling syntax for languages with variable binding in such presheaf categories is standard, following Fiore et
al.~\cite{FiorePlotkinEtAl99}. In this new setting, an object $X \in
(\Set^{\fset})^{2}$ is a pair of presheaves $(X_{\Trs},X_{\Cos})$ in $\Set^\fset$.
Specifically for $\lambda$-\reflang, the object of terms would be $(\Tr,\Co)$ where
 $\Tr(n)$ and $\Co(n)$ are the sets of readers and writers in the variable context $x_1,\ldots,x_n$.

Implementing $\lambda$-\reflang as a higher-order GSOS law
requires modelling $\lambda$-abstractions as functions on readers. The behaviour
corresponding to $\lambda$-abstractions should thus be given by an exponential
$Y_{\Trs}^{X_{\Trs}}$ (We
  omit some technical details
  of the behaviour, such as a reader-reader substitution structure of terms~\cite{GoncharovMiliusEtAl23}). The semantics of
$\lambda$-abstractions is given by a reader-labelled transition
system, e.g.
\[
  \inference{}{\lambda x.p \xto{q} p[q/x]}
  \qquad
  \inference{p \xto{r} q}{\run{s}{p} \xto{r} \run{s}{q},s} \qquad
  \inference{c \xto{q,s} p'}{c\;q \xto{s} p'}.
\]
In the above, $\lambda x.p$ behaves as a function mapping a
reader $q$ to $p[x/q]$. Under this perspective, which is standard in
higher-order abstract GSOS~\cite[\textsection 5]{GoncharovMiliusEtAl23},
$\beta$-reduction in $\lambda$-\reflang is implemented in terms of the two
rules for resp. $\run{s}{-}$ and application. The exact details of the above
semantics, as well as the development of a theory of compositionality for $\lambda$-\reflang and similar languages via the methods provided by higher-order abstract GSOS, are left as a prospect for future work.

\section{Conclusion and Future Work}

We have presented a systematic approach to the operational semantics of
stateful languages, based on the formal distinction
between readers and writers. This approach is capable of tapping into
the powerful theory of (higher-order) abstract GSOS, refuting the accepted
surmise that stateful languages are not compatible with the framework. Taking
advantage of this fact, we derive efficient reasoning techniques for
program equivalence for both first-order and higher-order store. The extension to fully fledged higher-order languages as outlined in \Cref{sec:higher-higher-order} is a natural next step.

Our approach notably does not treat the semantics of state in terms of the \emph{state
monad}, as is often the case in works about program equivalence on
the $\lambda$-calculus with algebraic
effects~\cite{Dal-LagoGavazzoEtAl17,PlotkinPower01,JohannSimpsonEtAl10}. Even
though the distinction between readers and writers alludes to the two components
of the state monad, it is currently unclear if there is some formal
connection to the latter. However, it is worth
pointing out that the divergence of
\citet{Abou-SalehPattinson11,Abou-SalehPattinson13} from the Turi-Plotkin
framework can be partly attributed to the non-commutativity of the state
monad. Now, with a satisfactory abstract GSOS based approach to state, revisiting abstract
GSOS in Kleisli categories of \emph{commutative} monads, e.g. the power set monad for
trace semantics, has become a possibility.

\begin{acks}
Sergey Goncharov and Stelios Tsampas acknowledge funding by the Deutsche Forschungsgemeinschaft (DFG, German Research Foundation) - project numbers 527481841.
Stefan Milius was supported by Deutsche Forschungsgemeinschaft (DFG, German Research Foundation) – project number 517924115. %
Lutz Schr\"oder was supported by Deutsche Forschungsgemeinschaft (DFG, German Research Foundation) – project number 531706730. %
Henning Urbat is supported by Deutsche Forschungsgemeinschaft (DFG, German Research Foundation) -- project number 470467389.
\end{acks}

\renewcommand{\c}{\cedilla}

\bibliography{mainBiblio}


\providecommand{\noopsort}[1]{}
\begin{thebibliography}{47}


\ifx \showCODEN    \undefined \def \showCODEN     #1{\unskip}     \fi
\ifx \showDOI      \undefined \def \showDOI       #1{#1}\fi
\ifx \showISBNx    \undefined \def \showISBNx     #1{\unskip}     \fi
\ifx \showISBNxiii \undefined \def \showISBNxiii  #1{\unskip}     \fi
\ifx \showISSN     \undefined \def \showISSN      #1{\unskip}     \fi
\ifx \showLCCN     \undefined \def \showLCCN      #1{\unskip}     \fi
\ifx \shownote     \undefined \def \shownote      #1{#1}          \fi
\ifx \showarticletitle \undefined \def \showarticletitle #1{#1}   \fi
\ifx \showURL      \undefined \def \showURL       {\relax}        \fi
\providecommand\bibfield[2]{#2}
\providecommand\bibinfo[2]{#2}
\providecommand\natexlab[1]{#1}
\providecommand\showeprint[2][]{arXiv:#2}

\bibitem[Abel et~al\mbox{.}(2019)]%
        {AbelAllaisEtAl19}
\bibfield{author}{\bibinfo{person}{Andreas Abel}, \bibinfo{person}{Guillaume
  Allais}, \bibinfo{person}{Aliya Hameer}, \bibinfo{person}{Brigitte Pientka},
  \bibinfo{person}{Alberto Momigliano}, \bibinfo{person}{Steven Sch{\"{a}}fer},
  {and} \bibinfo{person}{Kathrin Stark}.} \bibinfo{year}{2019}\natexlab{}.
\newblock \showarticletitle{{POPLMark} reloaded: Mechanizing proofs by logical
  relations}.
\newblock \bibinfo{journal}{\emph{J.\ Funct.\ Program.}}  \bibinfo{volume}{29}
  (\bibinfo{year}{2019}), \bibinfo{pages}{e19}.
\newblock
\urldef\tempurl%
\url{https://doi.org/10.1017/S0956796819000170}
\showDOI{\tempurl}


\bibitem[Abou{-}Saleh and Pattinson(2011)]%
        {Abou-SalehPattinson11}
\bibfield{author}{\bibinfo{person}{Faris Abou{-}Saleh} {and}
  \bibinfo{person}{Dirk Pattinson}.} \bibinfo{year}{2011}\natexlab{}.
\newblock \showarticletitle{Towards Effects in Mathematical Operational
  Semantics}. In \bibinfo{booktitle}{\emph{Mathematical Foundations of
  Programming Semantics, {MFPS} 2011}} \emph{(\bibinfo{series}{ENTCS},
  Vol.~\bibinfo{volume}{276})}, \bibfield{editor}{\bibinfo{person}{Michael~W.
  Mislove} {and} \bibinfo{person}{Jo{\"{e}}l Ouaknine}} (Eds.).
  \bibinfo{publisher}{Elsevier}, \bibinfo{pages}{81--104}.
\newblock
\urldef\tempurl%
\url{https://doi.org/10.1016/j.entcs.2011.09.016}
\showDOI{\tempurl}


\bibitem[Abou{-}Saleh and Pattinson(2013)]%
        {Abou-SalehPattinson13}
\bibfield{author}{\bibinfo{person}{Faris Abou{-}Saleh} {and}
  \bibinfo{person}{Dirk Pattinson}.} \bibinfo{year}{2013}\natexlab{}.
\newblock \showarticletitle{Comodels and Effects in Mathematical Operational
  Semantics}. In \bibinfo{booktitle}{\emph{Foundations of Software Science and
  Computation Structures - 16th International Conference, {FOSSACS} 2013}}
  \emph{(\bibinfo{series}{Lecture Notes in Computer Science},
  Vol.~\bibinfo{volume}{7794})}, \bibfield{editor}{\bibinfo{person}{Frank
  Pfenning}} (Ed.). \bibinfo{publisher}{Springer}, \bibinfo{pages}{129--144}.
\newblock
\urldef\tempurl%
\url{https://doi.org/10.1007/978-3-642-37075-5\_9}
\showDOI{\tempurl}


\bibitem[Ahman et~al\mbox{.}(2017)]%
        {AhmanFournetEtAl17}
\bibfield{author}{\bibinfo{person}{Danel Ahman}, \bibinfo{person}{C\'{e}dric
  Fournet}, \bibinfo{person}{C\u{a}t\u{a}lin Hri\c{t}cu},
  \bibinfo{person}{Kenji Maillard}, \bibinfo{person}{Aseem Rastogi}, {and}
  \bibinfo{person}{Nikhil Swamy}.} \bibinfo{year}{2017}\natexlab{}.
\newblock \showarticletitle{Recalling a witness: foundations and applications
  of monotonic state}.
\newblock \bibinfo{journal}{\emph{Proc. ACM Program. Lang.}}
  \bibinfo{volume}{2}, \bibinfo{number}{POPL}, Article \bibinfo{articleno}{65}
  (\bibinfo{date}{Dec.} \bibinfo{year}{2017}), \bibinfo{numpages}{30}~pages.
\newblock
\urldef\tempurl%
\url{https://doi.org/10.1145/3158153}
\showDOI{\tempurl}


\bibitem[Ahmed et~al\mbox{.}(2010)]%
        {AhmedBentonEtAl10}
\bibfield{editor}{\bibinfo{person}{Amal Ahmed}, \bibinfo{person}{Nick Benton},
  \bibinfo{person}{Lars Birkedal}, {and} \bibinfo{person}{Martin Hofmann}}
  (Eds.). \bibinfo{year}{2010}\natexlab{}.
\newblock \bibinfo{booktitle}{\emph{Modelling, Controlling and Reasoning About
  State, 29.08. - 03.09.2010}}. \bibinfo{series}{Dagstuhl Seminar Proceedings},
  Vol.~\bibinfo{volume}{10351}. \bibinfo{publisher}{Schloss Dagstuhl -
  Leibniz-Zentrum f{\"{u}}r Informatik, Germany}.
\newblock
\urldef\tempurl%
\url{http://drops.dagstuhl.de/portals/10351/}
\showURL{%
\tempurl}


\bibitem[Ahmed et~al\mbox{.}(2008)]%
        {AhmedBentonEtAl08}
\bibfield{editor}{\bibinfo{person}{Amal Ahmed}, \bibinfo{person}{Nick Benton},
  \bibinfo{person}{Martin Hofmann}, {and} \bibinfo{person}{Greg Morrisett}}
  (Eds.). \bibinfo{year}{2008}\natexlab{}.
\newblock \bibinfo{booktitle}{\emph{Types, Logics and Semantics for State,
  03.02. - 08.02.2008}}. \bibinfo{series}{Dagstuhl Seminar Proceedings},
  Vol.~\bibinfo{volume}{08061}. \bibinfo{publisher}{Internationales Begegnungs-
  und Forschungszentrum fuer Informatik (IBFI), Schloss Dagstuhl, Germany}.
\newblock
\urldef\tempurl%
\url{http://drops.dagstuhl.de/portals/08061/}
\showURL{%
\tempurl}


\bibitem[Ahmed(2004)]%
        {Ahmed04}
\bibfield{author}{\bibinfo{person}{Amal~Jamil Ahmed}.}
  \bibinfo{year}{2004}\natexlab{}.
\newblock \emph{\bibinfo{title}{Semantics of types for mutable state}}.
\newblock \bibinfo{thesistype}{Ph.\,D. Dissertation}. \bibinfo{address}{USA}.
\newblock
\newblock
\shownote{AAI3136691}.


\bibitem[Barr(1970)]%
        {Barr70}
\bibfield{author}{\bibinfo{person}{Michael Barr}.}
  \bibinfo{year}{1970}\natexlab{}.
\newblock \showarticletitle{Coequalizers and free triples}.
\newblock \bibinfo{journal}{\emph{Math.~Z.}}  \bibinfo{volume}{116}
  (\bibinfo{year}{1970}), \bibinfo{pages}{307--322}.
\newblock
\urldef\tempurl%
\url{https://doi.org/10.1007/BF01111838}
\showDOI{\tempurl}


\bibitem[Benton and Hur(2009)]%
        {BentonHur09}
\bibfield{author}{\bibinfo{person}{Nick Benton} {and}
  \bibinfo{person}{Chung-Kil Hur}.} \bibinfo{year}{2009}\natexlab{}.
\newblock \showarticletitle{Biorthogonality, Step-Indexing and Compiler
  Correctness}. In \bibinfo{booktitle}{\emph{14th ACM SIGPLAN International
  Conference on Functional Programming (ICFP 2009)}}. \bibinfo{publisher}{ACM},
  \bibinfo{pages}{97–108}.
\newblock
\urldef\tempurl%
\url{https://doi.org/10.1145/1596550.1596567}
\showDOI{\tempurl}


\bibitem[Birkedal et~al\mbox{.}(2011)]%
        {BirkedalReusEtAl11}
\bibfield{author}{\bibinfo{person}{Lars Birkedal}, \bibinfo{person}{Bernhard
  Reus}, \bibinfo{person}{Jan Schwinghammer}, \bibinfo{person}{Kristian
  St\o{}vring}, \bibinfo{person}{Jacob Thamsborg}, {and}
  \bibinfo{person}{Hongseok Yang}.} \bibinfo{year}{2011}\natexlab{}.
\newblock \showarticletitle{Step-indexed kripke models over recursive worlds}.
\newblock \bibinfo{journal}{\emph{SIGPLAN Not.}} \bibinfo{volume}{46},
  \bibinfo{number}{1} (\bibinfo{date}{Jan.} \bibinfo{year}{2011}),
  \bibinfo{pages}{119–132}.
\newblock
\urldef\tempurl%
\url{https://doi.org/10.1145/1925844.1926401}
\showDOI{\tempurl}


\bibitem[Bloom(1995)]%
        {Bloom95}
\bibfield{author}{\bibinfo{person}{Bard Bloom}.}
  \bibinfo{year}{1995}\natexlab{}.
\newblock \showarticletitle{Structural Operational Semantics for Weak
  Bisimulations}.
\newblock \bibinfo{journal}{\emph{Theor. Comput. Sci.}} \bibinfo{volume}{146},
  \bibinfo{number}{1{\&}2} (\bibinfo{year}{1995}), \bibinfo{pages}{25--68}.
\newblock
\urldef\tempurl%
\url{https://doi.org/10.1016/0304-3975(94)00152-9}
\showDOI{\tempurl}


\bibitem[Bonchi et~al\mbox{.}(2015)]%
        {BonchiPetrisanEtAl15}
\bibfield{author}{\bibinfo{person}{Filippo Bonchi}, \bibinfo{person}{Daniela
  Petrisan}, \bibinfo{person}{Damien Pous}, {and} \bibinfo{person}{Jurriaan
  Rot}.} \bibinfo{year}{2015}\natexlab{}.
\newblock \showarticletitle{Lax Bialgebras and Up-To Techniques for Weak
  Bisimulations}. In \bibinfo{booktitle}{\emph{26th International Conference on
  Concurrency Theory, {CONCUR} 2015, Madrid, Spain, September 1.4, 2015}}.
  \bibinfo{pages}{240--253}.
\newblock
\urldef\tempurl%
\url{https://doi.org/10.4230/LIPIcs.CONCUR.2015.240}
\showDOI{\tempurl}


\bibitem[Dal~Lago et~al\mbox{.}(2017)]%
        {Dal-LagoGavazzoEtAl17}
\bibfield{author}{\bibinfo{person}{Ugo Dal~Lago}, \bibinfo{person}{Francesco
  Gavazzo}, {and} \bibinfo{person}{Paul~Blain Levy}.}
  \bibinfo{year}{2017}\natexlab{}.
\newblock \showarticletitle{Effectful applicative bisimilarity: Monads,
  relators, and {H}owe's method}. In \bibinfo{booktitle}{\emph{32nd Annual
  {ACM/IEEE} Symposium on Logic in Computer Science (LICS 2017)}}.
  \bibinfo{publisher}{{IEEE} Computer Society}, \bibinfo{pages}{1--12}.
\newblock
\urldef\tempurl%
\url{https://doi.org/10.1109/LICS.2017.8005117}
\showDOI{\tempurl}


\bibitem[Fiore et~al\mbox{.}(1999)]%
        {FiorePlotkinEtAl99}
\bibfield{author}{\bibinfo{person}{Marcelo~P. Fiore},
  \bibinfo{person}{Gordon~D. Plotkin}, {and} \bibinfo{person}{Daniele Turi}.}
  \bibinfo{year}{1999}\natexlab{}.
\newblock \showarticletitle{Abstract Syntax and Variable Binding}. In
  \bibinfo{booktitle}{\emph{14th Annual {IEEE} Symposium on Logic in Computer
  Science (LICS 1999)}}. \bibinfo{publisher}{{IEEE} Computer Society},
  \bibinfo{pages}{193--202}.
\newblock
\urldef\tempurl%
\url{https://doi.org/10.1109/LICS.1999.782615}
\showDOI{\tempurl}


\bibitem[Goncharov et~al\mbox{.}(2022)]%
        {GoncharovMiliusEtAl22}
\bibfield{author}{\bibinfo{person}{Sergey Goncharov}, \bibinfo{person}{Stefan
  Milius}, \bibinfo{person}{Lutz Schr{\"{o}}der}, \bibinfo{person}{Stelios
  Tsampas}, {and} \bibinfo{person}{Henning Urbat}.}
  \bibinfo{year}{2022}\natexlab{}.
\newblock \showarticletitle{Stateful Structural Operational Semantics}. In
  \bibinfo{booktitle}{\emph{7th International Conference on Formal Structures
  for Computation and Deduction, FSCD'22}} \emph{(\bibinfo{series}{LIPIcs},
  Vol.~\bibinfo{volume}{228})}, \bibfield{editor}{\bibinfo{person}{Amy~P.
  Felty}} (Ed.). \bibinfo{publisher}{Schloss Dagstuhl - Leibniz-Zentrum
  f{\"{u}}r Informatik}, \bibinfo{pages}{30:1--30:19}.
\newblock
\urldef\tempurl%
\url{https://doi.org/10.4230/LIPIcs.FSCD.2022.30}
\showDOI{\tempurl}


\bibitem[Goncharov et~al\mbox{.}(2023)]%
        {GoncharovMiliusEtAl23}
\bibfield{author}{\bibinfo{person}{Sergey Goncharov}, \bibinfo{person}{Stefan
  Milius}, \bibinfo{person}{Lutz Schr\"{o}der}, \bibinfo{person}{Stelios
  Tsampas}, {and} \bibinfo{person}{Henning Urbat}.}
  \bibinfo{year}{2023}\natexlab{}.
\newblock \showarticletitle{Towards a Higher-Order Mathematical Operational
  Semantics}. In \bibinfo{booktitle}{\emph{50th ACM SIGPLAN Symposium on
  Principles of Programming Languages (POPL 2023)}}
  \emph{(\bibinfo{series}{Proc. ACM Program. Lang.},
  Vol.~\bibinfo{volume}{7})}. \bibinfo{publisher}{ACM}, Article
  \bibinfo{articleno}{22}, \bibinfo{numpages}{27}~pages.
\newblock
\urldef\tempurl%
\url{https://doi.org/10.1145/3571215}
\showDOI{\tempurl}


\bibitem[Harper(1994)]%
        {Harper94}
\bibfield{author}{\bibinfo{person}{Robert Harper}.}
  \bibinfo{year}{1994}\natexlab{}.
\newblock \showarticletitle{A simplified account of polymorphic references}.
\newblock \bibinfo{journal}{\emph{Inform. Process. Lett.}}
  \bibinfo{volume}{51}, \bibinfo{number}{4} (\bibinfo{year}{1994}),
  \bibinfo{pages}{201--206}.
\newblock
\urldef\tempurl%
\url{https://doi.org/10.1016/0020-0190(94)90120-1}
\showDOI{\tempurl}


\bibitem[Hermida and Jacobs(1998)]%
        {HermidaJacobs98}
\bibfield{author}{\bibinfo{person}{Claudio Hermida} {and} \bibinfo{person}{Bart
  Jacobs}.} \bibinfo{year}{1998}\natexlab{}.
\newblock \showarticletitle{Structural {Induction} and {Coinduction} in a
  {Fibrational Setting}}.
\newblock \bibinfo{journal}{\emph{Information and Computation}}
  \bibinfo{volume}{145}, \bibinfo{number}{2} (\bibinfo{year}{1998}),
  \bibinfo{pages}{107--152}.
\newblock
\urldef\tempurl%
\url{https://doi.org/10.1006/inco.1998.2725}
\showDOI{\tempurl}


\bibitem[Howe(1989)]%
        {Howe89}
\bibfield{author}{\bibinfo{person}{Douglas~J. Howe}.}
  \bibinfo{year}{1989}\natexlab{}.
\newblock \showarticletitle{Equality In Lazy Computation Systems}. In
  \bibinfo{booktitle}{\emph{4th Annual Symposium on Logic in Computer Science
  (LICS 1989)}}. \bibinfo{publisher}{{IEEE} Computer Society},
  \bibinfo{pages}{198--203}.
\newblock
\urldef\tempurl%
\url{https://doi.org/10.1109/LICS.1989.39174}
\showDOI{\tempurl}


\bibitem[Howe(1996)]%
        {Howe96}
\bibfield{author}{\bibinfo{person}{Douglas~J. Howe}.}
  \bibinfo{year}{1996}\natexlab{}.
\newblock \showarticletitle{Proving Congruence of Bisimulation in Functional
  Programming Languages}.
\newblock \bibinfo{journal}{\emph{Inf. Comput.}} \bibinfo{volume}{124},
  \bibinfo{number}{2} (\bibinfo{year}{1996}), \bibinfo{pages}{103--112}.
\newblock
\urldef\tempurl%
\url{https://doi.org/10.1006/inco.1996.0008}
\showDOI{\tempurl}


\bibitem[Jacobs(2016)]%
        {Jacobs16}
\bibfield{author}{\bibinfo{person}{Bart Jacobs}.}
  \bibinfo{year}{2016}\natexlab{}.
\newblock \bibinfo{booktitle}{\emph{{Introduction to Coalgebra: Towards
  Mathematics of States and Observation}}}. \bibinfo{series}{Cambridge Tracts
  in Theoretical Computer Science}, Vol.~\bibinfo{volume}{59}.
\newblock \bibinfo{publisher}{Cambridge University Press}.
\newblock
\urldef\tempurl%
\url{https://doi.org/10.1017/CBO9781316823187}
\showDOI{\tempurl}


\bibitem[Johann et~al\mbox{.}(2010)]%
        {JohannSimpsonEtAl10}
\bibfield{author}{\bibinfo{person}{Patricia Johann}, \bibinfo{person}{Alex
  Simpson}, {and} \bibinfo{person}{Janis Voigtländer}.}
  \bibinfo{year}{2010}\natexlab{}.
\newblock \showarticletitle{A Generic Operational Metatheory for Algebraic
  Effects}. In \bibinfo{booktitle}{\emph{25th Annual IEEE Symposium on Logic in
  Computer Science (LICS 2010)}}. \bibinfo{publisher}{IEEE Computer Society},
  \bibinfo{pages}{209--218}.
\newblock
\urldef\tempurl%
\url{https://doi.org/10.1109/LICS.2010.29}
\showDOI{\tempurl}


\bibitem[Jung et~al\mbox{.}(2016)]%
        {JungKrebbersEtAl16}
\bibfield{author}{\bibinfo{person}{Ralf Jung}, \bibinfo{person}{Robbert
  Krebbers}, \bibinfo{person}{Lars Birkedal}, {and} \bibinfo{person}{Derek
  Dreyer}.} \bibinfo{year}{2016}\natexlab{}.
\newblock \showarticletitle{Higher-order ghost state}.
\newblock \bibinfo{journal}{\emph{SIGPLAN Not.}} \bibinfo{volume}{51},
  \bibinfo{number}{9} (\bibinfo{date}{Sept.} \bibinfo{year}{2016}),
  \bibinfo{pages}{256–269}.
\newblock
\urldef\tempurl%
\url{https://doi.org/10.1145/3022670.2951943}
\showDOI{\tempurl}


\bibitem[Koutavas and Wand(2006)]%
        {KoutavasWand06}
\bibfield{author}{\bibinfo{person}{Vasileios Koutavas} {and}
  \bibinfo{person}{Mitchell Wand}.} \bibinfo{year}{2006}\natexlab{}.
\newblock \showarticletitle{Small bisimulations for reasoning about
  higher-order imperative programs}. In \bibinfo{booktitle}{\emph{Proceedings
  of the 33rd {ACM} {SIGPLAN-SIGACT} Symposium on Principles of Programming
  Languages, {POPL} 2006, Charleston, South Carolina, USA, January 11-13,
  2006}}, \bibfield{editor}{\bibinfo{person}{J.~Gregory Morrisett} {and}
  \bibinfo{person}{Simon L.~Peyton Jones}} (Eds.). \bibinfo{publisher}{{ACM}},
  \bibinfo{pages}{141--152}.
\newblock
\urldef\tempurl%
\url{https://doi.org/10.1145/1111037.1111050}
\showDOI{\tempurl}


\bibitem[Kurz and Velebil(2016)]%
        {KurzVelebil16}
\bibfield{author}{\bibinfo{person}{Alexander Kurz} {and} \bibinfo{person}{Ji{\v
  r}{\'i} Velebil}.} \bibinfo{year}{2016}\natexlab{}.
\newblock \showarticletitle{Relation Lifting, a Survey}.
\newblock \bibinfo{journal}{\emph{Journal of Logical and Algebraic Methods in
  Programming}} \bibinfo{volume}{85}, \bibinfo{number}{4}
  (\bibinfo{year}{2016}), \bibinfo{pages}{475--499}.
\newblock
\urldef\tempurl%
\url{https://doi.org/10.1016/j.jlamp.2015.08.002}
\showDOI{\tempurl}


\bibitem[Landin(1964)]%
        {landin64}
\bibfield{author}{\bibinfo{person}{P.~J. Landin}.}
  \bibinfo{year}{1964}\natexlab{}.
\newblock \showarticletitle{The Mechanical Evaluation of Expressions}.
\newblock \bibinfo{journal}{\emph{Comput. J.}} \bibinfo{volume}{6},
  \bibinfo{number}{4} (\bibinfo{year}{1964}), \bibinfo{pages}{308--320}.
\newblock
\urldef\tempurl%
\url{https://doi.org/10.1093/comjnl/6.4.308}
\showDOI{\tempurl}


\bibitem[Levy et~al\mbox{.}(2003)]%
        {LevyPowerEtAl03}
\bibfield{author}{\bibinfo{person}{Paul Levy}, \bibinfo{person}{John Power},
  {and} \bibinfo{person}{Hayo Thielecke}.} \bibinfo{year}{2003}\natexlab{}.
\newblock \showarticletitle{Modelling environments in call-by-value programming
  languages}.
\newblock \bibinfo{journal}{\emph{Inf.\ Comput.}} \bibinfo{volume}{185},
  \bibinfo{number}{2} (\bibinfo{year}{2003}), \bibinfo{pages}{182--210}.
\newblock
\urldef\tempurl%
\url{https://doi.org/10.1016/S0890-5401(03)00088-9}
\showDOI{\tempurl}


\bibitem[Levy(2001)]%
        {Levy01}
\bibfield{author}{\bibinfo{person}{Paul~Blain Levy}.}
  \bibinfo{year}{2001}\natexlab{}.
\newblock \emph{\bibinfo{title}{Call-by-push-value}}.
\newblock \bibinfo{thesistype}{Ph.\,D. Dissertation}. \bibinfo{school}{Queen
  Mary University of London, {UK}}.
\newblock
\urldef\tempurl%
\url{https://ethos.bl.uk/OrderDetails.do?uin=uk.bl.ethos.369233}
\showURL{%
\tempurl}


\bibitem[Mac~Lane(1978)]%
        {Mac-Lane78}
\bibfield{author}{\bibinfo{person}{S. Mac~Lane}.}
  \bibinfo{year}{1978}\natexlab{}.
\newblock \bibinfo{booktitle}{\emph{Categories for the {{Working
  Mathematician}}} (\bibinfo{edition}{2} ed.)}. \bibinfo{series}{Graduate
  {{Texts}} in {{Mathematics}}}, Vol.~\bibinfo{volume}{5}.
\newblock \bibinfo{publisher}{Springer}.
\newblock
\urldef\tempurl%
\url{http://link.springer.com/10.1007/978-1-4757-4721-8}
\showURL{%
\tempurl}


\bibitem[Mason and Talcott(1991)]%
        {MasonTalcott91}
\bibfield{author}{\bibinfo{person}{Ian Mason} {and} \bibinfo{person}{Carolyn
  Talcott}.} \bibinfo{year}{1991}\natexlab{}.
\newblock \showarticletitle{Equivalence in functional languages with effects}.
\newblock \bibinfo{journal}{\emph{Journal of Functional Programming}}
  \bibinfo{volume}{1}, \bibinfo{number}{3} (\bibinfo{year}{1991}),
  \bibinfo{pages}{287–327}.
\newblock
\urldef\tempurl%
\url{https://doi.org/10.1017/S0956796800000125}
\showDOI{\tempurl}


\bibitem[O'Hearn et~al\mbox{.}(2001)]%
        {OHearnReynoldsEtAl01}
\bibfield{author}{\bibinfo{person}{Peter~W. O'Hearn}, \bibinfo{person}{John~C.
  Reynolds}, {and} \bibinfo{person}{Hongseok Yang}.}
  \bibinfo{year}{2001}\natexlab{}.
\newblock \showarticletitle{Local Reasoning about Programs that Alter Data
  Structures}. In \bibinfo{booktitle}{\emph{Proceedings of the 15th
  International Workshop on Computer Science Logic}}
  \emph{(\bibinfo{series}{CSL '01})}. \bibinfo{publisher}{Springer-Verlag},
  \bibinfo{address}{Berlin, Heidelberg}, \bibinfo{pages}{1–19}.
\newblock


\bibitem[Pierce(2002)]%
        {Pierce02}
\bibfield{author}{\bibinfo{person}{Benjamin~C. Pierce}.}
  \bibinfo{year}{2002}\natexlab{}.
\newblock \bibinfo{booktitle}{\emph{Types and programming languages}}.
\newblock \bibinfo{publisher}{{MIT} Press}.
\newblock


\bibitem[Pitts and Stark(1998)]%
        {PittsStark98}
\bibfield{author}{\bibinfo{person}{Andrew~M. Pitts} {and} \bibinfo{person}{Ian
  D.~B. Stark}.} \bibinfo{year}{1998}\natexlab{}.
\newblock \showarticletitle{Operational Reasoning for Functions with Local
  State}.
\newblock In \bibinfo{booktitle}{\emph{Higher Order Operational Techniques in
  Semantics}}, \bibfield{editor}{\bibinfo{person}{Andrew~D. Gordon} {and}
  \bibinfo{person}{Andrew~M. Pitts}} (Eds.). \bibinfo{publisher}{Cambridge
  University Press}, \bibinfo{address}{New York, NY, USA},
  \bibinfo{pages}{227--274}.
\newblock


\bibitem[Plotkin and Power(2001)]%
        {PlotkinPower01}
\bibfield{author}{\bibinfo{person}{Gordon~D. Plotkin} {and}
  \bibinfo{person}{John Power}.} \bibinfo{year}{2001}\natexlab{}.
\newblock \showarticletitle{Adequacy for Algebraic Effects}. In
  \bibinfo{booktitle}{\emph{Foundations of Software Science and Computation
  Structures, 4th International Conference, {FOSSACS} 2001}}
  \emph{(\bibinfo{series}{Lecture Notes in Computer Science},
  Vol.~\bibinfo{volume}{2030})}, \bibfield{editor}{\bibinfo{person}{Furio
  Honsell} {and} \bibinfo{person}{Marino Miculan}} (Eds.).
  \bibinfo{publisher}{Springer}, \bibinfo{pages}{1--24}.
\newblock
\urldef\tempurl%
\url{https://doi.org/10.1007/3-540-45315-6\_1}
\showDOI{\tempurl}


\bibitem[Reus and Streicher(2005)]%
        {ReusStreicher05}
\bibfield{author}{\bibinfo{person}{Bernhard Reus} {and} \bibinfo{person}{Thomas
  Streicher}.} \bibinfo{year}{2005}\natexlab{}.
\newblock \showarticletitle{About Hoare Logics for Higher-Order Store}. In
  \bibinfo{booktitle}{\emph{Automata, Languages and Programming, 32nd
  International Colloquium, {ICALP} 2005}} \emph{(\bibinfo{series}{Lecture
  Notes in Computer Science}, Vol.~\bibinfo{volume}{3580})},
  \bibfield{editor}{\bibinfo{person}{Lu{\'{\i}}s Caires},
  \bibinfo{person}{Giuseppe~F. Italiano}, \bibinfo{person}{Lu{\'{\i}}s
  Monteiro}, \bibinfo{person}{Catuscia Palamidessi}, {and}
  \bibinfo{person}{Moti Yung}} (Eds.). \bibinfo{publisher}{Springer},
  \bibinfo{pages}{1337--1348}.
\newblock
\urldef\tempurl%
\url{https://doi.org/10.1007/11523468\_108}
\showDOI{\tempurl}


\bibitem[Reynolds(2002)]%
        {Reynolds02}
\bibfield{author}{\bibinfo{person}{John~C. Reynolds}.}
  \bibinfo{year}{2002}\natexlab{}.
\newblock \showarticletitle{Separation Logic: A Logic for Shared Mutable Data
  Structures}. In \bibinfo{booktitle}{\emph{Proceedings of the 17th Annual IEEE
  Symposium on Logic in Computer Science}} \emph{(\bibinfo{series}{LICS '02})}.
  \bibinfo{publisher}{IEEE Computer Society}, \bibinfo{address}{USA},
  \bibinfo{pages}{55–74}.
\newblock
\urldef\tempurl%
\url{https://doi.org/10.1109/LICS.2002.1029817}
\showDOI{\tempurl}


\bibitem[Sterling et~al\mbox{.}(2022)]%
        {SterlingGratzerEtAl22}
\bibfield{author}{\bibinfo{person}{Jonathan Sterling}, \bibinfo{person}{Daniel
  Gratzer}, {and} \bibinfo{person}{Lars Birkedal}.}
  \bibinfo{year}{2022}\natexlab{}.
\newblock \showarticletitle{Denotational semantics of general store and
  polymorphism}.
\newblock \bibinfo{journal}{\emph{CoRR}}  \bibinfo{volume}{abs/2210.02169}
  (\bibinfo{year}{2022}).
\newblock
\urldef\tempurl%
\url{https://doi.org/10.48550/ARXIV.2210.02169}
\showDOI{\tempurl}
\showeprint[arXiv]{2210.02169}


\bibitem[Tennent and Ghica(2000)]%
        {TennentGhica00}
\bibfield{author}{\bibinfo{person}{Robert~D. Tennent} {and}
  \bibinfo{person}{Dan~R. Ghica}.} \bibinfo{year}{2000}\natexlab{}.
\newblock \showarticletitle{Abstract Models of Storage}.
\newblock \bibinfo{journal}{\emph{High. Order Symb. Comput.}}
  \bibinfo{volume}{13}, \bibinfo{number}{1/2} (\bibinfo{year}{2000}),
  \bibinfo{pages}{119--129}.
\newblock
\urldef\tempurl%
\url{https://doi.org/10.1023/A:1010022312623}
\showDOI{\tempurl}


\bibitem[Timany et~al\mbox{.}(2018)]%
        {TimanyStefanescoEtAl18}
\bibfield{author}{\bibinfo{person}{Amin Timany}, \bibinfo{person}{L{\'{e}}o
  Stefanesco}, \bibinfo{person}{Morten Krogh{-}Jespersen}, {and}
  \bibinfo{person}{Lars Birkedal}.} \bibinfo{year}{2018}\natexlab{}.
\newblock \showarticletitle{A logical relation for monadic encapsulation of
  state: proving contextual equivalences in the presence of runST}.
\newblock \bibinfo{journal}{\emph{Proc.\ {ACM} Program. Lang.}}
  \bibinfo{volume}{2}, \bibinfo{number}{{POPL}} (\bibinfo{year}{2018}),
  \bibinfo{pages}{64:1--64:28}.
\newblock
\urldef\tempurl%
\url{https://doi.org/10.1145/3158152}
\showDOI{\tempurl}


\bibitem[Turi and Plotkin(1997)]%
        {TuriPlotkin97}
\bibfield{author}{\bibinfo{person}{Daniele Turi} {and}
  \bibinfo{person}{Gordon~D. Plotkin}.} \bibinfo{year}{1997}\natexlab{}.
\newblock \showarticletitle{Towards a Mathematical Operational Semantics}. In
  \bibinfo{booktitle}{\emph{12th Annual {IEEE} Symposium on Logic in Computer
  Science (LICS 1997)}}. \bibinfo{pages}{280--291}.
\newblock
\urldef\tempurl%
\url{https://doi.org/10.1109/LICS.1997.614955}
\showDOI{\tempurl}


\bibitem[Turon et~al\mbox{.}(2013)]%
        {TuronThamsborgEtAl13}
\bibfield{author}{\bibinfo{person}{Aaron~Joseph Turon}, \bibinfo{person}{Jacob
  Thamsborg}, \bibinfo{person}{Amal Ahmed}, \bibinfo{person}{Lars Birkedal},
  {and} \bibinfo{person}{Derek Dreyer}.} \bibinfo{year}{2013}\natexlab{}.
\newblock \showarticletitle{Logical relations for fine-grained concurrency}. In
  \bibinfo{booktitle}{\emph{40th Annual {ACM} {SIGPLAN-SIGACT} Symposium on
  Principles of Programming Languages ({POPL} '13)}},
  \bibfield{editor}{\bibinfo{person}{Roberto Giacobazzi} {and}
  \bibinfo{person}{Radhia Cousot}} (Eds.). \bibinfo{publisher}{{ACM}},
  \bibinfo{pages}{343--356}.
\newblock
\urldef\tempurl%
\url{https://doi.org/10.1145/2429069.2429111}
\showDOI{\tempurl}


\bibitem[Urbat et~al\mbox{.}(2023a)]%
        {UrbatTsampasEtAl23b}
\bibfield{author}{\bibinfo{person}{Henning Urbat}, \bibinfo{person}{Stelios
  Tsampas}, \bibinfo{person}{Sergey Goncharov}, \bibinfo{person}{Stefan
  Milius}, {and} \bibinfo{person}{Lutz Schr{\"{o}}der}.}
  \bibinfo{year}{2023}\natexlab{a}.
\newblock \showarticletitle{Weak Similarity in Higher-Order Mathematical
  Operational Semantics}. In \bibinfo{booktitle}{\emph{38th Annual {ACM/IEEE}
  Symposium on Logic in Computer Science, {LICS} 2023, Boston, MA, USA, June
  26-29, 2023}}. \bibinfo{publisher}{{IEEE}}, \bibinfo{pages}{1--13}.
\newblock
\urldef\tempurl%
\url{https://doi.org/10.1109/LICS56636.2023.10175706}
\showDOI{\tempurl}


\bibitem[Urbat et~al\mbox{.}(2023b)]%
        {UrbatTsampasEtAl23}
\bibfield{author}{\bibinfo{person}{Henning Urbat}, \bibinfo{person}{Stelios
  Tsampas}, \bibinfo{person}{Sergey Goncharov}, \bibinfo{person}{Stefan
  Milius}, {and} \bibinfo{person}{Lutz Schröder}.}
  \bibinfo{year}{2023}\natexlab{b}.
\newblock \bibinfo{title}{Weak Similarity in Higher-Order Mathematical
  Operational Semantics}.
\newblock
\showeprint[arxiv]{2302.08200}~[cs.PL]


\bibitem[Vafeiadis(2011)]%
        {Vafeiadis11}
\bibfield{author}{\bibinfo{person}{Viktor Vafeiadis}.}
  \bibinfo{year}{2011}\natexlab{}.
\newblock \showarticletitle{Concurrent Separation Logic and Operational
  Semantics}.
\newblock \bibinfo{journal}{\emph{Electronic Notes in Theoretical Computer
  Science}}  \bibinfo{volume}{276} (\bibinfo{year}{2011}),
  \bibinfo{pages}{335--351}.
\newblock
\urldef\tempurl%
\url{https://doi.org/10.1016/j.entcs.2011.09.029}
\showDOI{\tempurl}
\newblock
\shownote{Twenty-seventh Conference on the Mathematical Foundations of
  Programming Semantics (MFPS XXVII)}.


\bibitem[van Glabbeek(2011)]%
        {Glabbeek11}
\bibfield{author}{\bibinfo{person}{Rob~J. van Glabbeek}.}
  \bibinfo{year}{2011}\natexlab{}.
\newblock \showarticletitle{On cool congruence formats for weak bisimulations}.
\newblock \bibinfo{journal}{\emph{Theor. Comput. Sci.}} \bibinfo{volume}{412},
  \bibinfo{number}{28} (\bibinfo{year}{2011}), \bibinfo{pages}{3283--3302}.
\newblock
\urldef\tempurl%
\url{https://doi.org/10.1016/j.tcs.2011.02.036}
\showDOI{\tempurl}


\bibitem[Winskel(1993)]%
        {Winskel93}
\bibfield{author}{\bibinfo{person}{Glynn Winskel}.}
  \bibinfo{year}{1993}\natexlab{}.
\newblock \bibinfo{booktitle}{\emph{The formal semantics of programming
  languages: an introduction}}.
\newblock \bibinfo{publisher}{MIT press}.
\newblock


\bibitem[Yoshida et~al\mbox{.}(2008)]%
        {YoshidaEA08}
\bibfield{author}{\bibinfo{person}{Nobuko Yoshida}, \bibinfo{person}{Kohei
  Honda}, {and} \bibinfo{person}{Martin Berger}.}
  \bibinfo{year}{2008}\natexlab{}.
\newblock \showarticletitle{Logical Reasoning for Higher-Order Functions with
  Local State}.
\newblock \bibinfo{journal}{\emph{Log.\ Methods Comput.\ Sci.}}
  \bibinfo{volume}{4}, \bibinfo{number}{4} (\bibinfo{year}{2008}).
\newblock
\urldef\tempurl%
\url{https://doi.org/10.2168/LMCS-4(4:2)2008}
\showDOI{\tempurl}


\end{thebibliography}
\ifarx{
\clearpage
\appendix
\section{GSOS Laws}\label{app:gsos-laws}
We give the full GSOS laws for the languages \whiletwo, $\L^2$ and \reflang.

\subsection*{GSOS Law for $\boldsymbol{\whiletwo}$}
The language \whiletwo is modelled by a GSOS law $\rho$ of the signature $\Sigma$ (\Cref{sec:whiletwo}) over the behaviour functor $B_0$ \eqref{eq:beh-whiletwo}. The component $\rho_{0,X}=(\rho_0)_X$ at $X\in \Set^2$ is given by the maps
\[
\rho^\Trs_{0,X} \colon \Sigma_\Trs(X\times B_0X)\to (\Sigmas_\Cos X)^\store \qqand
\rho^\Cos_{0,X} \colon \Sigma_\Cos(X\times B_0X)\to \Sigmas_\Cos X\times \S + \Sigmas_\Cos X +\store
\]
defined as follows for $p,q\in X_\Trs$, $f,g\in X_\Cos^\S = (B_0)_\Trs X$, $c\in X_\Cos$, $u\in X_\Cos\times \S + X_\Cos + \S = (B_0)_\Cos X$, $s\in \S$:
\begin{align*}
\rho^{\Trs}_{0,X}(\mathsf{skip}) =&\; \lambda s.\ret_s\\
\rho^{\Trs}_{0,X}(x\ass  e) =&\;
  \lambda s. \ret_{s[x \ass  \oname{eev}(e,s)]}\\
 \rho^{\Trs}_{0,X}(\mathsf{while}\,e\,(p,f)) =&\;
  \lambda s.\begin{cases}
    s.\run{s}{p\seqcomp \mathsf{while}\,e\,p}
    &\text{if $\oname{eev}(e,s) \not = 0$} \\
    \ret_s & \text{if $\oname{eev}(e,s) = 0$}
  \end{cases}\\
 \rho^{\Trs}_{0,X}((p,f)\seqcomp\,(q,g) =&\;
  \lambda s. \run{s}{p}\seqcomp q \\
 \rho^{\Cos}_{0,X}(\run{s}{p,f}) =&\; f(s) \\
 \rho^{\Cos}_{0,X}(\ret_s) =&\; s\\
 \rho^{\Cos}_{0,X}(s.(c,u)) =&\; (c,s)\\
 \rho^{\Cos}_{0,X}((c,u)\seqcomp\, (q,f)) =&\;
  \begin{cases}
c'\seqcomp q & \text{if } $u=c'$\\
((c' \seqcomp q),s) & \text{if } $u=(c',s)$\\
([q]_{s},s) & \text{if } $u=s$.
\end{cases}
\end{align*}

\subsection*{GSOS Law for $\L^2$}
The language $\L^2$ is modelled by a GSOS law $\rho=\rho^{\L^2}$ of $\Sigma$ (\Cref{not:sig-l2}) over $B$ (\Cref{not:stateful-sos}). (Since $\L^2$ is a deterministic language, we could also first model it via a GSOS law $\rho_0^{\L^2}$ of $\Sigma$ over $B_0$ and then extend it to one over $B$ as in \Cref{rem:gsos-law-nondet}, which would give the same $\rho^{\L^2}$.)

The component $\rho_X$ at $X\in \Set^2$ is given by the maps
\[
\rho^\Trs_{X} \colon \Sigma_\Trs(X\times BX)\to (\Sigmas_\Cos X)^\store \qqand
\rho^\Cos_{X} \colon \Sigma_\Cos(X\times BX)\to \Pow(\Sigmas_\Cos X\times \S + \Sigmas_\Cos X +\store)
\]
defined as follows for $p_i\in X_\Trs$, $f_i\in X_\Cos^\store$, $c_i\in X_\Cos$, $U_i\in \Pow (X_\Cos\times \store + X + S)$, $s\in \S$:

\medskip\noindent\underline{Definition of $\rho^\Trs_X( \f((p_1,f_1),\ldots,(p_n,f_n) ))$:}\\

\noindent Let $\f$ be an $n$-ary operator. If $\f$ is passive and $\L$ contains the rule
  \begin{equation*}\label{eq:sos-rule}
      \inference{}{\f(x_1,\ldots,x_n),s \to t,s'} \quad/\quad  \inference{}{\f(x_1,\ldots,x_n),s \downarrow s'}
\end{equation*}
then
\[ \rho^\Trs_X( \f((p_1,f_1),\ldots,(p_n,f_n) ))(s) = s'.[ t[p_1/x_1],\ldots, p_n/x_n] ]_{s'} \quad/\quad \ret_{s'}.  \]
If $\f$ is active with receiving position $j$, then
\[ \rho^\Trs_X( \f((p_1,f_1),\ldots,(p_n,f_n) ))(s) = \barf(p_1,\ldots, [p_j]_s,\ldots, p_n).  \]

\medskip\noindent\underline{Definition of $\rho^\Trs_X( \barf((p_1,f_1),\ldots, (c_j,U_j), \ldots (p_n,f_n) ))$:}\\

\noindent Let $\f$ be an $n$-ary active operator with receiving position $j$. Then
\[\rho^\Trs_X( \barf((p_1,f_1),\ldots, (c_j,U_j), \ldots (p_n,f_n) )) \seq \Sigmas_\Cos X \times \store + \Sigmas_\Cos X + \store \]
contains the following elements:
\begin{itemize}
\item all $\barf(p_1,\ldots,d_j,\ldots,p_n)\in \Sigmas_\Cos X$ where $d_j\in U_j$;
\item all $(\barf(p_1,\ldots,d_j,\ldots,p_n),s)\in \Sigmas_\Cos X\times \store $ where $(d_j,s)\in U_j$;
\item all $([t[p_1/x_1,\ldots,p_{j-1}/x_{j-1},p_{j+1}/x_{j+1},\ldots, p_n/x_n]]_{s''},s'')\in \Sigmas_\Cos X\times \store$ where
\[\inference{\rets{x_{j},s}{s'}}{\goes{\f(x_{1},\dots,x_{n}),s}{t,s''}} \]
is a rule of $\L$ and  $s'\in U_j$ (recall that $t\in \Sigmas(\{x_1,\ldots,x_n\}\smin \{x_j\})$);
\item all $s''\in \S$ where
\[\inference{\rets{x_{j},s}{s'}}{\rets{\f(x_{1},\dots,x_{n}),s}{s''}} \]
is a rule of $\L$ and  $s'\in U_j$;
\end{itemize}
\medskip\noindent\underline{Definition of $\rho^\Trs_X( [(p_1,f_1)]_s )$:}\\
\[ \rho^\Trs_X( [(p_1,f_1)]_s ) = \{ f_1(s) \}. \]
\medskip\noindent\underline{Definition of $\rho^\Trs_X( \ret_s )$:}\\
\[ \rho^\Trs_X( \ret_s ) = \{ s \}. \]
\medskip\noindent\underline{Definition of $\rho^\Trs_X( s.(c_1,U_1) )$:}\\
\[ \rho^\Trs_X( s.(c_1,U_1) ) = \{ (c_1,s) \}. \]

\subsection*{Higher-Order GSOS Law for \reflang}
The language \reflang is modelled by a higher-order GSOS law
\[
    \rho_{X,Y}  \colon \Sigma (X \times B(X,Y))\to B(X, \Sigmas (X+Y))\qquad (X,Y\in \Set^{2})
\]
of the syntax functor $\Sigma$ for \reflang over the behaviour bifunctor $B$ of \eqref{eq:behreflang}. The two maps
\begin{align*}
\rho_{X,Y}^\Trs\colon & \Sigma_\Trs(X\times B(X,Y)) \to (\Sigmas_\Cos(X+Y)+1)^{\S(X_\Trs)},\\
\rho_{X,Y}^\Cos\colon& \Sigma_\Cos(X\times B(X,Y)) \to \mathcal{P}((V(\Sigmas_\Trs(X+Y)) + 1) \times \store(\Sigmas_\Trs(X+Y)) + \Sigmas_\Cos(X+Y) \times (\store(\Sigmas_\Trs(X+Y)) + 1))
\end{align*}
are defined as follows, where $p,q\in X_\Trs$, $c,d\in X_\Cos$, $f,g\in Y_\Cos^{\S(X_\Trs)}$, $U\in B_\Cos(X,Y)$, $e\in \expr$, $s\in \S(X_\Trs)$, $v\in V(X_\Cos)$:

  \begin{equation*}
    \begin{aligned}
      & \rho_{X,Y}^{\Trs}(\mathsf{skip}) = \lambda s. \mathsf{ret}_{s} \\
      & \rho_{X,Y}^{\Trs}(e\ass (p,f)) = \lambda s.e \ass \run{s}{p} \\
      & \rho_{X,Y}^{\Trs}(\mathsf{while}~e~(p,f))(s) =
        \begin{cases}
        s.\run{s}{p; \mathsf{while}~e~p} & \text{if }
         \oname{eev}_{X_{\Trs}}(e,s) = n \not = 0,\\
        \ret_{s} & \text{if }  \oname{eev}_{X_{\Trs}}(e,s) = 0,\\
  	    * & \text{otherwise}
\end{cases}\\
      & \rho_{X,Y}^{\Trs}(\mathsf{if}~e~\mathsf{then}~(p,f)~\mathsf{else}~(q,g))(s) =
\begin{cases}
s.\run{s}{p} & \text{if } \oname{eev}_{X_{\Trs}}(e,s) = n \not = 0,\\
        s.\run{s}{q} & \text{if } \oname{eev}_{X_{\Trs}}(e,s) = 0 \\
        * & \text{otherwise}
\end{cases}\\
      & \rho_{X,Y}^{\Trs}((p,f) \seqcomp (q,g)) =
        \lambda s.\run{s}{p} \seqcomp q\\
      & \rho_{X,Y}^{\Trs}(\&(p,f)) =
        \lambda s.\&\run{s}{p}\\
      & \rho_{X,Y}^{\Trs}(\mathsf{expr}~e)(s) =
        \begin{cases}
        \ret_{v,s} & \text{if } \oname{eev}_{X_{\Trs}}(e,s) = v
        \land v \in Loc + \mathbb{Z} \\
        s.\run{s}{p} & \text{if } \oname{eev}_{X_{\Trs}}(e,s) = p\\
        * & \text{otherwise}
          \end{cases}\\
      & \rho_{X,Y}^{\Trs}(\mathsf{proc}~(p,f)) = \lambda s.\ret_{p,s} \\
      & \rho_{X,Y}^{\Cos}(\run{s}{p,f}) = f(s), \\
        &\rho_{X,Y}^{\Cos}(\ret_s) = \{s\}, \\
      &\rho_{X,Y}^{\Cos}(\ret_{v,s}) = \{(v,s)\}, \\
      & \rho_{X,Y}^{\Cos}(s.(c,U)) = \{(c,s)\} \\
      & \rho_{X,Y}^{\Cos}(e:= (c,U)) =
        \{(e \ass d,s) \mid (d,s) \in U\} \cup \{(e \ass d) \mid d \in U\} ~~{\cup}
      \\
      & \qquad \{(s[l \mapsto v]) \mid (v,s) \in U
        \land \oname{eev}_{X_{\Trs}}(e,s) = l\} \\
      & \rho_{X,Y}^{\Cos}((c,U) \seqcomp (q,g)) =
        \{((d \seqcomp q),s) \mid (d,s) \in U\} \cup \{(d \seqcomp q) \mid d \in U\}
        \cup \{\run{s}{q} \mid (v,s) \in U\} \cup \{\run{s}{q} \mid s \in U\} \\
      & \rho_{X,Y}^{\Cos}(\&(c,U)) =
        \{\&d,s \mid (d,s) \in U\} \cup \{\&d \mid d \in U\} \cup
        \{(l,s[l \mapsto v]) \mid (v,s) \in U \land l \not\in \mathrm{dom}(s)\} \\
    \end{aligned}
  \end{equation*}

\section{Omitted Proofs}

\begin{lemma}\label{lem:weakening}
For every $B$-coalgebra $(X,\chi)$ the coalgebra $\tilde{\chi}_\trc$ is a weakening of $\chi$ w.r.t.~$\barB_\trc$.
\end{lemma}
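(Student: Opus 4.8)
The plan is to unfold \Cref{def:coalg-sim}. Since the subobject $\barB_\trc R\monoto BX\times BX$ is a relation, for any span $f\times g\colon X\times X\to BX\times BX$ a morphism $R\to\barB_\trc R$ over $f\times g$ (in the sense of \Cref{def:coalg-sim}) is unique if it exists, and it exists precisely when $(f(x),g(y))\in\barB_\trc R$ for all $(x,y)\in R$. Hence it suffices to prove, for every relation $R\monoto X\times X$ in $\Set^2$, the equivalence
\[
\bigl(\forall (x,y)\in R.\ (\chi(x),\tilde\chi_\trc(y))\in\barB_\trc R\bigr)
\iff
\bigl(\forall (x,y)\in R.\ (\tilde\chi_\trc(x),\tilde\chi_\trc(y))\in\barB_\trc R\bigr).
\]
The reader sort contributes nothing since $\tilde\chi_\trc^\Trs=\chi^\Trs$, so everything happens in the writer sort, where $(\barB_\trc R)_\Cos=\vec{\Pow}(S)$ with $S=R_\Cos\times\Delta_\store+R_\Cos+\Delta_\store$.

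For ``$\Leftarrow$'' I would first observe that every strong writer transition is a special weak one, i.e.\ $\chi^\Cos(c)\seq\tilde\chi_\trc^\Cos(c)$ for all $c\in X_\Cos$: $c\to d$ gives $c\xTo{1}d$ (the one-step path), $c\xto{s}d$ gives $c\xTo{1,s}d$ (empty prefix), and $c\downarrow s$ gives $c\Downarrow^1 s$. As membership in $\vec{\Pow}(S)$ only asks that every element of the first argument have an $S$-partner in the second, it is preserved under shrinking the first argument; applying this at each $(c,d)\in R_\Cos$ turns $\vec{\Pow}(S)(\tilde\chi_\trc^\Cos(c),\tilde\chi_\trc^\Cos(d))$ into $\vec{\Pow}(S)(\chi^\Cos(c),\tilde\chi_\trc^\Cos(d))$, which is what ``$\Leftarrow$'' needs.

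The direction ``$\Rightarrow$'' is the crux and rests on an \emph{iteration principle}: under the hypothesis that $\vec{\Pow}(S)(\chi^\Cos(x),\tilde\chi_\trc^\Cos(y))$ holds for all $(x,y)\in R_\Cos$, one shows by induction on $n$ that whenever $(c,d)\in R_\Cos$ and $c=c_0\to c_1\to\cdots\to c_n=c^\ast$, there is a $d^\ast$ with $d\xTo{1}d^\ast$ and $R_\Cos(c^\ast,d^\ast)$. The base case $n=0$ is the reflexivity of $\xTo{1}$. In the step, applying the hypothesis at the already-constructed related pair and noting that $c_i\to c_{i+1}$ places $c_{i+1}$ in the middle ($X_\Cos$) summand — so its $S$-partner lies in the same summand and is thus a plain writer reachable via $\xTo{1}$ — yields the next related writer, which is glued on by transitivity of $\xTo{1}$.

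Given the iteration principle, to verify $\vec{\Pow}(S)(\tilde\chi_\trc^\Cos(c),\tilde\chi_\trc^\Cos(d))$ for $(c,d)\in R_\Cos$ I take an arbitrary $e\in\tilde\chi_\trc^\Cos(c)$ and split along the three cases of \Cref{not:weaktrans-1}: $e=d_0$ with $c\xTo{1}d_0$; $e=(d_0,s)$ with $c\xTo{1}c'\xto{s}d_0$; $e=s$ with $c\xTo{1}c'\downarrow s$. In the first case the iteration principle directly gives $d'$ with $d\xTo{1}d'$ and $R_\Cos(d_0,d')$, so $d'\in\tilde\chi_\trc^\Cos(d)$ is the required partner. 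In the other two cases it first transports the related pair along the $\xTo{1}$-prefix to $d'$ with $d\xTo{1}d'$ and $R_\Cos(c',d')$, after which one more application of the hypothesis at $(c',d')$ matches the terminal step $c'\xto{s}d_0$ or $c'\downarrow s$ by a partner of $\tilde\chi_\trc^\Cos(d')$ in the correct summand — the state label is transferred verbatim because that summand carries $\Delta_\store$ — and composing $d\xTo{1}d'$ with the resulting $d'\xTo{1,s}d'_0$ (resp.\ $d'\Downarrow^1 s$) produces the needed element of $\tilde\chi_\trc^\Cos(d)$. This completes the equivalence. The only real obstacle I anticipate is the bookkeeping in ``$\Rightarrow$'': keeping every iterated match inside its intended coproduct summand and confirming that the accumulated weak transitions compose correctly; the rest is immediate from the definitions of $\barB_\trc$ and $\tilde\chi_\trc$.
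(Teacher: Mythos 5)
Your proposal is correct and follows essentially the same route as the paper's proof: both reduce the weakening property to the elementwise conditions of \Cref{def:trace-sim} and their weak-premise counterparts, dispose of the easy direction via $\chi^\Cos(c)\seq\tilde\chi_\trc^\Cos(c)$, and handle the hard direction by iterating the strong-premise conditions along a chain of $\to$-steps (your explicit induction is exactly the paper's repeated application of condition (2)) and then composing with one final matching step for the labelled or terminating case, using transitivity of $\xTo{1}$.
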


\begin{proof}
For every relation $R\monoto X\times X$ we need to show that $R$ satisfies the conditions (1)--(4) of \Cref{def:trace-sim} iff $R$ satisfies the conditions (1) and (2')--(4') where (i') is (i) with $\to$, $\downarrow$ replaced by $\xTo{1}$, $\Downarrow^1$. The `if' direction is clear since $c\to c'$ implies $c\xTo{1} c'$ (and similarly for the other types of weak transitions). Let us consider the `only if' direction:
\begin{enumerate}
\item[(2')] Let $R_\Cos(c,d)$ and $c\xTo{1} c'$. Then $c=c_0\to c_1\to \cdots \to c_n=c'$ for some $c_i$. By (2) we have $d=d_0\xTo{1} d_1\xTo{1} \cdots \xTo{1} d_n=d'$ for some $d_i$ such that $R_\Cos(c_i,d_i)$. Therefore $d\xTo{1} d'$ and $R_\Cos(c',d')$.
\item[(3')] Let $R_\Cos(c,d)$ and $c\xTo{1} c',s$. Then $c\xTo{1} c_1$ and $c_1\to c,s$ for some $c_1$. By (2') we have $d\xTo{1} d_1$ such that $R_\Cos(c_1,d_1)$. By (3) we have $d_1\xTo{1} d',s$ such that $R_\Cos(c',d')$. Thus $d\xTo{1} d',s$ and $R_\Cos(c',d')$.
\item[(4')] Let $R_\Cos(c,d)$ and $c\Downarrow^1 s$. Then $c\xTo{1} c'$ and $c'\downarrow s$ for some $c'$. By (2') we have $d\xTo{1} d'$ such that $R_\Cos(c',d')$. By (4) we have $d'\Downarrow^1 s$; thus $d\Downarrow^1 s$.\qedhere
\end{enumerate}
\end{proof}

\begin{lemma}\label{lem:weakening-scn}
For every $B$-coalgebra $(X,\chi)$ the coalgebra $\tilde{\chi}_\scn$ is a weakening of $\chi$ w.r.t.~$\barB_\scn$.
\end{lemma}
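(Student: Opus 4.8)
The plan is to follow the proof of \Cref{lem:weakening} almost verbatim, the only new wrinkle coming from the more permissive lifting. Since $\tilde{\chi}_\scn = \tilde{\chi}_\trc$, and $\barB_\scn$ differs from $\barB_\trc$ only by replacing the first occurrence of $\Delta_\store$ in the writer sort with $\top_\store$, unfolding \Cref{def:coalg-sim} as in \Cref{def:cost-sim} shows the following. A morphism $R\to\barB_\scn R$ making the first diagram of \eqref{eq:weakening} commute (with $\chi\times\tilde{\chi}_\scn$ at the bottom) exists iff $R$ satisfies conditions (1)--(4) of \Cref{def:cost-sim}; and a morphism making the second diagram commute (with $\tilde{\chi}_\scn\times\tilde{\chi}_\scn$ at the bottom) exists iff $R$ satisfies (1) together with the weakened conditions (2$'$)--(4$'$) obtained from (2)--(4) by replacing the strong transitions $\to$, $\xto{s}$, $\downarrow$ appearing in the \emph{premises} with the weak transitions $\xTo{1}$, $\xTo{1,s}$, $\Downarrow^1$. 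Condition (1) is unaffected because $\tilde{\chi}^\Trs_\scn = \chi^\Trs$. Thus it suffices to prove that (1)--(4) is equivalent to (1), (2$'$)--(4$'$).

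The `if' direction is immediate: every strong transition is a weak one (a $\xTo{1}$-path of length $0$ or $1$), so (2$'$)--(4$'$) are stronger than (2)--(4). For the `only if' direction I would argue as in \Cref{lem:weakening}, handling the three writer clauses in order. For (2$'$), decompose a weak transition $c\xTo{1} c'$ into strong steps $c = c_0\to c_1\to\cdots\to c_n = c'$ and apply (2) inductively to obtain $d = d_0\xTo{1} d_1\xTo{1}\cdots\xTo{1} d_n = d'$ with $R_\Cos(c_i, d_i)$; transitivity of $\xTo{1}$ gives $d\xTo{1} d'$ and $R_\Cos(c',d')$. For (4$'$), write $c\Downarrow^1 s$ as $c\xTo{1} c'\downarrow s$, use (2$'$) to get $d\xTo{1} d'$ with $R_\Cos(c',d')$, then (4) to get $d'\Downarrow^1 s$, hence $d\Downarrow^1 s$.

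The only point where the argument departs from \Cref{lem:weakening} is clause (3$'$), and this is exactly where the lifting $\barB_\scn$ (with its $\top_\store$) matters: given $R_\Cos(c,d)$ and $c\xTo{1,s} c'$, write this as $c\xTo{1} c_1\xto{s} c'$, apply (2$'$) to get $d\xTo{1} d_1$ with $R_\Cos(c_1,d_1)$, and then apply (3) to the strong transition $c_1\xto{s} c'$ to obtain \emph{some} $d'$, $s'$ with $d_1\xTo{1,s'} d'$ and $R_\Cos(c',d')$ --- note that $s'$ need not equal $s$. Concatenating yields $d\xTo{1,s'} d'$, which is precisely what (3$'$) asks for. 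I do not expect any genuine obstacle; the whole argument is routine manipulation of transition paths, the only subtlety being the bookkeeping of the discarded state label in (3$'$).
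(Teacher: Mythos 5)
Your proposal is correct and matches the paper's own proof: the paper likewise reduces the claim to the equivalence of conditions (1)--(4) of \Cref{def:cost-sim} with their weakened counterparts, reuses the argument of \Cref{lem:weakening} for all clauses except (3$'$), and handles (3$'$) exactly as you do --- decompose $c\xTo{1,s}c'$ as $c\xTo{1}c_1\xto{s}c'$, apply (2$'$) and then (3), accepting a possibly different state label $s'$. No gaps; this is essentially the same argument.
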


\begin{proof}
For every relation $R\monoto X\times X$ we need to show that $R$ satisfies the conditions (1)--(4) of \Cref{def:cost-sim} iff $R$ satisfies the conditions (1) and (2')--(4') where (i') is (i) with $\to$, $\downarrow$ replaced by $\xTo{1}$, $\Downarrow^1$.
This is completely analogous to the proof of \Cref{lem:weakening}: just replace the argument for (3') in the `only if' direction by
\begin{enumerate}
\item[(3')] Let $R_\Cos(c,d)$ and $c\xTo{1,s} c'$. Then $c\xTo{1} c_1$ and $c_1\xto{s} c$ for some $c_1$. By (2') we have $d\xTo{1} d_1$ such that $R_\Cos(c_1,d_1)$. By (3) we have $d_1\xTo{1,s'} d'$ for some $d'$ and $s'$ such that $R_\Cos(c',d')$. Thus $d\xTo{1,s'} d'$ and $R_\Cos(c',d')$. \qedhere
\end{enumerate}
\end{proof}

\begin{lemma}\label{lem:weakening-trm}
For every $B$-coalgebra $(X,\chi)$ the coalgebra $\tilde{\chi}_\trm$ is a weakening of $\chi$ w.r.t.~$\barB_\trm$.
\end{lemma}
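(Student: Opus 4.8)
The plan is to follow the same recipe as in the proofs of \Cref{lem:weakening} and \Cref{lem:weakening-scn}, replacing the weak transitions $\xTo{1}$, $\Downarrow^1$ by their more liberal counterparts $\xTo{2}$, $\Downarrow^2$ from \Cref{not:weak-trans-2}. Fix a $B$-coalgebra $(X,\chi)$ and a relation $R\monoto X\times X$. Unfolding \Cref{def:coalg-sim} for $\barB = \barB_\trm$ (see \eqref{eq:barB-scn} and \eqref{eq:barB-trm}) and $\tilde{\chi} = \tilde{\chi}_\trm$ (see \eqref{eq:weakening-trm}), and using $\tilde{\chi}_\trm^\Trs = \chi^\Trs$, commutativity of the first diagram in \eqref{eq:weakening} is precisely the assertion that $R$ satisfies conditions (1)--(3) of \Cref{def:term-sim}, while commutativity of the second diagram is the assertion that $R$ satisfies (1) together with the variants (2$'$): if $R_\Cos(c,d)$ and $c\xTo{2}c'$ then there exists $d'$ with $d\xTo{2}d'$ and $R_\Cos(c',d')$, and (3$'$): if $R_\Cos(c,d)$ and $c\Downarrow^2 s$ then $d\Downarrow^2 s$. (Here one uses that the exponent label in $c\xTo{2,s}d$ is unconstrained, so the $\top_\S$-summand of $\barB_\trm$ imposes no requirement on output states, and that $\vec{\Pow}$ only demands forward matching.) Thus it suffices to show that $R$ satisfies (1)--(3) if and only if it satisfies (1), (2$'$), (3$'$).

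The ``if'' direction is immediate, since a single strong transition $c\to c'$ or $c\xto{s} c'$ is in particular a weak transition $c\xTo{2}c'$, and $c\downarrow s$ yields $c\Downarrow^2 s$ (taking the empty run). For the ``only if'' direction the two ingredients are that $\xTo{2}$ is reflexive and transitive, and that conditions (2) and (3) can be iterated along runs. For (2$'$): given $c\xTo{2}c'$ witnessed by $c = c_0, \dots, c_n = c'$, one shows by induction on $n$ that there exists $d'$ with $d\xTo{2}d'$ and $R_\Cos(c',d')$, applying (2) at each step $c_i\to c_{i+1}$ or $c_i\xto{s_{i+1}}c_{i+1}$ and concatenating the resulting weak transitions via transitivity of $\xTo{2}$. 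For (3$'$): unfold $c\Downarrow^2 s$ as $c\xTo{2}c''\downarrow s$, apply (2$'$) to obtain $d\xTo{2}d''$ with $R_\Cos(c'',d'')$, then (3) to get $d''\Downarrow^2 s$, and conclude $d\Downarrow^2 s$ by transitivity.

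I do not expect any genuine difficulty here; the single point that needs care is the translation of the two diagrams of \eqref{eq:weakening} into the elementary conditions above, in particular pinning down exactly what the $\vec{\Pow}$- and $\top_\S$-components of $\barB_\trm$ contribute, so that the label-free matching of $\xTo{2,s}$-transitions comes out correctly. Once that bookkeeping is settled, the argument runs essentially verbatim as in \Cref{lem:weakening} and \Cref{lem:weakening-scn}.
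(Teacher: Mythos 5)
Your proposal is correct and follows essentially the same route as the paper's proof: reduce the weakening property to the equivalence of conditions (1)--(3) of \Cref{def:term-sim} with their weak variants (2$'$), (3$'$), note the ``if'' direction is trivial, and establish (2$'$) by iterating (2) along the run witnessing $c\xTo{2}c'$ (using transitivity of $\xTo{2}$) and (3$'$) by combining (2$'$) with (3). Your additional remarks on how the $\top_\S$-summand makes the labelled transitions $\xTo{2,s}$ impose no constraint on states match how the paper folds the labelled case into (2$'$).
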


\begin{proof}
For every relation $R\monoto X\times X$ we need to show that $R$ satisfies the conditions (1)--(3) of \Cref{def:term-sim} iff $R$ satisfies the conditions (1) and (2'),(3') where (i') is (i) with $\to$, $\downarrow$ replaced by $\xTo{2}$, $\Downarrow^2$. Again the `if' direction is trivial. Let us consider the `only if' direction:
\begin{enumerate}
\item[(2')] Let $R_\Cos(c,d)$ and $c\xTo{2} c'$ or $c\xTo{2,s} c'$ Then there exist $c=c_0,\ldots,c_n=c'\in X_\Cos$ such that for for each for $i<n$, either $c_i\to c_{i+1}$ or $c_i\xto{s_{i+1}} c_{i+1}$ for some $s_{i+1}$. By iterated application of (2) we get $d=d_0\xTo{2} d_1 \xTo{2} \cdots \xTo{2} d_n=d'$ where $R_\Cos(c_i,d_i)$, whence $d\xTo{2} d'$ and $R_\Cos(d,d')$.
\item[(3')] Let $R_\Cos(c,d)$ and $c\Downarrow^2 s$. Then $c\xTo{2} c'$ and $c'\downarrow s$ for some $c'$. By (2') we have $d\xTo{2} d'$ such that $R_\Cos(c',d')$. By (3) we have $d'\Downarrow^2 s$; thus $d\Downarrow^2 s$.\qedhere
\end{enumerate}
\end{proof}

\subsection*{Proof of \Cref{prop:trc-sim}}
If $R$ is a trace simulation on a $B_0$-coalgebra and $R_\Cos(c,d)$, then $\trc(c)=\trc(d)$ by definition of trace simulation, so trace similarity implies trace equivalence. Conversely, trace equivalence is readily verified to be trace simulation, so trace equivalence implies trace similarity. \qed

\subsection*{Proof of \Cref{prop:scn-sim}}
If $R$ is a cost simulation on a $B_0$-coalgebra and $R_\Cos(c,d)$, then $\scn(c)=\scn(d)$ by definition of cost simulation, so cost similarity implies cost equivalence. Conversely, cost equivalence is readily verified to be cost simulation, so cost equivalence implies cost similarity.
\qed

\subsection*{Proof of \Cref{prop:trm-sim}}
Let $(X,\chi)$ be a $B_0$-coalgebra. Put $\equiv_\trm \,=\, \preceq_\trm \cap \succeq_\trm$. If $c\equiv_\trm d$, then $c\preceq_\trm d$ and $d\preceq_\trm c$ and so $c\Downarrow^2 s$ iff $d\Downarrow^2 s$. Therefore, $\equiv_\trm$ is contained in termination equivalence. Conversely, termination equivalence $R\monoto X\times X$ is a symmetric termination simulation and therefore contained in $\equiv_\trm$.
\qed

\subsection*{Proof of \Cref{thm:whiletwo-cong}}
The theorem is a special case of \Cref{thm:l2-comp}.

\subsection*{Proof of \Cref{thm:semantics-pres}}
This theorem is a special case of \Cref{prop:trace-pres}.

\subsection*{Proof of \Cref{thm:l2-comp}}
We treat the cases of trace, cost, and termination semantics separately.

\paragraph*{Trace semantics} We consider the AOS
\[ \O_\trc = (\Sigma,\ol{\Sigma},B,\barB_\trc,\rho^{\L^2},\gamma^{\L_2},\tilde{\gamma}^{\L^2}_\trc) \]
where $\barB_\trc$ is the relation lifting of $B$ given by \eqref{eq:barB-trc}, and $\tilde{\gamma}^{\L^2}_\trc$ is the weakening of $\gamma^{\L^2}$ given by \eqref{eq:weakening-trc}. Recall from \Cref{prop:trc-sim} that $\barB_\trc$-similarity on $(\mu\Sigma,\gamma^{\L^2},\tilde{\gamma}^{\L^2}_\trc)$ is trace equivalence. By instantiating the congruence theorem for abstract GSOS (\Cref{thm:congruence-abstract-gsos}) to $\O_\trc$, we get:

\begin{theorem}\label{thm:cong-trc-l2}
Trace equivalence is a congruence on the operational model of $\L^2$.
\end{theorem}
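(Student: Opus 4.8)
The plan is to obtain the theorem as an instance of the general compositionality result for abstract GSOS (\Cref{thm:congruence-abstract-gsos}), applied to the AOS $\O_\trc=(\Sigma,\ol{\Sigma},B,\barB_\trc,\rho^{\L^2},\gamma^{\L^2},\tilde{\gamma}^{\L^2}_\trc)$. Since $\barB_\trc$-similarity on $(\mu\Sigma,\gamma^{\L^2},\tilde{\gamma}^{\L^2}_\trc)$ coincides with trace equivalence by \Cref{prop:trc-sim}, it suffices to verify conditions (1)--(3) of that theorem. The argument runs parallel to the proof sketch of \Cref{thm:whiletwo-cong}, the only genuine novelty being that the cool-format restriction on $\L$ is what makes the lax bialgebra condition go through.

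Conditions (1) and (2) are routine. For (1), up-closedness of $\barB_\trc$ and lax preservation of composition and identities follow from the corresponding properties of the constituents: the asymmetric Egli--Milner lifting $\vec{\Pow}$ is up-closed once $\Pow$ is ordered by inclusion, and $\S$-fold powers, binary products, and coproducts of relations all preserve up-closedness, lax preservation of relational composition, and lax preservation of identities. For (2), liftability of $\rho^{\L^2}$ amounts to the rules of \Cref{fig:comp-rules-l2} being parametrically polymorphic, which one checks operator by operator exactly as for the operator $\seqcomp$ in the proof sketch of \Cref{thm:whiletwo-cong}: the passive-operator rules $(\f 1)$ and $(\f 2)$ output a fixed writer determined by $s$ alone; rule $(\f 3)$ merely casts the $j$-th argument into a writer; the writer rules $(\barf 1)$--$(\barf 4)$ relay or consume a single behaviour observation of the receiving argument while leaving all other subterms untouched; and $([-])$, $(\ret)$, $(s.-)$ are immediate. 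In every case a relation between the arguments is transported to a relation between the resulting $\Sigma^{\star}$-terms, so $\rho^{\L^2}_X$ is a $\Rel(\Set^2)$-morphism from $\ol{\Sigma}(R\times\barB_\trc R)$ to $\barB_\trc\,\ol{\Sigma}^{\star}R$ for every $R\monoto X\times X$.

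The main obstacle is condition (3), that $(\mu\Sigma,\ini,\tilde{\gamma}^{\L^2}_\trc)$ is a lax $\rho^{\L^2}$-bialgebra. Unfolding \eqref{eq:lax-bialgebra-fo} and the definition of $\tilde{\gamma}^{\L^2}_\trc$, this says that the writer rules of \Cref{fig:comp-rules-l2} remain sound in the operational model when the strong transitions $\to$, $\xto{s}$, $\downarrow$ are replaced by the weak transitions $\xTo{1}$, $\xTo{1,s}$, $\Downarrow^1$ of \Cref{not:weaktrans-1}. The patience-style rules $(\barf 1)$, $(\barf 2)$, together with $([-])$ and $(s.-)$, lift to weak transitions by iterated or single application, exactly as for $\whiletwo$. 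The delicate point is the weak form of $(\barf 3)$ and $(\barf 4)$: that $v_j\Downarrow^1 s'$ entails $\barf(x_1,\ldots,v_j,\ldots,x_n)\xTo{1,s''}[t]_{s''}$ (resp.\ $\Downarrow^1 s''$) when $o(\f,s,s')=(t,s'')$ (resp.\ $s''$). This is precisely where coolness is used: since an active operator only inspects its receiving argument, drives it to termination, and then discards it, a weak run $v_j\xTo{1}v_j'\downarrow s'$ propagates through the $\barf$-context by the weak versions of $(\barf 1)$ and $(\barf 2)$ until $\barf(x_1,\ldots,v_j',\ldots,x_n)$ is reached, whereupon a single application of $(\barf 3)$ or $(\barf 4)$ fires. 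Without the cool restriction the rules of $\L$ could entangle the behaviours of several subterms in ways that do not survive the reader/writer split, and this step would break; with it, the verification is a straightforward induction on the length of the weak transition sequence, mirroring the $\whiletwo$ argument. Once conditions (1)--(3) are established, \Cref{thm:congruence-abstract-gsos} yields that trace equivalence is a $\ol{\Sigma}$-congruence on $(\mu\Sigma,\ini)$, which is the assertion.
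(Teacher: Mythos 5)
Your proposal is correct and follows essentially the same route as the paper's proof: the same AOS $\O_\trc$, the identification of $\barB_\trc$-similarity with trace equivalence via \Cref{prop:trc-sim}, routine verification of up-closure/lax functoriality and of liftability via parametric polymorphism of the rules, and the lax-bialgebra condition established by showing the writer rules sound for weak transitions, with coolness ($x_j$ absent from $t$) doing the work for $(\barf 3)$/$(\barf 4)$. The only cosmetic difference is that for the trace weakening the run $v_j\xTo{1}v_j'$ consists of unlabelled $\to$-steps only, so propagation through the $\barf$-context uses just rule $(\barf 1)$ (rule $(\barf 2)$ is only needed for the termination-semantics weakening $\xTo{2}$), which does not affect the argument.
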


\begin{proof}
We only need to verify that $\O_\trc$ satisfies the conditions (1)--(3) of \Cref{thm:congruence-abstract-gsos}. To simplify the notation, we put
\[ \barB=\barB_\trc,\qquad \rho=\rho^{\L^2},\qquad \gamma=\gamma^{\L^2},\qquad \tilde{\gamma}=\tilde{\gamma}^{\L^2}_\trc. \]
Moreover, we write $\To$, $\Downarrow$ for the weak transitions $\xTo{1}$, $\Downarrow^1$ from \Cref{not:weaktrans-1}.
\begin{enumerate}
\item $\barB$ is up-closed and laxly preserves composition and identities:
\begin{enumerate}
\item Up-closure of $\barB$ is immediate from up-closure of $\vec{\Pow}$.
\item $\barB$ laxly preserves composition: for every $R,S\monoto X\times X$ in $\Set^2$ we have
\begin{align*}
\barB(R\bullet S) &= \barB(R_\Trs\cdot S_\Trs, R_\Cos \bullet S_\Cos) \\
&= ((R_\Cos\bullet S_\Cos)^\S, \vec{\Pow}((R_\Cos\bullet S_\Cos)\times \Delta_\S + R_\Cos\bullet S_\Cos + \Delta_\S) ) \\
&= ((R_\Cos)^\S\bullet (S_\Cos)^\S, \vec{\Pow}((R_\Cos\times \Delta_\S + R_\Cos + \Delta_\S)\bullet (S_\Cos\times \Delta_\S + S_\Cos + \Delta_\S)  ) \\
&\supseteq ((R_\Cos)^\S\bullet (S_\Cos)^\S, \vec{\Pow}(R_\Cos\times \Delta_\S + R_\Cos + \Delta_\S)\bullet \vec{\Pow}(S_\Cos\times \Delta_\S + S_\Cos + \Delta_\S)  ) \\
&=\barB R \bullet \barB S.
\end{align*}
In the penultimate step, we use that $\vec{\Pow}$ laxly preserves composition, which is immediate.
\item $\barB$ laxly preserves identities: For every $X\in \Set^2$ we have
\begin{align*}
\barB(\Delta_X) &= \barB(\Delta_{X_\Trs}, \Delta_{X_\Cos}) \\
&= ((\Delta_{X_\Cos})^\S, \vec{\Pow}(\Delta_{X_\Cos}\times \Delta_{\S} + \Delta_{X_\Cos} + \Delta_{\S}) ) \\
&= (\Delta_{X_\Cos^\S}, \vec{\Pow}(\Delta_{X_\Cos\times \S + X_\Cos + \S} )) \\
&\supseteq (\Delta_{X_\Cos^\S}, \Delta_{\Pow(X_\Cos\times \S + X_\Cos + \S)} )\\
&= \Delta_{BX}
\end{align*}
In the penultimate step, we use that $\vec{\Pow}$ laxly preserves identity relations, which is immediate.
\end{enumerate}
\item The GSOS law $\rho$ of $\Sigma$ over $B$ is liftable, that is, for each relation $R\monoto X\times X$ in $\Set^2$ the map
\[ \rho_X\colon \ol{\Sigma}(R\times \barB R)\to \barB\ol{\Sigma}^{\star} R \]
is a relation morphism. This boils down to the observation that the rules of $\L^2$ (\Cref{fig:comp-rules-l2}) are parametrically polymorphic. We illustrate the argument for the operator $\barf\colon \Trs\times \cdots \times \Cos \times \cdots \times \Trs \to \Cos$ of $\Sigma$ corresponding to an $n$-ary active operator $\f\in \Theta$ with receiving position $j$.
Consider two elements $\barf((p_1,f_1),\ldots, (c_j,U_j),\ldots, (p_n,f_n))$ and $\barf((p_1',f_1'),\ldots, (c_j',U_j'),\ldots  (p_n',f_n'))$ of $\Sigma_\Cos(X\times BX)$ that are related in $\ol{\Sigma}(R\times \barB R)$; in particular, this means that
\begin{equation}\label{eq:liftable-1}
p_i,p_i'\in X_\Cos \text{ and } R_\Cos(p_i,p_i')\qquad \text{for $i=1,\ldots,n$, $i\neq j$},
\end{equation}
\begin{equation}\label{eq:liftable-2}
U_j,U_j'\in \Pow(X_\Cos\times \store + X_\Cos + \store)\text{ and } (U_j,U_j')\in \vec{\Pow}(R_\Cos\times \Delta_\S + R_\Cos + \Delta_\S).
\end{equation}
We need to show that
\begin{equation}\label{eq:proof-goal} (\rho_X^\Cos(\barf((p_1,f_1),\ldots, (c_j,U_j),\ldots, (p_n,f_n))),\, \rho_X^\Cos(\barf((p_1',f_1'),\ldots, (c_j',U_j'),\ldots, (p_n',f_n'))))\in (\barB \ol{\Sigma}^\star R)_\Cos. \end{equation}
Recall from definition of $\rho=\rho^{\L^2}$ (Appendix~\ref{app:gsos-laws}) that
\[
\rho^\Cos_{X} \colon \Sigma_\Cos(X\times BX)\to \Pow(\Sigmas_\Cos X\times \S + \Sigmas_\Cos X +\store)
\]
sends $\barf((p_1,f_1),\ldots, (c_j,U_j),\ldots,(p_n,f_n))$ to the set
\[\rho_X^\Cos(\barf((p_1,f_1),\ldots, (c_j,U_j),\ldots,(p_n,f_n)))\seq \Sigmas_\Cos X\times \S + \Sigmas_\Cos X +\store\]
containing the following elements:
\begin{enumerate}
\item all $\barf(p_1,\ldots,d_j,\ldots,p_n)\in \Sigmas_\Cos X$ where $d_j\in U_j$;
\item all $(\barf(p_1,\ldots,d_j,\ldots,p_n),s)\in \Sigmas_\Cos X\times \store $ where $(d_j,s)\in U_j$;
\item all $([t[p_1/x_1,\ldots,p_{j-1}/x_{j-1},p_{j+1}/x_{j+1},\ldots, p_n/x_n]]_{s''},s'')\in \Sigmas_\Cos X\times \store$ where
\[\inference{\rets{x_{j},s}{s'}}{\goes{\f(x_{1},\dots,x_{n}),s}{t,s''}} \]
is a rule of $\L$ and  $s'\in U_j$;
\item all $s''\in \S$ where
\[\inference{\rets{x_{j},s}{s'}}{\rets{\f(x_{1},\dots,x_{n}),s}{s''}} \]
is a rule of $\L$ and  $s'\in U_j$;
\end{enumerate}
Analogously for $\barf((p_1',f_1'),\ldots, (c_j',U_j'),\ldots,(p_n',f_n'))$. Now \eqref{eq:proof-goal} easily follows. For instance, if
\[(\barf(p_1,\ldots, d_j,\ldots p_n),s)\in \rho_X^\Cos(\barf((p_1,f_1),\ldots, (c_j,U_j),\ldots,(p_n,f_n))) \]
according to (b), that is, $(d_j,s)\in U_j$, then there exists $(d_j',s)\in U_j'$ with $R_\Cos(d_j,d_j')$ by \eqref{eq:liftable-2}, and so
\[(\barf(p_1',\ldots, d_j',\ldots p_n'),s)\in \rho_X^\Cos(\barf((p_1',f_1'),\ldots, (c_j',U_j'),\ldots,(p_n',f_n'))) \]
by definition of $\rho_X^\Cos$. Moreover, these two expressions are related in  $\ol{\Sigma}^\star_\Cos R \times \Delta_\S$ by \eqref{eq:liftable-1} and the definition of $\ol{\Sigma}^\star$.

Similarly, if
\[ ([t[p_1/x_1,\ldots,\ldots,p_{j-1}/x_{j-1},p_{j+1}/x_{j+1},\ldots,p_n/x_n]]_{s''},s'') \in \rho_X^\Cos(\barf((p_1,f_1),\ldots, (c_j,U_j),\ldots,(p_n,f_n))) \]
according to (c), then $s'\in U_j$ implies $s'\in U_j'$ by \eqref{eq:liftable-2}, and so
\[ ([t[p_1'/x_1,\ldots,\ldots,p_{j-1}'/x_{j-1},p_{j+1}'/x_{j+1},\ldots,p_n'/x_n]]_{s''},s'') \in \rho_X^\Cos(\barf((p_1',f_1'),\ldots, (c_j',U_j'),\ldots,(p_n',f_n')))  \]
by definition of $\rho_X^\Cos$. Moreover, these expressions are related in $\ol{\Sigma}^\star_\Cos R \times \Delta_\S$ by \eqref{eq:liftable-1} and the definition of $\ol{\Sigma}^\star$. The remaining cases are treated analogously.
\item $(\mu\Sigma,\gamma,\tilde{\gamma})$ forms a lax $\rho$-bialgebra. To show this, let us first introduce some terminology. A \emph{weak instance} of a rule $\L^2$ emerges by substituting the variables by terms of suitable sort in $\mu\Sigma$ and the transitions $\to$, $\downarrow$ for writers by corresponding weak transitions $\To$, $\Downarrow$. For example, the weak instances of the rule ($\barf$1) in \Cref{fig:comp-rules-l2} are given by
\begin{equation}\label{eq:l2-rule3-weak}
      \inference{c_j\To d_j}{\barf(p_1,\ldots,c_j,\ldots,p_n) \To \barf(p_1,\ldots, d_j,\ldots, p_n)}.
\end{equation}
where $p_i\in (\mu\Sigma)_\Trs$ for $i\neq j$ and $c_j,d_j\in (\mu\Sigma)_\Cos$. We say that a rule of $\L^2$ is \emph{sound for weak transitions} if all its weak instances are sound in the operational model $\gamma$, that is, if the premise holds then the conclusion holds. Note that every premise-free rule is trivially sound for weak transitions.

Lax commutativity of the diagram \eqref{eq:lax-bialgebra-fo} states precisely that all rules of $\L^2$ are sound for weak transitions. This property of the rules is easily verified:
\begin{itemize}
\item The rule ($\barf$1) is sound for weak transitions: every weak instance \eqref{eq:l2-rule3-weak} is sound by repeated application of ($\barf$1).
\item The rule ($\barf$2) is sound for weak transitions: every weak instance
\[       \inference{c_j\xTo{s} d_j}{\barf(p_1,\ldots,c_j,\ldots,p_n)\xTo{s} \barf(p_1,\ldots, d_j,\ldots, p_n)}  \]
is sound by repeated application of ($\barf$1) and one application of ($\barf$2).
\item The rule ($\barf$3) is sound for weak transitions: every weak instance
\[\inference{c_j\Downarrow s'}{\barf(p_1,\ldots,c_j,\ldots,p_n) \xTo{s''} [t[p_1/x_1,\ldots,p_{j-1}/x_{j-1},p_{j+1}/x_{j+1},\ldots,p_n/x_n]  ]_{s''}} \]
is sound by repeated application of ($\barf$1) and one application of ($\barf$3). (Here, it is crucial that the variable $x_j$ does not appear in $t$, as ensured by the cool format.) Similarly for the rule ($\barf$4).
\item The rule for $[-]_s$ is sound for weak transitions because the weakening does not affect the premise (as it involves a reader term), only the conclusion.
\item All remaining rules are premise-free, hence sound for weak transitions.
\end{itemize}
\end{enumerate}
This concludes the proof.
\end{proof}

\paragraph*{Cost Semantics.} We consider the AOS
\[ \O_\scn = (\Sigma,\ol{\Sigma},B,\barB_\scn,\rho^{\L^2},\gamma^{\L_2},\tilde{\gamma}^{\L^2}_\scn) \]
where $\barB_\scn$ is the relation lifting of $B$ given by \eqref{eq:barB-scn}, and $\tilde{\gamma}^{\L^2}_\scn$ is the weakening of $\gamma^{\L^2}$ given by \eqref{eq:weakening-trc}. Recall from \Cref{prop:scn-sim} that $\barB_\scn$-similarity on $(\mu\Sigma,\gamma^{\L^2},\tilde{\gamma}^{\L^2}_\scn)$ is cost equivalence. By instantiating the congruence theorem for abstract GSOS (\Cref{thm:congruence-abstract-gsos}) to $\O_\scn$, we get:

\begin{theorem}\label{thm:cong-scn-l2}
Cost equivalence is a congruence on the operational model of $\L^2$.
\end{theorem}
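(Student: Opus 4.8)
The plan is to instantiate the general compositionality theorem for abstract GSOS (\Cref{thm:congruence-abstract-gsos}) to the AOS $\O_\scn = (\Sigma,\ol{\Sigma},B,\barB_\scn,\rho^{\L^2},\gamma^{\L^2},\tilde{\gamma}^{\L^2}_\scn)$, exactly as in the proof of \Cref{thm:cong-trc-l2}. Here $B$ carries the order given by equality in the reader sort and pointwise inclusion in the writer sort, $\barB_\scn$ is the relation lifting \eqref{eq:barB-scn}, and $\tilde{\gamma}^{\L^2}_\scn = \tilde{\gamma}^{\L^2}_\trc$ is the weakening of the operational model built from the weak transitions $\xTo{1}$, $\Downarrow^1$. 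Since $\gamma^{\L^2}$ is deterministic (i.e.\ identified with a $B_0$-coalgebra), \Cref{prop:scn-sim} tells us that $\barB_\scn$-similarity on $(\mu\Sigma,\gamma^{\L^2},\tilde{\gamma}^{\L^2}_\scn)$ is cost equivalence, so it suffices to verify conditions (1)--(3) of \Cref{thm:congruence-abstract-gsos}. The crucial observation is that $\barB_\scn$ differs from $\barB_\trc$ only by replacing the first occurrence of $\Delta_\store$ in the writer sort with the coarser relation $\top_\store = \store\times\store$, while the weakening is unchanged; hence almost the entire argument from \Cref{thm:cong-trc-l2} carries over.

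For condition (1), up-closure of $\barB_\scn$ is again immediate from up-closure of $\vec{\Pow}$, and lax preservation of composition and identities runs through the same computation as in the trace case, this time invoking the elementary facts $\top_\store\bullet\top_\store = \top_\store$ and $\Delta_\store\seq\top_\store$. For condition (2), liftability of $\rho^{\L^2}$, I would go through the rules of \Cref{fig:comp-rules-l2} and read off parametric polymorphism; the only nontrivial operator is the writer operator $\barf$ attached to an active $\f\in\Theta$ with receiving position $j$, and here the cost version is if anything easier than the trace version: the target relation $\barB_\scn\,\ol{\Sigma}^{\star}R$ is \emph{larger} than $\barB_\trc\,\ol{\Sigma}^{\star}R$ (the summand $R_\Cos\times\Delta_\store$ becomes $R_\Cos\times\top_\store$), so every output pair that landed in the trace target lands a fortiori in the cost target, while the extra pairs present in the source $\ol{\Sigma}(R\times\barB_\scn R)$ are matched using the correspondingly coarser clause of $\vec{\Pow}(R_\Cos\times\top_\store + R_\Cos + \Delta_\store)$. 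For condition (3), the lax $\rho^{\L^2}$-bialgebra property of $(\mu\Sigma,\ini,\tilde{\gamma}^{\L^2}_\scn)$, there is nothing new to do: since both $\rho^{\L^2}$ and the weakening are the same as in the trace case, lax commutativity of \eqref{eq:lax-bialgebra-fo} is literally the same statement -- soundness of every rule of $\L^2$ for the weak transitions $\xTo{1}$, $\Downarrow^1$ -- which we already established, coolness of $\L$ being the key ingredient for rules ($\barf$3) and ($\barf$4).

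I do not expect a genuine obstacle here, since the argument so closely tracks the trace case. The two points that warrant a second glance are whether coarsening $\Delta_\store$ to $\top_\store$ could break lax preservation of composition by $\barB_\scn$ (it cannot, because $\top_\store$ is idempotent under relational composition) and whether it could break liftability of $\rho^{\L^2}$ (it cannot, since the target relation only grows). The one external input relied on is that $\tilde{\gamma}^{\L^2}_\scn$ is indeed a weakening with respect to $\barB_\scn$, which is \Cref{lem:weakening-scn}.
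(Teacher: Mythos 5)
Your proposal is correct and follows essentially the same route as the paper: instantiate \Cref{thm:congruence-abstract-gsos} to $\O_\scn$, reuse the verification from the trace case, observe that condition (3) is literally unchanged since $\tilde{\gamma}^{\L^2}_\scn=\tilde{\gamma}^{\L^2}_\trc$, and check that coarsening $\Delta_\store$ to $\top_\store$ does not disturb conditions (1) and (2). The paper states this more tersely, but your added detail on why the coarser lifting still laxly preserves composition/identities and remains liftable is accurate.
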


\begin{proof}
We verify that $\O_\scn$ satisfies the conditions (1)--(3) of \Cref{thm:congruence-abstract-gsos}. Since $\tilde{\gamma}^{\L^2}_\scn=\tilde{\gamma}^{\L^2}_\trc$, condition (3) has already been shown in the proof of \Cref{thm:cong-trc-l2}. The arguments for (1) and (2) are essentially identical to those in the proof of \Cref{thm:cong-trc-l2}.
\end{proof}

\paragraph*{Termination Semantics.} We consider the AOS
\[ \O_\trm = (\Sigma,\ol{\Sigma},B,\barB_\trm,\rho^{\L^2},\gamma^{\L_2},\tilde{\gamma}^{\L^2}_\trm) \]
where $\barB_\trm$ is the relation lifting of $B$ given by \eqref{eq:barB-trm}, and $\tilde{\gamma}^{\L^2}_\trm$ is the weakening of $\gamma^{\L^2}$ given by \eqref{eq:weakening-trm}. Recall that the $\barB_\trm$-similarity relation on $(\mu\Sigma,\gamma^{\L^2},\tilde{\gamma}^{\L^2}_\scn)$ is given by \emph{termination similarity} $\preceq_\trc$. By instantiating the congruence theorem for abstract GSOS (\Cref{thm:congruence-abstract-gsos}) to $\O_\trm$, we get:

\begin{theorem}\label{thm:cong-trm-l2-sim}
Termination similarity $\preceq_\trm$ is a congruence on the operational model of $\L^2$.
\end{theorem}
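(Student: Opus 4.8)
The plan is to obtain the statement as an instance of the general compositionality theorem for abstract GSOS (\Cref{thm:congruence-abstract-gsos}), applied to the AOS $\O_\trm = (\Sigma,\ol{\Sigma},B,\barB_\trm,\rho^{\L^2},\gamma^{\L^2},\tilde{\gamma}^{\L^2}_\trm)$. Since $\barB_\trm$-similarity on $(\mu\Sigma,\gamma^{\L^2},\tilde{\gamma}^{\L^2}_\trm)$ is by definition termination similarity $\preceq_\trm$, the conclusion of \Cref{thm:congruence-abstract-gsos} is precisely the desired congruence property, so it suffices to verify the three hypotheses (1)--(3) of that theorem for $\O_\trm$.

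For (1) and (2) there is nothing genuinely new to do: by \eqref{eq:barB-trm} we have $\barB_\trm = \barB_\scn$, so these are literally the computations already carried out for the cost semantics in the proof of \Cref{thm:cong-scn-l2} (which in turn reuses the proof of \Cref{thm:cong-trc-l2}). Concretely, up-closure of $\barB_\trm$ is inherited from up-closure of $\vec{\Pow}$; lax preservation of composition and identities follows since $(-)^\store$, $\times$, $+$ and $\vec{\Pow}$ all laxly preserve composition and identities and $\top_\store$ is reflexive and transitive; and liftability of $\rho^{\L^2}$ with respect to $\barB_\trm$ reduces, exactly as before, to the parametric polymorphism of the rules in \Cref{fig:comp-rules-l2}, i.e.\ to the fact that $\rho^{\L^2}_X$ carries $\ol{\Sigma}(R\times\barB_\trm R)$-related arguments to $\barB_\trm\ol{\Sigma}^{\star}R$-related results for every relation $R$.

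The only genuinely new point is hypothesis (3): that $(\mu\Sigma,\ini,\tilde{\gamma}^{\L^2}_\trm)$ is a lax $\rho^{\L^2}$-bialgebra. Unfolding \eqref{eq:lax-bialgebra-fo} for the weakening \eqref{eq:weakening-trm}, this says exactly that every rule of $\L^2$ remains sound in the operational model once the writer transitions $\to$, $\xto{s}$, $\downarrow$ are replaced by the liberal weak transitions $\xTo{2}$, $\Downarrow^2$ of \Cref{not:weak-trans-2}. The reader rules $(\f1)$--$(\f3)$ and the writer rules $(\ret)$ and $(s.-)$ are premise-free, hence trivially sound; the rule $([-])$ has only a reader premise and $\tilde{\gamma}^{\L^2}_\trm$ does not alter the reader sort, so it too is sound. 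For an active operator $\f\in\Theta_{\act,n,j}$: $(\barf1)$ is sound for $\xTo{2}$ by iterating $(\barf1)$; $(\barf2)$ is sound for $\xTo{2}$ by iterating $(\barf1)$ and $(\barf2)$, using that both a $\to$-step and an $\xto{s}$-step contribute to $\xTo{2}$; and $(\barf3)$, $(\barf4)$ are sound by iterating $(\barf1)$/$(\barf2)$ to drive the $j$-th subterm to a terminating state and then applying $(\barf3)$/$(\barf4)$ once, where one crucially uses the coolness of $\L$, namely that $x_j$ does not occur in $t$ and that $t$ and $s''$ depend only on the terminating store $s'$ of the $j$-th subterm, not on the stores emitted along the way.

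I expect hypothesis (3) to be the main obstacle, and within it the analysis of $(\barf3)$ and $(\barf4)$: unlike the weak transition $\xTo{1}$ used for trace and cost semantics, $\xTo{2}$ swallows all intermediate state outputs, so soundness of these rules hinges on the emitted continuation and the emitted store being insensitive to those intermediate stores. This is precisely what the cool format guarantees, and it is the structural reason why a cool stateful SOS specification gives rise to a lax bialgebra — and hence to a congruence — in its reader-writer extension $\L^2$.
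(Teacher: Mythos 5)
Your proposal is correct and follows essentially the same route as the paper: instantiate \Cref{thm:congruence-abstract-gsos} at $\O_\trm$, inherit conditions (1) and (2) from the cost case since $\barB_\trm=\barB_\scn$, and verify the lax bialgebra condition by checking that every rule of $\L^2$ remains sound for the weak transitions $\xTo{2}$, $\Downarrow^2$, with ($\barf$3)/($\barf$4) handled by iterating ($\barf$1) and ($\barf$2) and then applying the terminating rule once, exactly as in the paper. Your explicit appeal to coolness (that $x_j$ does not occur in $t$ and that $t,s''$ depend only on the terminating store) is the same structural point the paper highlights.
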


\begin{proof}
We verify that $\O_\trm$ satisfies the conditions (1)--(3) of \Cref{thm:congruence-abstract-gsos}. Since $\barB_\trm = \barB_\scn$, conditions (1)  and (2) have already been shown in the proof of \Cref{thm:cong-scn-l2}. For (3), we write
\[ \barB=\barB_\trm,\qquad \rho=\rho^{\L^2},\qquad \gamma=\gamma^{\L^2},\qquad \tilde{\gamma}=\tilde{\gamma}^{\L^2}_\trm, \]
and we let $\To$, $\Downarrow$ denote the weak transitions $\xTo{2}$, $\Downarrow^2$ from \Cref{not:weak-trans-2}. Again, our task is to show that all rules of $\L^2$ are sound for weak transitions. The argument is much analogous to the proof of \Cref{thm:cong-trc-l2}. For example, the rule ($\barf$3) is sound for weak transitions: every weak instance
\[\inference{c_j\Downarrow s'}{\barf(p_1,\ldots,c_j,\ldots,p_n) \xTo{s''} [t[p_1/x_1,\ldots,p_{j-1}/x_{j-1},p_{j+1}/x_{j+1},\ldots,p_n/x_n]  ]_{s''}} \]
is sound by repeated application of ($\barf$1) and ($\barf$2) followed by one application of ($\barf$3). Similar reasoning applies to the other rules.
\end{proof}

Recall from \Cref{prop:trm-sim} that $\preceq_\trm \cap \succeq_\trm$ is termination equivalence. Since congruences are closed under reversal and intersection, the above theorem entails:
\begin{theorem}\label{thm:cong-trm-l2}
Termination equivalence is a congruence on the operational model of $\L^2$.
\end{theorem}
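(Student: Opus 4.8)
The plan is to derive this theorem as a formal corollary of the already-established congruence of termination \emph{similarity} (\Cref{thm:cong-trm-l2-sim}) together with \Cref{prop:trm-sim}, which identifies termination \emph{equivalence} with $\preceq_\trm \cap \succeq_\trm$, where $\succeq_\trm$ is the converse of termination similarity $\preceq_\trm$. Concretely, I would first record two general closure properties of $\ol{\Sigma}$-congruences on the initial algebra $\mu\Sigma$ (with $\Sigma$ the two-sorted syntax functor of $\L^2$): (i) the converse $R^{\opp}$ of a $\ol{\Sigma}$-congruence $R$ is again a $\ol{\Sigma}$-congruence, and (ii) the sortwise intersection $R\cap S$ of two $\ol{\Sigma}$-congruences is again a $\ol{\Sigma}$-congruence.

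Both facts are routine and I would justify them via the universal-algebra description of congruences for the canonical relation lifting of a polynomial functor (as recalled in \Cref{sec:catback}): a $\ol{\Sigma}$-congruence $R$ on an algebra $A$ is a two-sorted relation such that for every operation $\f\colon t_1\times\cdots\times t_n\to t$ of $\Sigma$ and all $x_i,y_i\in A_{t_i}$ with $R_{t_i}(x_i,y_i)$ for $i=1,\dots,n$, one has $R_t(\f^A(x_1,\dots,x_n),\f^A(y_1,\dots,y_n))$. This condition is symmetric under swapping each $(x_i,y_i)$ with $(y_i,x_i)$, which yields (i), and it is stable under conjunction over $R$ and $S$, which yields (ii). Equivalently, diagrammatically: the canonical lifting satisfies $\ol{\Sigma}(R^{\opp})=(\ol{\Sigma}R)^{\opp}$ and $\ol{\Sigma}(R\cap S)\leq \ol{\Sigma}R\cap\ol{\Sigma}S$, so a $\Rel(\Set^2)$-morphism $\iota\colon \ol{\Sigma}R\to R$ transports to one $\ol{\Sigma}(R^{\opp})\to R^{\opp}$, and a pair of morphisms $\ol{\Sigma}R\to R$, $\ol{\Sigma}S\to S$ yields one $\ol{\Sigma}(R\cap S)\to R\cap S$.

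Assembling the pieces: \Cref{thm:cong-trm-l2-sim} gives that $\preceq_\trm$ is a $\ol{\Sigma}$-congruence on the operational model of $\L^2$; closure under converse makes $\succeq_\trm$ a $\ol{\Sigma}$-congruence; closure under intersection makes $\preceq_\trm\cap\succeq_\trm$ a $\ol{\Sigma}$-congruence; and by \Cref{prop:trm-sim} this relation is exactly termination equivalence, which finishes the proof. I do not expect a genuine obstacle here: all the substantive content—checking conditions \ref{comp-cond1}--\ref{comp-cond3} of \Cref{thm:congruence-abstract-gsos} for the AOS $\O_\trm$, with coolness of $\L$ driving the lax-bialgebra condition—has already gone into \Cref{thm:cong-trm-l2-sim}. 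The only conceptual point worth flagging is that the symmetrization must be carried out \emph{a posteriori}, on the relations $\preceq_\trm,\succeq_\trm$, rather than by attempting to use a symmetric lifting of $B$ from the start: the lifting $\barB_\trm$ needed to capture termination behaviour is inherently asymmetric (being built from the one-sided Egli-Milner lifting $\vec{\Pow}$), so one first obtains a congruent \emph{pre}order and only then intersects it with its converse to get a congruent equivalence.
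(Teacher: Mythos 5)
Your proposal is correct and follows essentially the same route as the paper: the paper derives \Cref{thm:cong-trm-l2} from \Cref{thm:cong-trm-l2-sim} by noting that congruences are closed under reversal and intersection, and then invokes \Cref{prop:trm-sim} to identify $\preceq_\trm\cap\succeq_\trm$ with termination equivalence. Your additional spelling-out of the closure properties of $\ol{\Sigma}$-congruences (and the remark on symmetrizing a posteriori rather than via a symmetric lifting) is a sound elaboration of what the paper leaves implicit.
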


\subsection*{Proof of \Cref{prop:trace-pres}}
It suffices to prove the first equality; the other two are then immediate because cost and termination semantics in both \while and \whiletwo are derived from trace semantics. The key to the proof lies in relating transitions in the operational model \eqref{eq:while-opmodel} of \while to (weak) transitions $\to$, $\xTo{1}$, $\Downarrow^1$ (\Cref{not:weaktrans-1}) in the operational model \eqref{eq:while2opmodel} of \whiletwo. We write $\To$, $\Downarrow$ for $\xTo{1}$, $\Downarrow^1$.
We will show that the following holds for all $p,p'\in \mS$ and $s,s'\in \store$:

\begin{enumerate}
\item\label{statement-1} If $p,s\to p',s'$ in $\gamma^\L$, then there exist $c_0,c_1,c_2,c_3\in (\mu\Sigma)_\Cos$ with $\gamma^{\L^2}$-transitions \[
p,s\to c_0\xTo{s'} c_1,\qquad c_1\To c_2, \qquad p',s'\to c_3\To c_2.\]
\item\label{statement-2} If $p,s\downarrow s'$ in $\gamma^\L$, then there exists $c_0\in (\mu\Sigma)_\Cos$ with $\gamma^{\L^2}$-transitions
\[p,s\to c_0, \qquad c_0\Downarrow s'.\]
\end{enumerate}
Statement \ref{statement-1} is depicted in the diagram below.  The upper arrow is an \while-transition, the remaining arrows are \whiletwo-transitions.
\[
  \begin{tikzcd}[ampersand replacement=\&]
p,s \ar{rr} \ar{dd} \&\& p',s' \ar{d} \\
\&\& c_3 \ar[Rightarrow]{d} \\
c_0 \ar[Rightarrow]{r}{s'} \& c_1 \ar[Rightarrow]{r} \& c_2
\end{tikzcd}
\]
The equality $\trc_\L = \trc^\Trs_{\L^2}$ is an easy consequence of the above statements. Indeed, suppose that $\trc_\L(p)(s)=(s_1,\ldots,s_n,s')$, that is, there exist $p_1,\ldots,p_n\in \mS$ and $\gamma^\L$-transitions
\[p,s\to  p_1,s_1 \to p_2,s_2 \to \cdots \to p_n,s_n\downarrow s'. \]
Applying \ref{statement-1} to the first $n$ transitions and \ref{statement-2} to the last one yields a situation as shown below, where the $\bullet$'s are elements of $(\mu\Sigma)_\Cos$, the transitions in the upper row are $\gamma^\L$-transitions and the remaining transitions are (weak) $\gamma_{\L^2}$-transitions.
\[
  \begin{tikzcd}[column sep=15, row sep=15,ampersand replacement=\&]
p,s \ar{rr} \ar{dd} \&\& p_1,s_1 \ar{d} \ar{rr} \&\& p_2,s_2 \ar{d} \& \cdots\quad \ar{r} \& p_n,s_n \ar{d} \&[-1.5em] \downarrow  s' \\
\&\& \bullet \ar[Rightarrow]{d} \&\& \bullet \ar[Rightarrow]{d} \&\& \bullet \ar[Rightarrow]{d}\&  \\
\bullet \ar[Rightarrow]{r}{s_1} \& \bullet \ar[Rightarrow]{r} \& \bullet \ar[Rightarrow]{r}{s_2} \& \bullet \ar[Rightarrow]{r} \& \bullet \& \quad \bullet \ar[Rightarrow]{r} \& \bullet \& \Downarrow  s'
\end{tikzcd}
\]
Thus $\trc_{\L^2}^\Trs(p)(s)=(s_1,\ldots,s_n,s')=\trc_{\L}(p)(s)$. The case where $\trc_\L(p)(s)$ is an infinite trace is treated analogously. It follows that $\trc_\L = \trc_{\L^2}^\Trs$ as required.

It remains to prove the above claims \ref{statement-1} and \ref{statement-2}. The proof is by structural induction on $p$.

\medskip\noindent \emph{Proof of \ref{statement-1}}. Let $p=\f(p_1,\ldots,p_n)$, and suppose that $p,s\to p',s'$ in $\gamma^\L$. We consider the cases of passive and active operators separately:

\medskip\noindent \underline{Case:} $\f$ is passive.\\
Let $c'\in (\mu\Sigma)_\Cos$ be the (unique) term such that $p',s'\to c'$ in $\gamma^{\L^2}$. Then we have the following $\gamma^{\L_2}$-transitions (we label transitions with the name of the rules of \Cref{fig:comp-rules-l2} that apply):
\[ p,s = \f(p_1,\ldots,p_n),s \xrightarrow[\text{($\f$1)}]{} s'.[p']_{s'} \xrightarrow[\text{($s'.-$)}]{s'} [p']_{s'},\qquad [p]_{s'} \xrightarrow[\text{($[-]$)}]{} c'. \]
Thus
\[ p,s \to s'.[p']_{s'} \To [p']_{s'},s',\qquad [p]_{s'}\To c',\qquad p',s'\to c'\To c', \]
which are transitions of the form required by (1).

\medskip\noindent \underline{Case:} $\f$ is active.\\
Let $j$ be the receiving position of $\f$. We consider two subcases:

\medskip\noindent \underline{Subcase:} The $\gamma^\L$-transition $p,s\to p',s'$ is induced by the rule
  \[  \inference{\goes{x_{j},s}{y_j,s'}}{\goes{\f(x_{1},\dots,x_j,\ldots,x_{n}),s}{\f(x_{1},\dots, y_j,\ldots x_{n})},s'},\]
in $\L$, that is,
\[  p_j,s\to p_j',s' \qand p,s =\f(p_1,\ldots,p_j,\ldots,p_n),s \to \f(p_1,\ldots,p_j',\ldots,p_n),s' = p',s'.  \]
By induction applied to the $\gamma^\L$-transition $p_j,s\to p_j',s'$, there exist $c_0,c_1,c_2,c_3\in (\mu\Sigma)_\Cos$ such that
\[ p_j,s\to c_0\xTo{s'} c_1,\qquad c_1\To c_2, \qquad p_j',s'\to c_3\To c_2.\]
Then
\begin{align*}
 p,s=\f(p_1,\ldots,p_j,\ldots,p_n),s & \xrightarrow[\text{($\f$3)}]{} \barf(p_1,\ldots,[p_j]_s,\ldots, p_n)\\ &\xrightarrow[\text{($\barf$1)},\,\text{($[-]$)}]{} \barf(p_1,\ldots,c_0,\ldots, p_n)\\
 & \xrightarrow[\text{($\barf$1),\,($\barf$2)}]{} \barf(p_1,\ldots,c_1,\ldots, p_n),s'
\end{align*}
and
\[ \barf(p_1,\ldots,c_1,\ldots, p_n)\xRightarrow[\text{($\barf$1)}]{} \barf(p_1,\ldots,c_2,\ldots, p_n)  \]
and
\begin{align*}
 p',s'=\f(p_1,\ldots,p_j',\ldots,p_n),s' & \xrightarrow[\text{($\f$3)}]{} \barf(p_1,\ldots,[p_j']_{s'},\ldots,p_n) \\ &\xrightarrow[\text{($\barf1$),\,($[-]$)}]{} \barf(p_1,\ldots,c_3,\ldots,p_n) \\ & \xRightarrow[\text{($\barf$1)}]{} \barf(p_1,\ldots,c_2,\ldots,p_n).
\end{align*}
We have thus shown that
\begin{align*}
& p,s\to \barf(p_1,\ldots,[p_j]_s,\ldots p_n) \To \barf(p_1,\ldots,c_1,\ldots, p_n),s',\\
&\barf(p_1,\ldots,c_1,\ldots p_n)\To \barf(p_1,\ldots,c_2,\ldots, p_n), \\
& p',s' \to  \barf(p_1,\ldots,[p_j']_{s'},\ldots,p_n) \To \barf(p_1,\ldots,c_2,\ldots,p_n),
\end{align*}
which are transitions of the required form.

\medskip\noindent \underline{Subcase:} The $\gamma^\L$-transition $p,s\to p',s'$ is induced by the rule
\[ \inference{\rets{x_{j},s}{s''}}{\goes{\f(x_{1},\dots,x_{n}),s}{t,s'}},\]
that is,
\[ p_j,s\downarrow s'',\quad p,s =\f(p_1,\ldots,p_j,\ldots,p_n),s \to t[p_1/x_1,\ldots,p_{j-1}/x_{j-1},p_{j+1}/x_{j+1},p_n/x_n],s'=p',s'. \]
By induction applied to the $\gamma^\L$-transition $p_j,s\downarrow s''$, there exist $c_0,c_1\in (\mu\Sigma)_\Cos$ such that
\[ p_j,s\to c_0 \To c_1 \downarrow s''. \]
Then
\begin{align*}
p,s=\f(p_1,\ldots,p_j,\ldots,p_n) & \xrightarrow[\text{($\f$3)}]{} \barf(p_1,\ldots,[p_j]_s,\ldots,p_n) \\
& \xrightarrow[\text{($\barf1$),\, ($[-]$)}]{} \barf(p_1,\ldots,c_0,\ldots,p_n) \\
& \xRightarrow[\text{($\barf$1)}]{} \barf(p_1,\ldots,c_1,\ldots,p_n) \\
& \xRightarrow[\text{($\barf$3)}]{s'} [p']_{s'}.
\end{align*}
Moreover, for $c'\in (\mu\Sigma)_\Cos$ such that $p',s'\to c'$ we have $[p']_{s'}\to c'$. We thus have shown that
\[ p,s\to \barf(p_1,\ldots,[p_j]_s,\ldots,p_n)\xTo{s'} [p']_{s'},\qquad [p']_{s'}\To c',\qquad p',s'\to c'\To c',   \]
which are transitions of the required form.

\medskip\noindent \emph{Proof of \ref{statement-2}}. Let $p=\f(p_1,\ldots,p_n)$, and suppose that $p,s\downarrow s'$ in $\gamma^\L$. Again we distinguish between passive and active operators:

\medskip\noindent \underline{Case:} $\f$ is passive.\\
Then there are $\gamma^{\L^2}$-transitions
\[p,s\to \ret_{s'}\downarrow s'\] by {\text{($\barf$2)}} and ($\ret$), respectively, which are transitions of form required by (2).

\medskip\noindent \underline{Case:} $\f$ is active.\\
Let $j$ be the receiving position of $\f$. Consider the rule
\[ \inference{\rets{x_{j},s}{s''}}{\rets{\f(x_{1},\dots,x_{n}),s}{s'}}\]
that induces the $\gamma^\L$-transition $p,s\downarrow s'$, that is,
\[ p_j,s\downarrow s'' \qqand p,s =\f(p_1,\ldots,p_j,\ldots,p_n),s \downarrow s'. \]
By induction applied to the $\gamma^\L$-transition $p_j,s\downarrow s''$, there exist $c_0,c_1\in (\mu\Sigma)_\Cos$ such that
\[ p_j,s\to c_0 \To c_1 \downarrow s''. \]
Then
\begin{align*}
p,s=\f(p_1,\ldots,p_j,\ldots,p_n),s & \xrightarrow[\text{($\f$3)}]{} \barf(p_1,\ldots,[p_j]_s,\ldots,p_n) \\
& \xrightarrow[\text{($\barf1$),\, ($[-]$)}]{} \barf(p_1,\ldots,c_0,\ldots,p_n) \\
& \xRightarrow[\text{($\barf$1)}]{} \barf(p_1,\ldots,c_1,\ldots,p_n) \\
\end{align*}
and moreover $\barf(p_1,\ldots,c_1,\ldots,p_n)\downarrow s'$ by ($\barf$4), so overall
\[ p,s \to  \barf(p_1,\ldots,[p_j]_s,\ldots,p_n) \Downarrow s', \]
which are transitions of the required form.
\qed

\subsection*{Proof of \Cref{thm:cong-l}}
We prove the statement for trace equivalence, the two other cases being completely analogous. Let $\f$ be an $n$-ary operator of $\Theta$ and let $p_i,p_i'\in \mS$ such that $\trc_\L(p_i)=\trc_\L(p_i')$ for $i=1,\ldots,n$.  Then $\trc_{\L^2}^\Trs(p_i)=\trc_{\L^2}^\Trs(p_i)$ for $i=1,\ldots,n$ by \Cref{prop:trace-pres}, and so
\[ \trc_{\L^2}^\Trs(\f(p_1,\ldots,p_n))=\trc_{\L^2}^\Trs(\f(p_1',\ldots,p_n')) \]
because trace equivalence is a congruence on the operational model of $\L^2$ (\Cref{thm:l2-comp}). Thus
\[ \trc_{\L}(\f(p_1,\ldots,p_n))=\trc_{\L}(\f(p_1',\ldots,p_n')) \]
by \Cref{prop:trace-pres} again, so trace equivalence is a congruence on the operational model of $\L$.
\qed

\subsection*{Proof of \Cref{prop:cont-eq}}
Clearly contextual equivalence is adequate (consider the empty context `$\cdot$') and a congruence. Conversely, if $R\monoto (\Tr,\Co)$ is an adequate congruence, then $R$ is contained in $\equiv^\ctx$. Indeed, if $R_\Trs(p,q)$, then for every context $C_\Trs[\cdot_\Trs]$ we have $R_\Trs(C[p],C[q])$ because $R$ is a congruence, whence $C[p],s\Downarrow \iff C[q],s\Downarrow$ for all $s\in \S(\Tr)$ because $R$ is adequate. Similar for contexts $C_\Cos[\cdot_\Trs]$. This proves $p\equiv^\ctx_\Trs q$. The case of writers is analogous.

\qed

\subsection*{Proof of \Cref{th:finalcong}}
\begin{notation}\label{not:power-relations}
 Given single-sorted relations $R\seq A\times A$ and $S\in B\times B$, we let $S^R\seq B^A\times B^A$ denote the relation on the function space given by
\[ S^R(f,g) \iff \forall a,a'\in A.\, (R(a,a')\implies S(f(a),g(a'))).   \]
Note that if also $R'\seq A\times A$ and $S'\seq B\times B$, then $S^R\bullet (S')^{R'}\seq (S\bullet S')^{R\bullet R'}$.
\end{notation}

We apply \Cref{thm:congruence-ho-abstract-gsos} to the HAOS
\[ \O=(\Sigma,\ol{\Sigma},B,\barB ,\rho,\gamma,\tilde{\gamma}) \]
where $\rho$ is the higher-order GSOS law for \reflang, the relation lifting $\barB$ of $B$ \eqref{eq:behreflang} is defined by
\[
\barB(R,S) =  ((S_\Cos+\Delta_1)^{\S(R_\Trs)},\, \vec{\Pow}((V(S_\Trs)+\Delta_1)\times \S(S_\Trs) + S_\Cos \times (\S(S_\Trs)+\Delta_1)),
\]
using \Cref{not:power-relations}. Moreover $\tilde{\gamma}\colon (\Tr,\Co)\to B((\Tr,\Co),(\Tr,\Co))$ is the weakening of $\gamma$ given by \Cref{not:weak-trans-ref}:
\[ \tilde{\gamma}^\Trs = \gamma^\Trs \qand \tilde{\gamma}^\Cos(c)= \{ d \mid c\To d \} \cup \{ (d,s) \mid c\To d,s \} \cup \{ s\mid d\Downarrow s\} \cup \{ (v,s) \mid c\Downarrow v,s \}.  \]
Note that $\barB(\Delta_\Trs,-)$-similarity on the operational model $((\Tr,\Co),\gamma,\tilde{\gamma})$ is precisely termination similarity (\Cref{def:termbisimref}). We now only need to verify the conditions \ref{ho-comp-cond1}--\ref{ho-comp-cond3} of \Cref{thm:congruence-ho-abstract-gsos}.
\begin{enumerate}
\item Up-closure of $\barB$ is immediate from up-closure of $\vec{\Pow}$.  For every $X,Y\in \Set^2$ we have
\begin{align*}
\barB(\Delta_X,\Delta_Y) &= ((\Delta_{Y_\Cos}+\Delta_1)^{\S(X_\Trs)}, \vec{\Pow}((V(\Delta_{Y_\Trs})+\Delta_1)\times \S(\Delta_{Y_\Trs}) + \Delta_{Y_\Cos} \times (\S(\Delta_{Y_\Trs})+\Delta_1))) \\
&= (\Delta_{(Y_\Cos+1)^{\S(X_\Trs)}}, \vec{\Pow}(\Delta_{(V(Y_\Trs)+1)\times \S(Y_\Trs)+Y_\Cos \times (\S(Y_\Trs)+1)}))   \\
&\supseteq (\Delta_{(Y_\Cos+1)^{\S(X_\Trs)}}, \Delta_{\Pow((V(Y_\Trs)+1)\times \S(Y_\Trs)+Y_\Cos \times (\S(Y_\Trs)+1))}))   \\
&= \Delta_{B(X,Y)}
\end{align*}
 For $R\monoto X\times X$ and $S,T\monoto Y\times Y$ in $\Set^2$ we have
\begin{align*}
&\barB(R,S\bullet T) \\
& =  (((S\bullet T)_\Cos +\Delta_1)^{\S(R_\Trs)},\, \vec{\Pow}((V((S\bullet T)_\Trs)+\Delta_1)\times \S((S\bullet T)_\Trs) + (S\bullet T)_\Cos \times (\S((S\bullet T)_\Trs)+\Delta_1)),\\
&\supseteq (S_\Cos+\Delta_1)^{\S(R_\Trs)},\, \vec{\Pow}((V(S_\Trs)+\Delta_1)\times \S(S_\Trs) + S_\Cos \times (\S(S_\Trs)+\Delta_1)) \\
&\quad \bullet ((T_\Cos+\Delta_1)^{\S(\Delta_{X_\Trs})},\, \vec{\Pow}((V(T_\Trs)+\Delta_1)\times \S(T_\Trs) + T_\Cos \times (\S(T_\Trs)+\Delta_1))\\
&= \barB(R,S)\bullet \bar(\Delta_X,T)
\end{align*}
using the property noted in \Cref{not:power-relations} in the penultimate step.
\item We have to show that for each $R\monoto X\times X$ and $Q
  \monoto Y \times Y$ in $\Set^2$ the map
  $\rho_{X,Y}\colon \ol{\Sigma}(R\times \barB(R,Q))\to
  \barB\ol{\Sigma}^{\star}(R+Q)$ is a morphism of relations. Again, this is in essence just a formal version of the observation that the rules of \reflang are parametrically polymorphic. Most of the cases
  are identical to those of \whiletwo. The more interesting ones are those of
  the allocation $\&$ and assignment $e \ass c$. For the former, the liftability
  condition on $\&$ suggests that if two writers $c$ and $c'$ return values
  values $v,v'$ (resp.) such that $V(Q)(v,v')$ and stores $s, s'$ (resp.) such
  that $\store(Q)(s,s')$, then $\store(Q)(s[l \mapsto v],s'[l \mapsto v'])$,
  which is true. For the latter, the important observation is that the
  evaluation $\oname{eev}_{X} : \store(X) \times \expr \rightharpoonup V(X)$ is
  itself liftable: evaluating the same expression in related stores produces
  related values. In particular, if $\store(Q)(s,s')$ then if
  $\oname{eev}_{\Trs}(e,s) = l$, $\oname{eev}_{\Trs}(e,s)= l$.
\item The lax bialgebra condition is once again a matter of observing that the rule of \reflang remain sound in the operational model of \reflang when strong transitions are replaced with weak ones according to \Cref{not:weak-trans-ref}. The argument is much analogous to \whiletwo in the case of termination semantics. For example, the weak instance
\[
\inference{c \Downarrow v,s \quad l \not\in \mathrm{dom}(s)}{\&c \Downarrow
      l,s[l \mapsto v]}
\]
of the third rule for $\&$ is sound because it emerges by repeated application of the first and second rule for $\&$, followed by a single application of the third one.
\end{enumerate}

\qed
}{}
\end{document}